\documentclass[11pt,reqno]{article}
\usepackage[top=2cm, bottom=2cm, left=2.4cm, right=2.4cm]{geometry}
\usepackage{dsfont,amssymb,amsmath,amscd,latexsym,amsthm,amsxtra,amsfonts}
\allowdisplaybreaks[4]
\usepackage{lineno}
\usepackage[all]{xy}
\usepackage{enumitem}
\usepackage{tikz}
\usepackage[round]{natbib}
\usepackage{mathrsfs}
\usepackage{graphicx}
\usepackage{placeins}
\usepackage{subcaption}
\usepackage{comment}
\usepackage{mathtools}
\usepackage{cases}
\usepackage{tcolorbox}
\tcbuselibrary{most}
\usepackage{xcolor}
\usepackage{bbm}

\usetikzlibrary{calc,arrows}
\usepackage{verbatim}
\usepackage{epstopdf}
\usepackage{bm}
\usepackage[title]{appendix}

\newtheorem{theorem}{Theorem}[section]
\newtheorem{assumption}[theorem]{Assumption}

\newtheorem{corollary}[theorem]{Corollary}
\newtheorem{definition}[theorem]{Definition}

\newtheorem{lemma}[theorem]{Lemma}

\newtheorem{proposition}[theorem]{Proposition}
\newtheorem{remark}[theorem]{Remark}
\numberwithin{equation}{section}

\DeclareMathOperator{\Var}{Var}
\DeclareMathOperator{\Cov}{Cov}

\newcommand{\md}{\mathrm{d}}

\newcommand{\mR}{\mathbb{R}}

\newcommand{\mE}{\mathbb{E}}

\renewcommand{\epsilon}{\varepsilon}

\newcommand{\F}{\mathcal{F}}

  \usepackage[pdfstartview=FitH, bookmarksnumbered=true,bookmarksopen=true, colorlinks=true, pdfborder={0 0 1}, citecolor=blue, linkcolor=blue,urlcolor=blue]{hyperref}
\usepackage{graphics}
\graphicspath{{figures/}}

\usepackage[utf8]{inputenc}
\usepackage{newunicodechar}
\newunicodechar{，}{,}

\title{Mean-field game of mean-variance portfolio optimization  with peer-based risk aversion}

\author{Weilun Cheng\thanks{Department of Mathematical Sciences, Tsinghua University, Beijing, China. Email:\url{chengwl25@mails.tsinghua.edu.cn}} 
\and Zongxia Liang\thanks{Department of Mathematical Sciences, Tsinghua University, Beijing, China. Email:\url{liangzongxia@tsinghua.edu.cn}}
\and Sheng Wang\thanks{Department of Statistics and Data Science, The Chinese University of Hong Kong,
		Shatin, N.T., Hong Kong SAR.
        Email:\url{sheng-wa15@tsinghua.org.cn}}
\and Xiang Yu\thanks{Department of Applied Mathematics,  The Hong Kong Polytechnic University, Kowloon, Hong Kong. Email:\url{xiang.yu@polyu.edu.hk}}
}
\date{\vspace{-0.4in}}

\begin{document}

\maketitle

\begin{abstract}
This paper investigates a class of mean-field game (MFG) for mean-variance (MV) portfolio optimization, highlighting a new type of relative performance encoded by the peer-based risk aversion. Specifically, the risk aversion is formulated as a piecewise form that depends on whether the individual's wealth is above or below the population average, leading to a time-inconsistent MFG. Our goal is to seek a mean-field equilibrium, characterized by a forward-backward stochastic differential equation (FBSDE) system and a mean-field consistency condition. 
The new challenge stems from the discontinuous coefficients induced by the piecewise risk aversion. In response, 
we first introduce a smooth regularization technique to establish the existence of a solution to the discontinuous multidimensional FBSDE; this solution then yields the existence of an intra-personal equilibrium for the representative agent.
Finally, we conclude the existence of the mean-field equilibrium in the time-inconsistent MFG by invoking fixed-point arguments and convergence analysis as the smoothing regularization vanishes.

\vspace{0.1in}
\noindent
\textbf{Keywords:} Time-inconsistent MFG, MV portfolio, peer-based risk aversion, FBSDE with discontinuous coefficients, smooth regularization, mean-field equilibrium
\end{abstract}

\section{Introduction}\label{intro}

Hedge fund agents often evaluate their performance relative to their peers. The ubiquitous competition in the hedge fund industry suggests that an agent's optimal decision-making is dynamically shaped by the agent's relative standing in the financial market. An expanding strand of literature has formalized these relative performance concerns, primarily within the expected utility (EU) framework. \cite{espinosa2015optimal} considered a finite-player model, and \cite{lacker2019mean} further investigated the problem in a mean-field game setting with a continuum of agents. Subsequently, numerous studies have explored this direction under the EU framework. To name a few, \cite{Fu2023} studied mean-field portfolio games with random parameters via the FBSDE approach; \cite*{BWY24a} extended the MFG of portfolio management in the model with jump risk; \cite{Tangpi24} examined a class of graphon MFG with relative performance concerns; \cite{Zarip24} analyzed the master equation for MFG with general utility and unbounded controlled common noise; \cite*{BWY24b} addressed the external consumption habit formation as a relative performance MFG; \cite*{BHY25} studied a MFG of portfolio management by considering the relative performance incorporated in the risk measure of expected largest shortfall; \cite*{LLY25} studied a MFG with two heterogeneous populations with Poissonian common noise.

The shift of focus on relative performance from the EU framework to the MV criterion has occurred recently. \cite*{guan2022time} pioneered the integration of an MV objective with relative performance in both $N$-agent and mean-field games of investment and reinsurance. Subsequently, \cite*{guan2025n} investigated competition among multiple fund managers with relative performance measured by the excess logarithmic return. Later, \cite*{huang2026partial} studied a MV portfolio selection game under relative performance criteria, addressing the interplay of partial information. 
More recent studies on $N$-player games and MFGs with relative performance in the MV framework can be found in \cite*{shao2025competitive}, \cite*{huang2025mean}, and \cite*{guan2026robust}.

In the aforementioned problem formulations based on EU and MV, relative performance is typically modeled by incorporating the relative wealth, such as $X_i(T)-\bar{X}(T)$, into the agent's objective functional, where $X_i(T)$ denotes the agent's individual wealth and $\bar{X}(T)$ represents the average wealth of peers. By contrast, we adopt a distinct, behaviorally motivated approach: relative performance  influences the agent's risk aversion attitude. 
This new formulation is motivated by recent behavioral economics literature demonstrating that the individual risk attitude can be dynamically shaped by social comparisons such as her relative status or position. Specifically, \cite*{muller2019decisions} discussed that an individual's risk aversion is attenuated when lagging behind peers, whereas it is significantly amplified under favorable income inequality. Complementing this, \cite*{schwerter2024social} showed through laboratory experiments that decision-makers actively reduce risk-taking when ahead of their peers to secure their favorable relative earnings and avoid falling behind. Conversely, when placed in a disadvantageous position, individuals exhibit much riskier behavior in an effort to ``catch up" with their peers.
Motivated by these insights, our model assumes that risk aversion takes a piecewise form that depends on the comparison between the agent's own wealth level and the population average performance. Specifically, the objective functional evaluated at time 
$t$ is defined by
\begin{align}\label{eq:obj_func}
J(t,\pi;m):=\frac{1}{2}\textup{Var}_{t}[X^{\pi}(T)]-\gamma(t,X^{\pi}(t);m)\mathbb{E}_{t}[X^{\pi}(T)],
\end{align}
where $X^{\pi}$ is the wealth process under the investment strategy $\pi$, and $m$ represents a population aggregation as the mean-field interaction, which is treated as an exogenously given deterministic benchmark in the representative agent's control problem. In this MV formulation, the risk aversion coefficient is represented by $1/\gamma$. To reflect the asymmetric risk aversion attitudes by referring to the population average $m(t)$ as a psychological benchmark, our peer-based risk tolerance\footnote{In the objective functional \eqref{eq:obj_func}, $\gamma$ acts as the risk tolerance. Conversely, its reciprocal $1/\gamma$ represents the effective risk aversion coefficient. To avoid ambiguity, we distinguish these two terms throughout this paper, although both conceptually model the agent's risk attitude.} is formulated by
\begin{align*}
\gamma(t, x; m) &:= \gamma_{1}\mathbbm{1}_{x > m(t)} + \gamma_{2}\mathbbm{1}_{x \le m(t)},
\end{align*}
where we assume $\gamma_2 > \gamma_1 > 0$. That is, an investor exhibits higher risk aversion (i.e., more conservative behavior) when outperforming the benchmark ($x > m(t)$) and lower risk aversion (i.e., relatively greater risk tolerance) when falling below or equaling it ($x \le m(t)$).  This stepwise formulation captures how peer effects shape risk preferences in a direct manner and is supported by some recent theoretical and empirical studies in the literature. For instance, \cite*{ahern2014peer} empirically showed that individuals dynamically adjust their risk aversion to conform to their peers' average level, justifying our use of the population average \(m(t)\). Moreover, \cite*{cuoco2011equilibrium} observed that institutional contracts induce managers to react asymmetrically to these benchmarks. Falling behind the benchmark tends to drive managers to act as variance-minimizers with ``lower effective" risk aversion (\(1/\gamma_2\)), whereas securing a lead induces conservative hedging with ``higher effective" risk aversion (\(1/\gamma_1\)).
Motivated by these studies, we adopt the above discontinuous adjustment of the risk tolerance $\gamma$ in response to the relative standing in the population.

This formulation introduces two essential features: time-inconsistency and mean-field interactions. First, for a fixed $m$, it is well known that the variance term in the objective functional \eqref{eq:obj_func} violates the iterated expectation property, which makes the dynamic optimization problem time-inconsistent. In addition, the dependence of $\gamma$ on the initial state further contributes to this time-inconsistency. To address the issue of time-inconsistency, we adopt the consistent planning (intra-personal equilibrium) approach, originally proposed by \cite{Strotz1955}, which interprets the problem as a sequential intra-personal game among the decision-maker's temporal selves. This idea was first formulated in continuous time by \cite{ekeland2010golden} (a previous version of this paper: \cite{ekeland2006being}). In the context of the MV problem, \cite{basak2010dynamic} was the first to characterize the corresponding intra-personal equilibrium. Subsequent extensions include the study of state-dependent risk aversion in \cite*{bjork2014mean}, as well as time-inconsistent linear--quadratic (LQ) control problems in \cite*{hu2012time,Hu2017}. 
{In other general continuous-time problems exhibiting time-inconsistency, }
\cite*{Bjork2017}, \cite*{yan2019time}, \cite*{he2021equilibrium}, and \cite*{CLWY25} further investigated the existence and structural properties of equilibrium controls.

Second, the dependency of the risk aversion on $m$ necessitates analyzing the equilibrium strategy in the face of competition within a large population, leading to a new type of MFG when the mean-field interaction occurs in the risk aversion coefficient. This theory was independently initiated by \cite*{lasry2007mean} and \cite*{huang2006large}, with general mathematical and probabilistic frameworks expanded by \cite*{bensoussan2013mean} and \cite*{Carmona2018}. However, time-inconsistent MFG problems remain relatively under-explored. Existing studies often focus on specific LQ or MV models where the risk aversion parameter is typically constant or a linear function of wealth, which facilitates the derivation of explicit equilibria, as demonstrated by \cite*{ni2018, jun2021, Wang2023, liang2024time}. For more general time-inconsistent MFG problems, the existence of equilibria in various settings has been investigated in recent studies; see, e.g., 
\cite*{WZ2024,BW2025, zhou2026existence, BWYZ26}.
\vskip 2pt
Contrary to previous studies in time-inconsistent control problems and MFG problems, our formulation poses a new technical challenge due to the piecewise risk aversion parameter. To the best of our knowledge, this work appears to be the first study to handle the time-inconsistent MFG with discontinuous coefficients. In both the step of characterizing the intra-personal equilibrium for a fixed population aggregation and the step of seeking the  fixed point for the mean-field consistency $m$ (i.e., $m_*(t) = \mathbb{E}[X^{\bar{\pi}_*}(t)]$ in Definition \ref{def:MFE}), we have to develop some  tailor-made techniques in handling the discontinuous coefficients. To navigate these hurdles, we first establish a verification theorem that characterizes the intra-personal equilibrium via an associated FBSDE \eqref{eq:fbsde_sufficiency}. Building upon this characterization, our methodology proceeds in two main steps:
\begin{itemize}
    \item \textbf{Step 1: Intra-personal Equilibrium for the Representative Agent.}  
    The first step is to solve the FBSDE associated with the intra-personal equilibrium for a given population aggregation. Different methods have been developed to address the FBSDE in the literature. For instance, \cite*{ma1994solving} and \cite*{2002On} established partial differential equation (PDE)-based methods such as the well-known four-step scheme, and \cite*{ma2015well} developed a unified decoupling field approach. However, these methods often require  Lipschitz continuity conditions. Alternatively, \cite*{hu1995solution} introduced the method of continuation to relax some regularity requirements, which however imposes a strict monotonicity condition that our formulation fails to satisfy. Indeed, due to the piecewise risk tolerance, the main theoretical challenge in our setting arises from the discontinuous coefficients of the FBSDE.
    To surmount this hurdle, we first employ a mollification technique to construct regularized quasi-linear parabolic PDEs and obtain their smooth solutions. We establish the compactness of these regularized solutions and subsequently extract a strong solution to the original PDE with discontinuous coefficients by deriving energy estimates. Furthermore, we leverage this strong solution with Krylov's estimates to obtain the existence of a solution to the FBSDE.
    
    \item \textbf{Step 2: Mean-Field Consistency and Equilibrium.} We then seek a fixed point for $m$ such that the expected value of the individual wealth under the intra-personal equilibrium matches the population aggregation, thereby confirming the existence of a mean-field equilibrium. While most existing studies crucially rely on continuous model coefficients (e.g., \cite*{carmona2013probabilistic, Lacker2014Mean}), the inherent discontinuity in our setting renders these standard arguments inapplicable. To overcome the technical obstacle, we first establish the consistency for the regularized system by applying Schauder's fixed-point theorem. The core of our theoretical contribution then lies in a rigorous convergence analysis to successfully pass the consistency condition to the limit model with discontinuous coefficients. Specifically, we establish the tightness of the regularized auxiliary state processes and utilize the Gy\"ongy-Krylov lemma to guarantee their  convergence in probability. 
    Passing to the limit as the smoothing regularization vanishes, we recover the mean-field consistency condition in the original MFG problem.
\end{itemize}

In summary, there are three main contributions of the present work, resolving the intertwined challenges of time-inconsistency, discontinuous coefficients, and mean-field consistency. First, from a modeling perspective, we incorporate a social comparison-based risk preference to capture peer-based risk aversion, leading to a new class of time-inconsistent MFGs.
Second, from a methodological standpoint, we develop a smooth regularization approach to solve the associated FBSDE, where discontinuities arise in the drift coefficients of both the forward and backward equations, as well as in the diffusion coefficient of the forward equation.
Third, we establish the mean-field equilibrium in the original system by first proving the existence of equilibrium for the regularized MFG and subsequently deploying a delicate probabilistic convergence analysis to verify that the limit indeed fulfills the mean-field consistency condition when the smoothing regularization vanishes.  

\vskip 3pt
The remainder of this paper is organized as follows. Section \ref{Problem Formulation} formulates the model, introduces the peer-based risk tolerance to depict the relative performance, and defines the  mean-field equilibrium. Section \ref{Equilibrium Strategy for a Given Mean Field} tackles the existence of intra-personal equilibrium for a representative agent by analyzing the associated FBSDE with discontinuous coefficients using the smoothing regularization techniques. Section \ref{Mean-field Consistency and Equilibrium} further establishes the mean-field consistency by utilizing the convergence analysis as the smoothing effect vanishes, thereby confirming the existence of mean-field equilibrium in the original time-inconsistent MFG. Some proofs of auxiliary results are collected in Appendices \ref{sec:App}, \ref{sec:AppB}, \ref{sec:AppC}, respectively.

\ \\
\textbf{Notations}: We first present some notations that will be used frequently throughout the paper. Let $T>0$ be a finite time horizon and  $\left(\Omega,\mathcal{F},\mathbb{F},\mathbb{P}\right)$  a complete filtered probability space satisfying the usual conditions. It is assumed that the space supports a standard $d$-dimensional Brownian motion $W=\{W(t):= \bigl(W_{1}(t),\cdots,W_{d}(t)\bigr)^{\top}, 0\leq t\leq T\}$, and an essentially bounded, atomless\footnote{A $\mathbb{R}$-valued random variable $\xi$ is called atomless if, for any $x\in\mR$, $\mathbb{P}(\xi=x)=0$.} and $\mathbb{R}$-valued random variable $\xi$, which is independent of $W$.
Let
$\mathbb{F}:=\left\{\mathcal{F}_{t}\right\}_{0\leq t\leq T}$ be the filtration generated by $W$ and $\xi$, augmented by all null sets. Moreover, $\F=\F_T$.

We denote by $\mathbb{R}^n$ the $n$-dimensional Euclidean space equipped with the standard inner product $\langle \cdot, \cdot \rangle$ and the Euclidean norm $|\cdot|$. For a matrix $A$, $A^{\top}$ denotes its transpose. Let $Q_T = [0,T] \times \mathbb{R}$ be the spatiotemporal domain,
and let  $\mE_t[\cdot] := \mathbb{E}[\cdot \mid \mathcal{F}_t]$ be the conditional expectation given the $\mathcal{F}_t$. Moreover, we also adopt the following notations:

\noindent$\bullet$  $L_{\mathcal{F}_{t}}^{p}(\Omega; \mathbb{R}^n)$ ($1\leq p\leq \infty$) denotes the set of all $\mathbb{R}^n$-valued, $L^p$ integrable and $\mathcal{F}_t$-measurable random variables;  $L_{\mathbb{F}}^{2}(0, T; \mathbb{R}^n)$ denotes the set of all $\mathbb{R}^n$-valued, $\mathbb{F}$-progressively measurable processes such that $\mathbb{E}[ \int_0^T |f(t)|^2 dt ] < \infty$; $L_{\mathbb{F}}^{2}(\Omega; C([0, T]; \mathbb{R}^n))$ denotes  the set of all continuous, $\mathbb{F}$-adapted processes such that $\mathbb{E}[ \sup_{t \in [0,T]} |f(t)|^2 ] < \infty$.

\noindent$\bullet$ 
$L^p$ space and Sobolev spaces: For any $N\in\mathbb{N}$, any Lebesgue measurable set $S\subset \mathbb{R}^N$, and any $1\leq p\leq \infty$,
$L^p(S)$  denotes the set of all $\mathbb{R}$-valued, $L^p$ integrable and measurable   functions on $S$; $L_{\mathrm{loc}}^p(S)$ refers to locally $p$-integrable functions, i.e., $u \in L^p(K)$ for any compact subset $K \subset S$;  $W_{p}^{2}(\mathbb{R})$  is the space of functions $u \in L^p(\mathbb{R})$ with  weak  derivatives $\frac{\partial u}{\partial x}, \frac{\partial^2 u}{\partial x^2} \in L^p(\mathbb{R})$; $W_{p}^{1,2}([0,T] \times \mathbb{R})$ ($1\leq p \le \infty$) consists of functions $u \in L^p([0,T] \times \mathbb{R})$ whose weak derivatives $\frac{\partial u}{\partial t}, \frac{\partial u}{\partial x}, \frac{\partial^2 u}{\partial x^2}$ belong to $L^p([0,T] \times \mathbb{R})$; the local spaces
$W^2_{p,\mathrm{loc}}$ and
$W_{p,\mathrm{loc}}^{1,2}([0,T] \times \mathbb{R})$ are defined analogously by requiring the corresponding Sobolev regularity on every compact subset, in the same way as $L_{\mathrm{loc}}^p(S)$.

\noindent$\bullet$  $C([0,T];\mathbb{R}^n)$ denotes continuous functions with supremum norm $\|f\|_\infty := \sup\limits_{t \in [0,T]} |f(t)|$; for $n=1$, we write $C([0,T])$. $C_c([0,T] \times \mathbb{R})$ denotes continuous functions with compact support.  $C^{1,2}([0,T] \times \mathbb{R})$ consists of functions continuously differentiable in $t$ and twice in $x$, while $C^\infty([0,T] \times \mathbb{R})$ denotes infinitely differentiable functions.

\noindent$\bullet$  H\"older space: For $\alpha\in(0,1)$,
 $C^\alpha(\mathbb{R})$ denotes bounded, globally $\alpha$-H\"older continuous functions;  $C^{2+\alpha}(\mathbb{R})$ consists of  $C^2$ functions whose derivatives up to second order are bounded and in $C^\alpha(\mathbb{R})$; $C_{\mathrm{loc}}^{\alpha/2, \alpha}([0,T] \times \mathbb{R})$ denotes functions locally $\alpha/2$-H\"older continuous in $t$ and locally $\alpha$-H\"older continuous in $x$.

\noindent$\bullet$ Banach-space-valued functions: For a Banach space $X$, $B([0,T]; X)$ denotes the space of measurable and bounded functions equipped with $\|f\|_{B} := \sup\limits_{t \in [0,T]} \|f(t)\|_X$. $L^p(0,T; X)$ ($1\leq p \le \infty$) denotes the space of measurable functions $f: [0,T] \to X$ such that $\|f(\cdot)\|_X \in L^p([0,T])$.

\section{Market Model and Problem Formulation}\label{Problem Formulation}

We consider a large population of agents indexed by $k\in\mathbb{N}$. On a product extension of the probability space, let $\{(\xi^k,W^k)\}_{k\geq 1}$ be independent copies of $(\xi,W)$,  then the agents' $d$-dimensional Brownian motions are mutually independent and there is no common noise. It is assumed that each agent has access to an individual copy of the investment opportunity set, consisting of one risk-free asset (bank account) and $d$ risky assets (stocks), with identical deterministic market coefficients across agents. The price of the risk-free bond $S_{0}(\cdot)$ satisfies $\md S_{0}(s) = r(s)S_{0}(s) \md s$ with $S_{0}(0) = s_{0} > 0$, where $r(\cdot) > 0$ is the deterministic risk-free return rate. It is assumed that $r$ is continuous on $[0,T]$ and bounded by $r_{\max}$. The price process $S_{i}^k$, $i=1,\cdots,d$, available to agent $k$, follows the dynamics
\begin{align*}
		 \md S_{i}^k(t)=S_{i}^k(t)\left[\mu_{i}(t)\md t+\sigma_{i}(t) \md W^k(t)\right],\quad t\in[0,T],\ \ i=1,\cdots,d.
\end{align*}
where the market coefficients $\mu:[0,T]\rightarrow\mathbb{R}^{d}$ and $\sigma:[0,T]\rightarrow\mathbb{R}^{d\times d}$ are continuous and deterministic;  $\sigma_i$ denotes the $i$-th row of $\sigma$. Moreover, there are two positive constants $\sigma_{\min}$ and $\sigma_{\max}$ such that 
\begin{equation}\label{c}
	\sigma_{\min}\lVert\alpha\rVert^{2}\leq\lVert\sigma^{\top}(t)\alpha\rVert^{2}\leq \sigma_{\max}\lVert\alpha\rVert^{2}\quad \text{for all }\alpha\in\mathbb{R}^{d}\,\,\text{and}\,\,t\in[0,T].
\end{equation}
Let $\theta(\cdot):=\mu(\cdot)-r(\cdot)\mathbf{1}_d$ be the \emph{risk premium} and  $\lambda(\cdot) := (\sigma(\cdot))^{-1}\theta(\cdot)$ be the \emph{market price of risk},  where $\mathbf{1}_d=(1,\ldots,1)^\top\in\mathbb{R}^d$ is the $d$-dimensional vector of ones.

We work directly in the mean-field model. Let $(\xi,W)$ denote the initial wealth and idiosyncratic Brownian motion of a representative agent, and let $\pi(t)$ be the dollar amount that this agent invests in the risky assets at time $t$.
The wealth process $X^{\pi}(\cdot)$ of the representative agent evolves as
\begin{equation}\label{wealthdynamic}
	\left\{
	\begin{aligned}
		&dX^{\pi}(s)=r(s)X^{\pi}(s)\md s+\pi^{\top}(s)\theta(s)\md s+\pi^{\top}(s)\sigma(s)\md W(s),\\	
		&X^{\pi}(0)=\xi.
	\end{aligned}
	\right.
\end{equation}
It is straightforward to verify that for any $\pi(\cdot) \in L_{\mathbb{F}}^{2}(0, T; \mathbb{R}^d)$, \eqref{wealthdynamic} admits a unique strong solution satisfying $X^{\pi}(\cdot) \in L_{\mathbb{F}}^{2}(\Omega; C([0, T];\mathbb{R}))$.

\begin{remark}
In the mean field model, assuming the initial state as an atomless random variable $\xi$ naturally captures the heterogeneity of wealth distribution across this population. Furthermore, the assumption that $\xi$ is essentially bounded is by no means restrictive, which reflects the real-life situation that the initial wealth of agents must be finite.
\end{remark}

For a given deterministic continuous function $m \in C([0,T];\mR)$, let us define the relative state-dependent risk tolerance function as
\begin{align*}
\gamma(t, x; m) &:= \gamma_{1}\mathbbm{1}_{x > m(t)} + \gamma_{2}\mathbbm{1}_{x\leq m(t)},
\end{align*}
where $\gamma_2 > \gamma_1 > 0$ are constants. The objective functional of the representative agent evaluated at time $t$ is defined through the MV criterion with peer-based  risk aversion as
\begin{align*}
J(t,\pi;m):=\frac{1}{2}\textup{Var}_{t}[X^{\pi}(T)]-\gamma(t,X^{\pi}(t);m)\mathbb{E}_{t}[X^{\pi}(T)].
\end{align*}
Note that the effective risk aversion is given by $1/\gamma(t,x;m)$. The asymmetric specification captures the behavioral phenomenon: each agent exhibits lower effective risk aversion ($1/\gamma_2$) when underperforming their peers ($x \le m(t)$), while becoming more risk-averse ($1/\gamma_1$) when securing a leading position ($x > m(t)$) in the population. 

In contrast to existing relative-performance MFG, our mean-field interaction enters the state-dependent risk tolerance term. That is, instead of altering the expected return or variance, peer-based relative performance directly shapes each agent's risk aversion attitude. Interestingly, due to the discontinuity of $\gamma$ and the inherent time-inconsistency in MV criterion, we encounter a class of time-inconsistent MFG problems with discontinuous coefficients, new to the literature. Let us first give the following definition of the open-loop mean-field equilibrium.

\begin{definition}[Mean-field Equilibrium]\label{def:MFE}
A pair $(\bar{\pi}_{*},m_{*})$, where $\bar{\pi}_{*}=\bar{\pi}(\cdot;m_{*})\in L_{\mathbb{F}}^{2}(0,T;\mathbb{R}^{d})$ and $m_* \in C([0,T])$, is called an open-loop mean-field equilibrium if the following conditions are satisfied:

\begin{enumerate}
    \item[(a)] (Equilibrium strategy in the intra-personal game by the representative agent)
    For any $t \in [0, T)$ and $\eta \in L_{\mathcal{F}_{t}}^{2}(\Omega; \mathbb{R}^d)$, it holds that
    \begin{align}\label{eq:def:intra}
        \liminf_{\epsilon \downarrow 0} \frac{J(t,\pi^{t,\epsilon,\eta}(\cdot;m_{*});m_{*})-J(t,\bar{\pi}(\cdot;m_{*});m_{*})}{\epsilon} &\ge 0, \quad \mathbb{P}\text{-a.s.},
    \end{align}
    where
    \begin{align*}
        \pi^{t,\epsilon,\eta}(s;m_{*}):=\bar{\pi}(s;m_{*})+\eta \mathbbm{1}_{[t,t+\epsilon)}(s), \quad s \in [0, T].
    \end{align*}

    \item[(b)] (Consistency condition on population aggregation) $m_{*}(t)=\mathbb{E}[X^{\bar{\pi}_{*}}(t)]$ for all $ t \in [0, T]$.
\end{enumerate}
\end{definition}

Accordingly, our method to look for the mean-field equilibrium in Definition~\ref{def:MFE} is split into two main steps:

\medskip

\noindent
\textbf{Step 1 (Intra-personal equilibrium  for the representative agent).}
For a given deterministic function $m \in C([0,T])$ modeling the population aggregation, 
we focus on a representative agent and shall call  
$\bar{\pi}(\cdot; m) \in L_{\mathbb{F}}^{2}(0, T; \mathbb{R}^d)$ described by \eqref{eq:def:intra} in condition (a) of Definition~\ref{def:MFE} \textit{an intra-personal equilibrium}. The goal of this step is to determine such intra-personal equilibrium together with the associated state process $X^{\bar{\pi}(\cdot; m)}$.

\medskip

\noindent
\textbf{Step 2 (Mean-field consistency).}
We then tackle the MFG problem and seek a function $m_* \in C([0,T])$ as the fixed point such that the mean-field consistency condition (b) in Definition~\ref{def:MFE} holds:
\begin{align*}
    m_*(t)=\mathbb{E}\left[X^{\bar{\pi}(\cdot; m_*)}(t)\right]\quad \text{for all } t\in[0,T].
\end{align*}

\section{Characterization of Intra-personal Equilibrium}\label{Equilibrium Strategy for a Given Mean Field}
In this section, we fix a deterministic population aggregation $m \in C([0,T])$ and first establish the existence and characterization of an intra-personal equilibrium $\bar{\pi}(\cdot; m) \in L_{\mathbb{F}}^{2}(0, T; \mathbb{R}^d)$  for the representative agent under game-theoretic thinking with the agent's future selves. 

\subsection{FBSDE characterization of the Intra-personal Equilibrium}
This subsection first provides an equivalent FBSDE characterization of an intra-personal equilibrium for the representative agent. For notational simplicity, we suppress the dependence on $m$ whenever no confusion arises; for instance, we write $\gamma(t, X(t); m)$ as $\gamma(t, X(t))$, and $\overline{\pi}(\cdot; m)$ as $\overline{\pi}(\cdot)$. Furthermore, in the sequel, we denote $X^{\bar{\pi}}$ simply by $X$.

The following theorem provides a necessary condition for a strategy to be an intra-personal equilibrium. Its proof  is delegated to Appendix \ref{Proof of Necessary}.

\begin{theorem}\label{necessary}
Suppose that $\bar{\pi}\in L_{\mathbb{F}}^{2}(0, T; \mathbb{R}^d)$ is an intra-personal  equilibrium. Let the adjoint process $(P,Q)\in L_{\mathbb{F}}^{2}(\Omega; C([0, T];\mathbb{R})) \times L_{\mathbb{F}}^{2}(0, T; \mathbb{R}^d)$ satisfy the following backward stochastic differential equation (BSDE):
\begin{align}\label{eq:bsde}
\begin{cases}
\md P(t) = -\left( \gamma(t, X(t)) |\lambda(t)|^2 - \lambda^{\top}(t)Q(t) \right) \md t + Q^{\top}(t) \md W(t), \\
P(T) = 0.
\end{cases}
\end{align}
Then we have the characterization that, $\md t \otimes \md \mathbb{P}$-a.e. on $[0, T] \times \Omega$,
\begin{align}\label{eq:barU}
\bar{\pi}(t) = -\frac{1}{P_0(t)}(\sigma^\top(t))^{-1}\left[ Q(t) - \gamma(t, X(t))\lambda(t) \right],
\end{align}
where $P_0(t):=\exp\{\int_t^Tr(s)\md s\}$.
\end{theorem}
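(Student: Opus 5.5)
This is a spike-variation argument in the spirit of Hu–Jin–Zhou for time-inconsistent LQ problems. The one new feature is that the weight $\gamma(t,X(t))$ is evaluated at the current state and is $\mathcal F_t$-measurable, so it behaves as a frozen constant in the time-$t$ problem. Fix $t$, $\eta\in L^2_{\mathcal F_t}$ and $\epsilon>0$. Let $X^\epsilon$ be the wealth under $\pi^{t,\epsilon,\eta}$. By linearity,
\[
X^\epsilon(s)-X(s)=Y^\epsilon(s):=\int_t^{s\wedge(t+\epsilon)} e^{\int_u^s r}\,\eta^\top\bigl(\theta(u)\,\md u+\sigma(u)\,\md W(u)\bigr),
\]
and $X^\epsilon(t)=X(t)$. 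The state at time $t$ is unchanged, so the risk tolerance $\gamma_t:=\gamma(t,X(t))$ is the same for both strategies; the discontinuity of $\gamma$ therefore never has to be differentiated.

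Expanding the conditional variance gives
\[
J(t,\pi^{t,\epsilon,\eta})-J(t,\bar\pi)=\mE_t\bigl[(X(T)-\mE_tX(T))\,Y^\epsilon(T)\bigr]+\tfrac12\Var_t[Y^\epsilon(T)]-\gamma_t\,\mE_t[Y^\epsilon(T)].
\]
The variance term is $O(\epsilon)$, and I will track it at the end.

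To linearize the first and last terms, I introduce the adjoint process $p(s):=P_0(s)^{-1}\mE_s[\,\cdot\,]$-type representation. Concretely, I define $\tilde P(s)=X(T)-\mE_s[X(T)]$-related quantities via the BSDE
\[
\md\,\tilde p=-r\tilde p\,\md s+\tilde q^\top \md W,\qquad \tilde p(T)=X(T)-\gamma_t,
\]
with $(\tilde p,\tilde q)$ depending on $t$ through $\mE_t X(T)$ and $\gamma_t$. Itô's formula on $\tilde p\,Y^\epsilon$ then gives
\[
\mE_t\bigl[(X(T)-\mE_tX(T)-\gamma_t)\,Y^\epsilon(T)\bigr]=\mE_t\int_t^{t+\epsilon}\eta^\top\bigl(\theta\,\tilde p+\sigma\tilde q\bigr)\,\md s .
\]

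The next step is to decompose $(\tilde p,\tilde q)$ into a $t$-independent part plus a deterministic shift. Write $\tilde p(s)=P_0(s)\bigl(X(s)\cdot\text{stuff}\bigr)$; the natural ansatz is
\[
\tilde p(s)=P_0(s)X(s)-P_0(s)\mE_t[\cdots]-P_0(s)^{-1}(\ldots).
\]
Equivalently, I use the BSDE \eqref{eq:bsde} for $(P,Q)$, together with Itô applied to $P_0X$, and compute $X(T)-\mE_tX(T)$ along $[t,T]$. I expect to find $\tilde q(s)=P_0(s)\bar\pi^\top\sigma$ plus a term involving $Q$ and $\gamma$, where in $(P,Q)$ the running $\gamma(s,X(s))$ appears rather than $\gamma_t$. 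Matching the $t$-dependent part, which comes from frozen quantities of the form $\mE_t[\cdot]$ and $\gamma_t$, contributes only a deterministic-in-$s$, $\mathcal F_t$-measurable term multiplying $\theta$.

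Dividing by $\epsilon$, the Lebesgue differentiation theorem for conditional expectations (as in Hu–Jin–Zhou, Lemma 3.3) yields a limit of the form $\eta^\top\Lambda(t)$ a.s. for a.e. $t$. Here
\[
\Lambda(t)=P_0(t)\bigl[\sigma\sigma^\top\bar\pi(t)P_0(t)+\sigma\bigl(Q(t)-\gamma(t,X(t))\lambda(t)\bigr)\bigr]
\]
after the algebra, using $\theta=\sigma\lambda$. The quadratic term satisfies $\Var_t[Y^\epsilon(T)]/\epsilon\to e^{2\int_t^T r}\,|\sigma^\top\eta|^2\ge 0$. Hence condition \eqref{eq:def:intra} reads $\eta^\top\Lambda(t)+\tfrac12 P_0(t)^2|\sigma^\top\eta|^2\ge0$ for all $\eta\in L^2_{\mathcal F_t}$. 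Replacing $\eta$ by $\delta\eta$ and sending $\delta\to0^\pm$ forces $\Lambda(t)=0$. Invertibility of $\sigma$, from \eqref{c}, then gives \eqref{eq:barU}.

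I expect the main obstacle to be the bookkeeping in the adjoint decomposition: showing that the first-order term is exactly $P_0(t)\sigma[\bar\pi P_0-\ldots]$ with the given $(P,Q)$, rather than an object depending on $t$. The point to verify carefully is that $\gamma(s,X(s))$ enters through the conditional expectation of future selves. In particular, $P(s)$ must coincide with the "future-$\gamma$ correction" of $\mE_s X(T)$, which I would confirm by applying Itô to $P_0(s)X(s)+P(s)$ along the equilibrium. A secondary technical issue is the a.e.-$t$ Lebesgue point argument, which holds only for a.e. $t$ and requires the $L^2$ integrability of $Q$ and $\bar\pi$.
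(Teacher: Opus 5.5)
Your expansion of $J(t,\pi^{t,\epsilon,\eta})-J(t,\bar\pi)$, your treatment of the variance term and your final scaling argument are all fine. The gap is the step that produces $\Lambda(t)$ in terms of the given $(P,Q)$: as written it is unjustified and circular.

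Your adjoint equation can be solved explicitly. With terminal value $X(T)-\mE_t[X(T)]-\gamma_t$ (not $X(T)-\gamma_t$), one gets $\tilde p(s)=P_0(s)\bigl(\mE_s[X(T)]-\mE_t[X(T)]-\gamma_t\bigr)$ and $\tilde q(s)=P_0(s)Z(s)$. Here $Z$ is the martingale-representation integrand of $X(T)$. The increment $\mE_s[X(T)]-\mE_t[X(T)]$ has zero $\mathcal F_t$-conditional mean, so the true first-order coefficient is $P_0(t)\bigl[\sigma(t)Z(t)-\gamma_t\theta(t)\bigr]$.

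This equals your $\Lambda(t)$ only if $Z=P_0\sigma^\top\bar\pi+Q$. That is, $(\tilde P,\tilde Q):=(\mE_\cdot[X(T)]-P_0X,\;Z-P_0\sigma^\top\bar\pi)$ would have to coincide with $(P,Q)$. But It\^o's formula gives $\md\tilde P=-\lambda^\top(Z-\tilde Q)\,\md s+\tilde Q^\top\md W$. This is the BSDE \eqref{eq:bsde} only when $\lambda^\top Z=\gamma(s,X(s))|\lambda|^2$. In general, $(\tilde P-P,\tilde Q-Q)$ solves a linear BSDE with source $\lambda^\top(Z-\gamma\lambda)$, and $\tilde Q\neq Q$.

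The check you propose does not help. It\^o on $P_0X+P$ gives the drift $\lambda^\top\bigl(P_0\sigma^\top\bar\pi+Q-\gamma\lambda\bigr)$, which vanishes only if (the $\lambda$-component of) \eqref{eq:barU} already holds. So you would be using the conclusion to identify the coefficient from which you then derive the conclusion.

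The missing idea is to reverse the order. Use the equilibrium property first, written in terms of $Z$, and bring in $(P,Q)$ only afterwards via BSDE uniqueness. Start from $\eta^\top P_0(t)\bigl[\sigma(t) Z(t)-\gamma_t\theta(t)\bigr]+\tfrac12P_0(t)^2|\sigma^\top(t)\eta|^2\ge0$ for all $\eta$. Your scaling argument and the invertibility of $\sigma$ then give $Z(t)=\gamma(t,X(t))\lambda(t)$ for a.e. $t$.

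Substituting this into the drift of $\tilde P$ turns it into $-\bigl(\gamma|\lambda|^2-\lambda^\top\tilde Q\bigr)$, with $\tilde P(T)=0$. Uniqueness for \eqref{eq:bsde} then yields $(\tilde P,\tilde Q)=(P,Q)$. Hence $Q=\gamma\lambda-P_0\sigma^\top\bar\pi$, which is exactly \eqref{eq:barU}.

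This is the paper's route. It needs no \emph{$t$-independent part plus shift} decomposition of the adjoint. The $\gamma_t$ contribution is simply $-\gamma_t\eta^\top\int_t^{t+\epsilon}P_0\theta\,\md s$, and the rest is $\eta^\top\mE_t\int_t^{t+\epsilon}P_0\sigma Z\,\md s$.
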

\vskip 2pt \noindent
According to Theorem \ref{necessary}, if $\bar{\pi} \in L_{\mathbb{F}}^{2}(0, T; \mathbb{R}^d)$ is an intra-personal equilibrium strategy,  there exists an adjoint process $(P,Q) \in L_{\mathbb{F}}^{2}(\Omega; C([0, T];\mathbb{R}))\times L_{\mathbb{F}}^{2}(0, T; \mathbb{R}^d)$ such that $(X, P, Q)$ satisfies the FBSDE system:
\begin{align}\label{eq:fbsde_sufficiency}
\begin{cases}
\md X(t) = \left( r(t)X(t) - \frac{1}{P_0(t)} \left[ Q(t) - \gamma(t, X(t))\lambda(t) \right]^{\top} \lambda(t) \right) \md t \\
\qquad \quad\quad\quad - \frac{1}{P_0(t)} \left[ Q(t) - \gamma(t, X(t))\lambda(t) \right]^{\top} \md W(t), \\
\md P(t) = -\left( \gamma(t, X(t)) |\lambda(t)|^2 - \lambda^{\top}(t)Q(t) \right) \md t + Q^{\top}(t) \md W(t), \\
X(0) = \xi, \quad P(T) = 0.
\end{cases}
\end{align}
The next result claims that if $X$ is atomless\footnote{The stochastic process $X$ is called atomless if, for any $x\in\mR$ and $t\in[0,T)$, $\mathbb{P}(X(t)=x)=0$.}, the converse also holds, and its  proof is given in Appendix \ref{Proof of Sufficiency}.

\begin{theorem}\label{sufficiency}
Suppose that FBSDE \eqref{eq:fbsde_sufficiency} admits a solution $(X, P, Q) \in L_{\mathbb{F}}^{2}(\Omega; C([0, T];\mathbb{R}))\times L_{\mathbb{F}}^{2}(\Omega; C([0, T];\mathbb{R}))\times L_{\mathbb{F}}^{2}(0, T; \mathbb{R}^d)$. 
{Assume further that $X$ is atomless.}
 Let $\bar{\pi}(\cdot)$ be defined by \eqref{eq:barU}.
Then $\bar{\pi}(\cdot)$ is an intra-personal equilibrium.
\end{theorem}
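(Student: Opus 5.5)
We prove Theorem \ref{sufficiency} by a spike-variation computation in the spirit of \cite{hu2012time,Hu2017}, exploiting that the state equation is linear in $\pi$ and the cost is quadratic, so the perturbation expansion is exact. Fix $t\in[0,T)$, $\eta\in L^2_{\mathcal F_t}(\Omega;\mathbb R^d)$, $\epsilon>0$, and let $X^{\epsilon}$ denote the wealth under $\pi^{t,\epsilon,\eta}$. By linearity, $X^\epsilon(s)=X(s)+Y^{\epsilon}(s)$ with
\[
\md Y^\epsilon(s)=\bigl(r(s)Y^\epsilon(s)+\eta^\top\theta(s)\mathbbm{1}_{[t,t+\epsilon)}(s)\bigr)\md s+\eta^\top\sigma(s)\mathbbm{1}_{[t,t+\epsilon)}(s)\md W(s),\qquad Y^\epsilon(t)=0 .
\]
Since the perturbation starts at time $t$, $X^\epsilon(t)=X(t)$. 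Hence the risk tolerance $\gamma(t,X(t))$ appearing in $J(t,\cdot)$ is the same $\mathcal F_t$-measurable random variable for both strategies. This is the key simplification: the discontinuity of $\gamma$ in the state does not enter the difference at the evaluation time $t$. Then
\[
J(t,\pi^{t,\epsilon,\eta})-J(t,\bar\pi)=\mE_t\bigl[(X(T)-\mE_t X(T))Y^\epsilon(T)\bigr]+\tfrac12\Var_t[Y^\epsilon(T)]-\gamma(t,X(t))\mE_t[Y^\epsilon(T)].
\]
Standard moment estimates give $\Var_t[Y^\epsilon(T)]\le \mE_t|Y^\epsilon(T)|^2=O(\epsilon)|\eta|^2$, but this is not $o(\epsilon)$. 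I would therefore keep this term, since it is nonnegative and only helps the $\liminf$. It thus suffices to show
\[
\lim_{\epsilon\downarrow0}\frac1\epsilon\Bigl(\mE_t\bigl[(X(T)-\mE_tX(T))Y^\epsilon(T)\bigr]-\gamma(t,X(t))\mE_t[Y^\epsilon(T)]\Bigr)=0\quad \text{a.s.}
\]

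The first-order term is identified through the adjoint process. Consider $\widetilde P(s):=\mE_s[X(T)]$ for $s\ge t$. The term $\mE_t[(X(T)-\mE_tX(T))Y^\epsilon(T)]$ equals $\mE_t[X(T)Y^\epsilon(T)]-\mE_t[X(T)]\,\mE_t[Y^\epsilon(T)]$.

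\textbf{Computing $\mE_t[Y^\epsilon(T)]$.} Because $Y^\epsilon$ solves a linear SDE with deterministic coefficients,
\[
\mE_t[Y^\epsilon(T)]=\eta^\top\int_t^{t+\epsilon}P_0(s)\theta(s)\md s .
\]

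\textbf{Computing $\mE_t[X(T)Y^\epsilon(T)]$.} Write the martingale representation $\md \mE_s[X(T)]=Z(s)^\top\md W(s)$ and apply It\^o's formula to $P_0(s)^{-1}Y^\epsilon(s)\,\mE_s[X(T)]$ on $[t,T]$. This gives
\[
\mE_t\bigl[X(T)Y^\epsilon(T)\bigr]=\mE_t\int_t^{t+\epsilon}P_0(s)\bigl(\mE_s[X(T)]\,\eta^\top\theta(s)+\eta^\top\sigma(s)Z(s)\bigr)\md s .
\]

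\textbf{Identifying $Z$ from the FBSDE.} The process $Z$ is read off from \eqref{eq:fbsde_sufficiency}. Write the forward diffusion as $\sigma^\top\bar\pi=-P_0^{-1}(Q-\gamma\lambda)$ and verify that
\[
\mE_s[X(T)]=P_0(s)X(s)+\text{(terms producing } P)\!.
\]
More precisely, I expect the identity
\[
\mE_s[X(T)]-\mE_t[X(T)]=P_0(s)X(s)-P_0(t)X(t)-\bigl(P(s)-P(t)\bigr),
\]
so that $Z=P_0\sigma^\top\bar\pi-Q=-(Q-\gamma\lambda)-Q$. This should be checked by differentiating $P_0X-P$ and comparing drifts. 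Combining the pieces, the integrand of the first-order term becomes
\[
P_0(s)\,\eta^\top\sigma(s)\Bigl[\bigl(\mE_s X(T)-\mE_tX(T)\bigr)\lambda(s)+Z(s)-\gamma(t,X(t))\lambda(s)\Bigr],
\]
and by \eqref{eq:barU} and the FBSDE relations this should reduce to a multiple of $\gamma(s,X(s))-\gamma(t,X(t))$ plus terms that are continuous in $s$ and vanish at $s=t$.

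\textbf{Main obstacle: the Lebesgue-point limit.} The hardest step is passing to the limit
\[
\frac1\epsilon\,\mE_t\int_t^{t+\epsilon}\Phi(s)\md s\to \Phi(t)\quad\text{a.s. for \emph{every} } t,
\]
when $\Phi$ contains $\gamma(s,X(s))$, which is discontinuous. Here the atomless assumption is used. Since $X$ is continuous and $\mathbb P(X(t)=m(t))=0$, almost surely $X(t)\ne m(t)$. By continuity of $X$ and $m$, $\gamma(s,X(s))=\gamma(t,X(t))$ for all $s$ in a random neighborhood of $t$. Hence
\[
\frac1\epsilon\int_t^{t+\epsilon}\bigl|\gamma(s,X(s))-\gamma(t,X(t))\bigr|\md s\to0\quad\text{a.s.},
\]
and, being bounded by $\gamma_2-\gamma_1$, it also converges to $0$ under $\mE_t$ by conditional dominated convergence.

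The $Q$-terms are only $L^2$ in time, so the a.s. limit for every $t$ cannot be obtained by a Lebesgue-differentiation argument. Instead, I will aim to eliminate $Q$ algebraically: with $\bar\pi$ given by \eqref{eq:barU}, the $Q$ contributions in $Z$ and in $\sigma^\top\bar\pi$ should cancel exactly. What remains is $\eta^\top\lambda(s)\bigl(\gamma(s,X(s))-\gamma(t,X(t))\bigr)$ together with continuous terms. If exact cancellation fails, the fallback is to use the continuity of $\mE_s[X(T)]$ in $s$.
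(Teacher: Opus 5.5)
There is a genuine error at the step everything hinges on: the identification of $Z$ has the wrong sign, and as written the argument fails there. Your overall route is the paper's: expand $J$ exactly under the spike, represent $\mE_s[X(T)]$ as a martingale, read off its integrand $Z$ from the FBSDE so that $Q$ drops out, and use atomlessness for the jump of $\gamma$.

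The martingale is $P_0X+P$, not $P_0X-P$. From \eqref{eq:fbsde_sufficiency},
\begin{align*}
\md\bigl(P_0X\bigr)(s)&=-\bigl[Q(s)-\gamma(s,X(s))\lambda(s)\bigr]^\top\bigl(\lambda(s)\md s+\md W(s)\bigr),\\
\md P(s)&=\bigl(\lambda^\top(s)Q(s)-\gamma(s,X(s))|\lambda(s)|^2\bigr)\md s+Q^\top(s)\md W(s),
\end{align*}
so $\md(P_0X+P)(s)=\gamma(s,X(s))\lambda(s)^\top\md W(s)$. Since $P(T)=0$ and $P_0(T)=1$, this gives $\mE_s[X(T)]=P_0(s)X(s)+P(s)$ and $Z=Q+P_0\sigma^\top\bar\pi=\gamma(\cdot,X(\cdot))\lambda$.

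With your sign, $P_0X-P$ has drift $2(\gamma|\lambda|^2-\lambda^\top Q)$, so it is not a martingale. Your $Z=-(Q-\gamma\lambda)-Q=\gamma\lambda-2Q$ would leave the term $-\frac{2}{\epsilon}\eta^\top\mE_t\int_t^{t+\epsilon}P_0(s)\sigma(s)Q(s)\md s$ in the first-order expansion. That term does not vanish in general, and neither of your fallbacks removes it. $Q$ is only $L^2$ in time, as you note yourself, and continuity of $\mE_s[X(T)]$ has nothing to do with a $Q$-term. Your claim that the $Q$-contributions ``cancel exactly'' also contradicts your own displayed $Z$; the cancellation holds only with the corrected sign.

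Once this is fixed, the rest goes through. The extra term $(\mE_sX(T)-\mE_tX(T))\lambda(s)$ you carry is killed by $\mE_t$, by the martingale property and because $\eta$ is $\mathcal F_t$-measurable, so it needs no continuity argument. The first-order term is then exactly $\eta^\top\mE_t\int_t^{t+\epsilon}P_0(s)\theta(s)\bigl(\gamma(s,X(s))-\gamma(t,X(t))\bigr)\md s$, as in the paper. A smaller slip: the It\^o product should use $P_0(s)Y^\epsilon(s)$, not $P_0(s)^{-1}Y^\epsilon(s)$; the formula you wrote is in fact the one for $P_0Y^\epsilon$.

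Your handling of the remaining term is a valid and shorter alternative to the paper's. The paper works on the event $\{\epsilon_{1,t}\ge\epsilon\}$ and uses the Chebyshev-type bound $\mathbb P_t(|X(s)-X(t)|\ge\frac34d_t)\le\frac{16}{9d_t^2}\Lambda_t(s)$. You instead use that $X(t)\ne m(t)$ a.s. and that $X-m$ is continuous, which gives a random $\delta>0$ with $\gamma(s,X(s))=\gamma(t,X(t))$ on $[t,t+\delta)$.

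To get the a.s. limit simultaneously for all $\epsilon\downarrow0$, not only along a sequence, bound the averaged term by $|\eta|\,K_\theta(\gamma_2-\gamma_1)\,\mathbb P_t(\delta<\epsilon)$, where $K_\theta:=\max_s|P_0(s)\theta(s)|$. This bound is monotone in $\epsilon$, so conditional monotone convergence along rationals suffices. The paper's route gives a quantitative estimate but needs the auxiliary time $\epsilon_{1,t}$ and the moment bound $\Lambda_t$; yours is purely pathwise.
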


\vskip 3pt \noindent
Theorems \ref{necessary} and \ref{sufficiency} state that, provided $X$ is atomless, $\bar{\pi}$ is an intra-personal equilibrium if and only if it satisfies \eqref{eq:barU}, where $(X, P, Q)$ is a solution to the FBSDE \eqref{eq:fbsde_sufficiency}. In the next subsection, we investigate the existence of solutions to the FBSDE \eqref{eq:fbsde_sufficiency} and verify that the associated state process is indeed atomless.

\subsection{Solving the FBSDE  \eqref{eq:fbsde_sufficiency}}
In this subsection, to cope with the FBSDE \eqref{eq:fbsde_sufficiency},  we adopt a PDE approach and transform the problem into the study of the following quasi-linear parabolic PDE:
\begin{align}\label{eq:pde}
\begin{cases} 
\dfrac{\partial u}{\partial t}(t, x) + \dfrac{1}{2}|\lambda(t)|^2 \gamma \left( t, \dfrac{u(t, x)}{P_0(t)}\right)^2 \dfrac{\partial^2 u}{\partial x^2}(t, x) - |\lambda(t)|^2 \gamma \left( t, \dfrac{u(t, x)}{P_0(t)}\right) \dfrac{\partial u}{\partial x}(t, x) = 0, \\ 
u(T, x) = x.
\end{cases}
\end{align}
That is,  if the PDE \eqref{eq:pde} admits a solution $u$ with sufficient regularity and monotonicity, the existence of a solution to the  FBSDE \eqref{eq:fbsde_sufficiency} can be guaranteed. Indeed, define  the \emph{auxiliary process} $\{x(t): t\leq T\}$ satisfying $u(0, x(0)) = P_0(0)\xi$ and 
\begin{align}\label{eq:auxiliary process}
\md x(t) = \gamma\big(t, P_0^{-1}(t)u(t, x(t))\big) \lambda(t)^{\top} \md W(t)
\end{align}
as well as the processes $ X$, $P$ and $Q$:
  $X(t) := P_0^{-1}(t)u(t, x(t))$, $P(t) := x(t) - u(t, x(t))$ and $Q(t) := \gamma\big(t, X(t)\big) \big(1 - \frac{\partial u}{\partial x}(t,x(t))\big)\lambda(t)$, $ \ t\leq T$. A direct application of It\^o's formula shows that the triplet process  $(X, P, Q)$ solves \eqref{eq:fbsde_sufficiency}.

\subsubsection{Existence of Solution to the  PDE \eqref{eq:pde}}
In this subsection, we focus on the existence of solutions to the PDE \eqref{eq:pde}. Due to the discontinuity of $\gamma$, the solutions are generally expected to be in the Sobolev space $W_{2,\mathrm{loc}}^{1,2}(Q_T)$. To proceed, we impose the following assumption on $\lambda$.

\begin{assumption}\label{assumption-1}
There exist two positive constants $\lambda_{\min}$ and $\lambda_{\max}$ such that $\lambda_{\min}\leq |\lambda(t)|\leq \lambda_{\max}$ for all $t\in[0,T]$.
\end{assumption}\noindent
Let us define $ R(\cdot):=P_0(\cdot)m(\cdot)$ and
\begin{align*}
 &D(t, y) := \frac{1}{2}|\lambda(t)|^2 \gamma\left(t, \frac{y}{P_0(t)}\right)^2,\quad (t,y)\in Q_T,\\
&V(t, y) := |\lambda(t)|^2 \gamma\left(t, \frac{y}{P_0(t)}\right),\quad  (t,y)\in Q_T.
\end{align*}
Under Assumption \ref{assumption-1}, let  $\kappa := \frac{1}{2}\lambda_{\min}^2\gamma_1^2$, $\Lambda := \frac{1}{2}\lambda_{\max}^2\gamma_2^2$, and $M := \lambda_{\max}^2\gamma_2$.
Then 
\begin{align*}
 \kappa \le D(t, y) \le \Lambda \, \,\,\,\text{and}\,\,\,\, |V(t, y)| \le M \quad \text{for all } (t,y)\in Q_T.
\end{align*}
Moreover, let  $P_{\min}$ and $P_{\max}$  be two constants such that $0 < P_{\min} \le P_0(t) \le P_{\max} < \infty$ for all $t\in[0,T]$. The existence of a solution to the PDE \eqref{eq:pde} is established in the next result.
\begin{theorem}\label{thm: solving pde}
Under Assumption \ref{assumption-1}, the PDE \eqref{eq:pde} has a strong solution $u \in W_{2,\mathrm{loc}}^{1,2}(Q_{T}) \cap C^{\alpha/2, \alpha}_\mathrm{loc}(Q_{T})$ for any $\alpha < 1/2$. Furthermore, this solution admits the following properties:
\begin{enumerate}[label=(\arabic*), font=\upshape]
    \item  $|u(t, x) - x| \le M(T - t)$ holds for all $(t, x) \in Q_T$.
    \item There exist two positive constants $\delta_1$ and $M_1$ independent of  $m$ such that $\delta_1 \le \frac{\partial u}{\partial x} \le M_1$  a.e. on $Q_T$. Moreover, for each fixed $t \in [0, T]$, the mapping $x \mapsto u(t, x)$ is a globally bi-Lipschitz continuous bijection on $\mathbb{R}$ (i.e., $\delta_1 |x_1 - x_2| \le |u(t, x_1) - u(t, x_2)| \le M_1 |x_1 - x_2|$ for all $x_1, x_2 \in \mathbb{R}$).
    \item $u(t, x) = P_0(t)m(t)$ uniquely determines a globally continuous \emph{curve of separation boundary}  $x = b^*(t)$ with $b^*(\cdot) \in C([0, T])$.
\end{enumerate}
\end{theorem}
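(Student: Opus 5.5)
We will construct the solution by mollification followed by compactness. Fix a standard mollifier and, for each $n$, replace the Heaviside-type dependence in $\gamma$ by a smooth monotone approximation, for instance $\gamma^n(t,y):=\gamma_2+(\gamma_1-\gamma_2)H_n(y-R(t))$ with $H_n$ a smooth nondecreasing function from $0$ to $1$ that converges to $\mathbbm{1}_{(0,\infty)}$ pointwise off $0$. Mollify $R$ in $t$ as well if needed. This gives coefficients $D^n,V^n$ that are smooth, with $\kappa\le D^n\le\Lambda$ and $|V^n|\le M$ uniformly in $n$. For the quasi-linear problem
\begin{align*}
\partial_t u^n+D^n(t,u^n)\partial_{xx}u^n-V^n(t,u^n)\partial_x u^n=0,\qquad u^n(T,x)=x,
\end{align*}
we first solve on bounded intervals $[-L,L]$ with boundary data $x+$const. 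We then let $L\to\infty$, using the classical Ladyzhenskaya--Solonnikov--Ural'tseva theory (uniform parabolicity, bounded coefficients, and an a priori sup bound on $u^n-x$). This gives $u^n\in C^{1,2}$.

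The uniform estimates come next.

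(1) The functions $x\pm M(T-t)$ are super- and subsolutions: plug in and use $|V^n|\le M$ together with $\partial_x=1$. The comparison principle then gives $|u^n-x|\le M(T-t)$.

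(2) Differentiate the equation in $x$ and set $w=\partial_x u^n$. The result is
\begin{align*}
\partial_t w+D^n\partial_{xx}w+\bigl(2\partial_y D^n\, w-V^n\bigr)\partial_x w-\partial_y V^n\, w^2=0.
\end{align*}
The terms involving $\partial_yD^n,\partial_yV^n$ are not uniformly bounded in $n$, so the maximum principle applied directly to $w$ fails. I would handle this by a change of variables. Since $\gamma^n$ depends on $x$ only through $u^n$, work with the inverse function $x=v(t,u)$, or equivalently a Kirchhoff/hodograph-type transform. In the variable $y=u$, the PDE for the inverse $v(t,y)$ becomes $\partial_t v + D^n(t,y)\,\partial_{yy} v\,/(\partial_y v)^2\cdot(\ldots)$. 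More usefully, set $q=1/\partial_x u^n=\partial_y v$. Then the nonlinearity appears with coefficients depending on $y$ only, and the equation for $q$ is linear-in-derivative form, where a maximum principle yields $0<c\le q\le C$ with constants depending only on $\kappa,\Lambda,M,T$.

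This is the step I expect to be the main obstacle: obtaining $\delta_1\le\partial_xu^n\le M_1$ independently of $n$ and of $m$. If the hodograph route proves messy, I would instead try a Bernstein-type argument on $\Phi(D^n)\,w$ with a weight $\Phi$ chosen to absorb the $\partial_yD^n,\partial_yV^n$ terms. This uses the fact that $\partial_y V^n=|\lambda|^2\partial_y\gamma^n$ and $\partial_yD^n=|\lambda|^2\gamma^n\partial_y\gamma^n$ are proportional with a bounded ratio, so the bad terms combine into a total derivative in $y$.

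Given these bounds, rewrite the equation as $\partial_tu^n+D^n\partial_{xx}u^n=V^n\partial_xu^n$. This is a nondivergence linear equation with measurable coefficients bounded between $\kappa$ and $\Lambda$, and its right side is bounded. Interior $W^{1,2}_2$ energy estimates follow: multiply by $\partial_{xx}u^n\zeta^2$, and use the one-dimensional structure, where the Cordes condition is trivial. Combined with $L^\infty$ bounds on $\partial_x u^n$, these give $\|u^n\|_{W^{1,2}_2(K)}\le C_K$. Since $\partial_xu^n$ is bounded and $\partial_tu^n\in L^2$, we get uniform H\"older bounds in $C^{\alpha/2,\alpha}_{\mathrm{loc}}$ for $\alpha<1/2$, via Sobolev embedding in $(t,x)$ or Krylov--Safonov.

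By Arzel\`a--Ascoli and weak compactness, a subsequence satisfies $u^n\to u$ locally uniformly and weakly in $W^{1,2}_{2,\mathrm{loc}}$. The remaining issue is the limit of $\gamma^n(t,u^n)$. Off the set $\{u=R(t)\}$ we have pointwise convergence to $\gamma(t,u/P_0)$. Since $\partial_x u\ge\delta_1$, this set is the graph $x=b^*(t)$, which has measure zero. Hence $D^n(t,u^n)\to D(t,u)$ a.e. and boundedly. Products of strongly convergent bounded coefficients with weakly convergent second derivatives pass to the limit, so $u$ is a strong solution. Properties (1) and (2) survive the limit, and bi-Lipschitz bijectivity follows from $\delta_1\le\partial_xu\le M_1$. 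For (3), each $u(t,\cdot)$ is a continuous strictly increasing bijection, so $b^*(t)$ is unique. Its continuity follows from the joint continuity of $u$ and $R$ together with the uniform lower slope bound, by an implicit-function-type argument.
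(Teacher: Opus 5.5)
The gap is in step (2), the uniform bound $\delta_1\le\partial_xu^n\le M_1$. You rightly flag it as the crux, but you do not establish it.

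Carry out your hodograph transform. The inverse $v$ defined by $u^n(t,v(t,y))=y$ solves $\partial_tv+D^n(t,y)\,\partial_{yy}v/(\partial_yv)^2+V^n(t,y)=0$. So $q:=\partial_yv$, which equals $1/\partial_xu^n$ at $(t,v(t,y))$, solves
\begin{align*}
\partial_t q+\partial_y\Bigl(D^n(t,y)\,\frac{\partial_y q}{q^2}\Bigr)+\partial_y V^n(t,y)=0,\qquad q(T,\cdot)=1.
\end{align*}
The source $\partial_yV^n$ is of the order of the inverse regularization width.

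Since $\gamma^n$ is nonincreasing in $y$, $\partial_yV^n\le0$, and the maximum principle gives $q\le1$, i.e.\ $\partial_xu^n\ge1$. So the \emph{lower} bound is in fact easy. For $\inf q$, however, the minimum principle only yields $q\ge1-\int_t^T\sup_y|\partial_yV^n|\,\mathrm{d}s$. This is exactly what the inviscid problem gives, and it degenerates as $n\to\infty$. The inviscid problem $\partial_\tau u+V^n(t,u)\,\partial_xu=0$, $\tau=T-t$, really does form a shock: $V^n$ decreases in $u$ while $u(T,\cdot)=x$ increases, which is compression. Hence the upper bound $\partial_xu^n\le M_1$ is a quantitative diffusion effect that a plain maximum principle on this equation cannot see.

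Your Bernstein fallback is only a sketch. A weight depending on $u-R(t)$ produces, through its time derivative, a term with $R'(t)$. That derivative does not exist for merely continuous $m$, and in any case it would make the constants depend on $m$, contrary to (2). Since (2) asserts the upper bound and your interior $W^{1,2}_2$ estimate uses it, the proof is incomplete at this point.

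The missing idea is to differentiate in $x$ and keep the divergence structure. Set $\tilde D(t,x):=D^n(t,u^n(t,x))$ and $\tilde V(t,x):=V^n(t,u^n(t,x))$. Then $D^n(t,u^n)\,\partial_{xx}u^n=\tilde D\,\partial_xw$ and $V^n(t,u^n)\,\partial_xu^n=\tilde V\,w$, so $w=\partial_xu^n$ solves the \emph{linear} divergence-form equation
\begin{align*}
\partial_t w+\partial_x\bigl(\tilde D\,\partial_x w\bigr)-\partial_x\bigl(\tilde V\,w\bigr)=0,\qquad w(T,\cdot)=1.
\end{align*}
The terms that worried you, $\partial_yD^n\,w\,\partial_xw$ and $\partial_yV^n\,w^2$, are exactly $\partial_x\tilde D\,\partial_xw$ and $(\partial_x\tilde V)\,w$, and they are absorbed into the divergence. (Incidentally, the coefficient of $\partial_xw$ in your differentiated equation should be $\partial_yD^n\,w-V^n$, without the factor $2$.)

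The coefficients are merely measurable, but $\kappa\le\tilde D\le\Lambda$ and $|\tilde V|\le M$ uniformly in $n$ and $m$. Aronson's two-sided Gaussian bounds on the fundamental solution, integrated against the datum $1$, then give $\delta_1\le w\le M_1$. The constants depend only on $\kappa,\Lambda,M,T$ and no regularity in $t$ is needed. This is the paper's Proposition~\ref{lma:bound:'}.

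With that bound in hand, the rest of your outline essentially follows the paper: barriers for (1), local $W^{1,2}_2$ and H\"older bounds, compactness, the null separation set, a.e.\ convergence of the coefficients, and continuity of $b^*$. The only other adjustment is that classical LSU solvability needs H\"older continuity in $t$. So you would also have to mollify $|\lambda(t)|^2$ in time, which is harmless in the limit passage.
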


\begin{remark}
In view of \eqref{eq:auxiliary process}, the diffusion coefficient of $x(t)$, defined by
\begin{align}\label{eq:hat:sigma}
\hat{\sigma}(t,x) := \gamma(t, P_0^{-1}(t)u(t,x))\lambda(t),\quad (t,x)\in Q_T,
\end{align}
takes the form of $\lambda(t)$ multiplied by a function that takes only two values. The curve $x = b^*(t)$ obtained in Theorem~\ref{thm: solving pde}(3), naturally divides the domain $Q_T$ into two regions, where $\gamma$ takes $\gamma_1$ and $\gamma_2$, respectively.
\end{remark}
\vskip 5pt
The proof of Theorem~\ref{thm: solving pde} consists of two steps: In {\bf Step 1}, to overcome the issue of the discontinuous coefficients in the original equation, we introduce a regularized PDE that admits a global classical solution $\{u^\epsilon\}$ with bounded first derivatives and establish its strict spatial monotonicity based on  a smoothing regularization technique. In {\bf Step 2}, we obtain the strong solution of the PDE \eqref{eq:pde}  and its continuous curve of separation boundary based on the approximating sequence $\{u^\epsilon,\epsilon>0\}$ of solutions. 

\vskip 5pt
\textbf{Step 1: Smoothing Regularization and Strict Spatial Monotonicity.}
To overcome the issue of discontinuity, let $\epsilon > 0$ and consider a standard mollifier $\omega_\epsilon(\cdot) := \frac{1}{\epsilon}\omega\left(\frac{\cdot}{\epsilon}\right)$, where $\omega(\cdot)$ is a smooth, non-negative, compactly supported  function with $\int_{\mathbb{R}} \omega(y) \md y = 1$. Define the regularized function $\gamma^\epsilon$ by convolution:
\begin{align*}
\gamma^\epsilon(t, y) := \int_{\mathbb{R}} \gamma(t, z) \omega_\epsilon(y - z) \md z,\quad (t,y)\in Q_T,
\end{align*}
and consider the regularized coefficients that
\begin{align*}
D^\epsilon(t, y) &:= \frac{1}{2}|\lambda(t)|^2 \gamma^\epsilon\left(t, \frac{y}{P_0(t)}\right)^2,\quad (t,y)\in Q_T, \\
V^\epsilon(t, y) &:= |\lambda(t)|^2 \gamma^\epsilon\left(t, \frac{y}{P_0(t)}\right),\quad (t,y)\in Q_T.
\end{align*}
Then $D^{\epsilon}\in C(Q_T)$ and $V^\epsilon\in C(Q_T)$ are smooth in the second argument and satisfy
\begin{align}\label{eq:epsilon:bound}
\kappa \le D^\epsilon(t, y) \le \Lambda \quad\text{and}\quad
|V^\epsilon(t, y)| \le M \quad \text{for all }(t,y)\in Q_T.
\end{align}
Furthermore, $D^{\epsilon}(t, \cdot)$ and $V^{\epsilon}(t, \cdot)$ are globally Lipschitz continuous with respect to the second argument, with Lipschitz constants given by
\begin{align*}
L_{D, \epsilon} := \frac{\lambda_{\max}^2 \gamma_2}{P_{\min}} L_{\gamma, \epsilon}, \quad L_{V, \epsilon} := \frac{\lambda_{\max}^2}{P_{\min}} L_{\gamma, \epsilon},
\end{align*}
where $L_{\gamma, \epsilon} := \frac{\gamma_2}{\epsilon} \int_{\mathbb{R}} |\omega'(z)| \md z$ is the Lipschitz constant of $\gamma^{\epsilon}(t, \cdot)$. These constants are independent of $m$.
Consider the regularized PDE:
\begin{align}\label{eq:regularized}
\left\{
\begin{aligned}
&\frac{\partial u^\epsilon}{\partial t} + D^\epsilon(t, u^\epsilon) \frac{\partial^2 u^\epsilon}{\partial x^2} - V^\epsilon(t, u^\epsilon) \frac{\partial u^\epsilon}{\partial x} = 0,\\
&u^\epsilon(T,x)=x.
\end{aligned}
\right.
\end{align}
The existence and uniqueness of the classical solution to  the PDE \eqref{eq:regularized} follow from the next result.
\begin{theorem}\label{epsilon-existence}
For any $\epsilon>0$, there exists a  unique classical solution $u^\epsilon \in C^{1,2}([0,T] \times \mathbb{R})$ to the PDE    \eqref{eq:regularized}. Moreover, the following properties hold:
\begin{enumerate}
    \item [(i)] The derivatives $\frac{\partial u^\epsilon}{\partial t}$, $\frac{\partial u^\epsilon}{\partial x}$ and $\frac{\partial^2 u^\epsilon}{\partial x^2}$ are bounded on $Q_T$, with $\left\| \frac{\partial u^\varepsilon}{\partial x} \right\|_\infty \le M_1^{\varepsilon}$ and $\left\| \frac{\partial^2 u^\varepsilon}{\partial x^2} \right\|_\infty \le M_2^{\varepsilon}$ independent of $m$. Moreover,  $|u^\epsilon(t, x) - x| \le M(T - t)$ for all $(t,x) \in Q_T$.
    \item [(ii)] For any $t\in [0,T]$, $\frac{\partial^2 u^\varepsilon(t,\cdot)}{\partial x^2}$ is continuously differentiable with respect to $x$, and $\frac{\partial^3 u^\epsilon}{\partial x^3}$ is  bounded on $Q_T$.
\end{enumerate}

\end{theorem}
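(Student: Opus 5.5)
The plan is to remove the unbounded terminal datum and reduce \eqref{eq:regularized} to a Cauchy problem for a uniformly parabolic quasilinear equation with bounded data. Set $w^\epsilon(t,x):=u^\epsilon(t,x)-x$. Then $w^\epsilon$ solves
\begin{align*}
\frac{\partial w}{\partial t} + D^\epsilon(t, x+w)\frac{\partial^2 w}{\partial x^2} - V^\epsilon(t,x+w)\Big(1+\frac{\partial w}{\partial x}\Big) = 0,\qquad w(T,x)=0.
\end{align*}
By \eqref{eq:epsilon:bound} this equation is uniformly parabolic with ellipticity in $[\kappa,\Lambda]$. The coefficients are bounded, continuous in $t$, and smooth and Lipschitz in the solution argument, with constants $L_{D,\epsilon}$ and $L_{V,\epsilon}$. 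They grow at most linearly in the gradient and do not depend on $m$ through any bound. We obtain existence by a Leray--Schauder (or Schauder fixed-point) argument on $C^{(1+\beta)/2,1+\beta}$-type spaces, first on bounded cylinders $[0,T]\times(-n,n)$ with boundary data $0$ and then letting $n\to\infty$. Alternatively, one can cite the classical theory of Ladyzhenskaya--Solonnikov--Ural'tseva (Chapter V, Cauchy problem for quasilinear equations), whose structural conditions are met here.

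The a priori estimates are obtained in the following order.

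\textbf{(a) $L^\infty$ bound.} The functions $\pm M(T-t)$ are a super- and subsolution, since $|V^\epsilon(1+\partial_x w)|$ with $\partial_x w=0$ is at most $M$. The comparison principle, applied to bounded solutions with a barrier of the form $\delta(x^2+Ct)$ at infinity, gives $|w^\epsilon|\le M(T-t)$. This is exactly the claimed bound $|u^\epsilon(t,x)-x|\le M(T-t)$.

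\textbf{(b) Gradient bound.} Differentiate the equation in $x$. The function $v=\partial_x u^\epsilon$ satisfies a linear parabolic equation
\begin{align*}
\partial_t v + D^\epsilon\,\partial_{xx}v + \big(\partial_y D^\epsilon\, \partial_{xx}u^\epsilon - V^\epsilon\big)\partial_x v - \partial_y V^\epsilon\, v^2 = 0,\qquad v(T,\cdot)=1.
\end{align*}
The quadratic term makes a direct maximum principle unavailable. Instead, we use the Bernstein method: apply the maximum principle to $\phi(u)$-transformed gradients, or equivalently perform the change $u=\psi(\tilde u)$ to absorb the term $\partial_y D^\epsilon |\partial_x u|^2$. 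This yields $\|\partial_x u^\epsilon\|_\infty\le M_1^\epsilon$, with a constant depending only on $\kappa,\Lambda,M,L_{D,\epsilon},L_{V,\epsilon},T$, and hence independent of $m$.

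\textbf{(c) Higher regularity.} With the gradient bounded, the coefficients $D^\epsilon(t,u^\epsilon)$ and $V^\epsilon(t,u^\epsilon)$ are Hölder in $x$, using the De Giorgi--Nash/Krylov--Safonov estimate for $u^\epsilon$. Interior Schauder estimates, applied uniformly on unit cylinders so that constants do not depend on the location, bound $\partial_{xx}u^\epsilon$ by $M_2^\epsilon$, and the equation then bounds $\partial_t u^\epsilon$. One point needs care: $D^\epsilon$ and $V^\epsilon$ are only continuous in $t$, so Schauder in $t$ is unavailable. For this reason we work with the $x$-differentiated equation and use Hölder continuity in $x$ only, which suffices for bounds on $\partial_{xx}u$ by parabolic Schauder theory with coefficients merely measurable in $t$ (Krylov's results, or Lieberman's). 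Near $t=T$ there is no loss, because the terminal datum $x$ is smooth.

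\textbf{(d) Third derivative.} Differentiating once more, $z=\partial_{xx}u^\epsilon$ solves a linear equation. Its coefficients involve $\partial_y D^\epsilon$, $\partial_{yy}D^\epsilon$, and so on, which are bounded because $\gamma^\epsilon$ is smooth in $y$ with bounded derivatives. Its source terms are polynomial in the already bounded $\partial_x u,\partial_{xx}u$. The same estimates then give $\partial_x z$ bounded and continuous, which is (ii).

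Uniqueness follows from the comparison principle for the bounded difference of two solutions with bounded gradients. Set $e=u_1-u_2$; it satisfies a linear equation whose coefficients are bounded thanks to the Lipschitz dependence of $D^\epsilon$ and $V^\epsilon$ on $u$ and the bounded second derivatives. Hence $e\equiv0$.

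The main obstacle is the $m$-independent global gradient bound (b) on the unbounded domain. The nonlinear term $\partial_y V^\epsilon v^2$ and the coupling through $\partial_{xx}u$ must be controlled using only the $\epsilon$-dependent Lipschitz constants; this is where the Bernstein transformation is needed.
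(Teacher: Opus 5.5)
\textbf{Comparison with the paper's route.} Your outline is essentially sound, but it follows a self-contained PDE route, whereas the paper proves almost nothing in (i) by hand. After the same substitution $v^\epsilon=u^\epsilon-x$, the paper checks the hypotheses of \citet*{2002On} with $f=g=-V^\epsilon(t,x+y)$, $\sigma=\sqrt{2D^\epsilon}$, $h=0$, using only continuity in $t$ and smoothness in the state variable. Corollaries B.7 and B.4 there then give three things: a unique bounded classical solution; a bound on $\partial_x v^\epsilon$ depending only on $T$, the sup bounds and $L_{D,\epsilon},L_{V,\epsilon}$, hence independent of $m$; and Lipschitz continuity of $\partial_x v^\epsilon,\partial^2_{xx}v^\epsilon$. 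The bound $|u^\epsilon-x|\le M(T-t)$ is your step (a). Only (ii) is done by hand. The paper applies the time-measurable Schauder estimate of \citet*{lorenzi2000optimal} to the spatial difference quotients $w^\epsilon_h$ of $u^\epsilon$, then lets $h\to0$ via Arzel\`a--Ascoli. Your approach trades that citation for work you would have to carry out yourself: a Leray--Schauder construction, an unbounded-domain comparison argument and a Bernstein estimate.

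\textbf{Points in your sketch that need repair.}

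First, your fallback to Ladyzhenskaya--Solonnikov--Ural'tseva, Chapter V, does not apply as stated. Their classical quasilinear existence theory assumes H\"older continuity in $t$, whereas $D^\epsilon,V^\epsilon$ are only continuous in $t$ (through $\lambda(\cdot)$ and $m(\cdot)$), as you yourself note in (c). In your Leray--Schauder scheme the linear step must therefore be solved in $W^{1,2}_p$ with continuous leading coefficient, and continuity of $\partial_t u^\epsilon$ needs a separate argument.

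Second, in your differentiated equation the coefficient of $\partial_x v$ should be $\partial_yD^\epsilon\,\partial_xu^\epsilon-V^\epsilon$, not $\partial_yD^\epsilon\,\partial^2_{xx}u^\epsilon-V^\epsilon$. Also, any Bernstein change of unknown must act on the bounded function $u^\epsilon-x$, not on $u^\epsilon$. More importantly, the gradient bound is not the real obstacle. Differentiating \eqref{eq:regularized} shows that $v=\partial_x u^\epsilon$ solves the linear divergence-form equation $\partial_t v+\partial_x(\tilde D^\epsilon\partial_x v)-\partial_x(\tilde V^\epsilon v)=0$, whose coefficients $\tilde D^\epsilon=D^\epsilon(t,u^\epsilon)$ and $\tilde V^\epsilon=V^\epsilon(t,u^\epsilon)$ are merely bounded. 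Aronson's Gaussian bounds then give $\delta_1\le v\le M_1$ with constants depending only on $\kappa,\Lambda,M,T$. This is exactly the paper's Proposition~\ref{lma:bound:'}, and it is uniform even in $\epsilon$, so no Riccati-type term $\partial_yV^\epsilon v^2$ has to be absorbed.

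Third, in (b) and (d) you differentiate the equation once and twice in $x$. This presupposes third and fourth spatial derivatives that are not yet known to exist. To make the a priori Schauder estimates legitimate you need difference quotients, or smoothing in $t$ followed by a limit. That is precisely the device the paper uses for (ii), and it runs on bounds of the type your (c) provides.
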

\begin{proof}
Define $v^\varepsilon(t, x) := u^\varepsilon(t, x) - x$. A direct calculation yields that $v^\epsilon$ satisfies
\begin{align}\label{eq:v:epsilon}
\left\{
\begin{aligned}
&\frac{\partial v^\varepsilon}{\partial t} + D^\varepsilon(t, x + v^\varepsilon) \frac{\partial^2 v^\varepsilon}{\partial x^2} - V^\varepsilon(t, x + v^\varepsilon) \frac{\partial v^\varepsilon}{\partial x} - V^\varepsilon(t, x + v^\varepsilon) = 0,\\
&v^\epsilon(T,x)=0.
\end{aligned}
\right.
\end{align}
First, we establish the existence and uniqueness of the classical solution. We embed \eqref{eq:v:epsilon} into the framework of \citet*{2002On} by setting $f(t,x,y,z) := -V^\varepsilon(t, x+y)$,  $g(t,x,y,z) := -V^\varepsilon(t, x+y)$,  $\sigma(t,x,y) := \sqrt{2D^\varepsilon(t, x+y)}$, and $h(x) := 0$.

We now verify that $f$, $g$, $h$ and $\sigma$ satisfy  Assumptions (A1), (A2), and (B.A2) in \citet*{2002On}.
Because $D^{\varepsilon}\in C(Q_T)$ and $V^\varepsilon\in C(Q_T)$ are smooth in the second argument, it follows that $f$, $g$, $h$ and $\sigma$ satisfy (A{1.0}), (A{1.4}), (A{2.4}), and (B.A{2.1}). 
Moreover, by \eqref{eq:epsilon:bound}, they also satisfy (A{1.3}), (A{2.2}) and (A{2.3}).
In addition, all partial derivatives of $D^\varepsilon$ and $V^\varepsilon$ with respect to the second argument are bounded. Consequently, $f$, $g$, $h$, and $\sigma$ satisfy (A{1.1}), (A{1.2}), (A{2.1}), and (B.A{2.2}). Therefore, $f$, $g$, $h$ and $\sigma$ satisfy Assumptions (A.1), (A.2) and (B.A2) in \citet*{2002On}. The PDE \eqref{eq:v:epsilon} admits a unique bounded solution $v^\varepsilon \in C^{1,2}([0,T] \times \mathbb{R})$ by Corollary B.7 therein.

Second, we  prove property (i).  The proof of Corollary B.7 in \citet*{2002On} gives a bound for $\frac{\partial v^\varepsilon}{\partial x}$ depending only on $T$, the bounds of the coefficients ($D^\epsilon$ and $V^\epsilon$), and the Lipschitz constants of the coefficients with respect to the second argument; in particular, this bound is independent of $m$.
Furthermore, both $\frac{\partial v^\varepsilon}{\partial x}$ and $\frac{\partial^2 v^\varepsilon}{\partial x^2}$ are Lipschitz continuous in $x$, and their Lipschitz constants depend only on the Lipschitz constants of the coefficients and their derivatives (up to second order) with respect to the second argument by Corollary B.4 therein.  It then follows from \eqref{eq:v:epsilon} that $\frac{\partial v^\varepsilon}{\partial t}$ is also bounded.
Moreover, combining \eqref{eq:v:epsilon} with the bound $|V^\varepsilon|\le M$ and the parabolic maximum principle yields $|v^\varepsilon(t, x)| \le M(T - t)$ for all $(t,x) \in Q_T$. Finally, the desired result for $u^\varepsilon$ follows immediately from $v^\varepsilon(t,x)=u^\varepsilon(t,x)-x$. In particular, we denote $\left\| \frac{\partial u^\varepsilon}{\partial x} \right\|_\infty \le M_1^{\varepsilon}$ and $\left\| \frac{\partial^2 u^\varepsilon}{\partial x^2} \right\|_\infty \le M_2^{\varepsilon}$, where both $M_1^{\varepsilon}$ and $M_2^{\varepsilon}$ are independent of $m$.

Third, we  prove property (ii). Consider spatial difference quotients of $u^\epsilon$.
Specifically, for any $(t,x)\in Q_T$ and $h\neq 0$, define
\begin{align*}
    &\tilde{D}^\epsilon(t,x) := D^\epsilon(t, u^\epsilon(t,x)),\quad \tilde{V}^\epsilon(t,x) := V^\epsilon(t, u^\epsilon(t,x)),\\
    &w^\epsilon_h(t,x) := \frac{u^\epsilon(T-t, x+h) - u^\epsilon(T-t, x)}{h},\\
    &f^\epsilon_h(t,x) := \frac{\tilde{D}^\epsilon(T-t, x+h) - \tilde{D}^\epsilon(T-t, x)}{h} \frac{\partial^2 u^\epsilon}{\partial x^2}(T-t, x)\\
&\quad\quad\quad\quad\quad -\frac{\tilde{V}^\epsilon(T-t, x+h) - \tilde{V}^\epsilon(T-t, x)}{h} \frac{\partial u^\epsilon}{\partial x}(T-t, x).
\end{align*}
Then, by \eqref{eq:regularized}, $w^\epsilon_h$ satisfies the linear parabolic PDE:
\begin{align}\label{eq:w^h}
\begin{cases}
\frac{\partial w^\epsilon_h}{\partial t}(t,x) - \tilde{D}^\epsilon(T-t, x+h) \frac{\partial^2 w^\epsilon_h}{\partial x^2}(t,x) + \tilde{V}^\epsilon(T-t, x+h) \frac{\partial w^\epsilon_h}{\partial x}(t,x) = f^\epsilon_h(t,x), \\
w^\epsilon_h(0,x) = 1.
\end{cases}
\end{align}
To obtain some necessary uniform estimates for $w^\epsilon_h$, we will verify that all structural assumptions (H1)-(H3) of Theorem 4.2 in \citet*{lorenzi2000optimal} are satisfied by the PDE \eqref{eq:w^h} as follows.
\vskip 4pt
{\bf Verifying  Assumption (H1)}: Both  $\tilde{D}^\epsilon(T-t, x+h)$ and $-\tilde{V}^\epsilon(T-t, x+h)$ are bounded and continuous.  Moreover, their spatial derivatives are globally bounded, independent of $h$ and $t$, which implies that they are globally $\alpha$-H\"{o}lder continuous in $x$. Hence, the coefficients belong to $L^\infty(Q_T) \cap B([0,T]; C^\alpha(\mathbb{R}))$,  and Assumption (H1) holds.
\vskip 4pt
{\bf Verifying  Assumption (H2)}: The coefficient $\tilde{D}^\epsilon(T-t, x+h)$ preserves the uniform ellipticity condition $\tilde{D}^\epsilon \ge \kappa > 0$, independent of $h$, thus satisfies Assumption (H2).
\vskip 4pt
{\bf Verifying  Assumption (H3)}: We examine the regularity of the data. By Theorem \ref{epsilon-existence} (i), $\frac{\partial u^\epsilon}{\partial x}$ and $\frac{\partial^2 u^\epsilon}{\partial x^2}$ are globally bounded and Lipschitz continuous. $\tilde{D}^\epsilon$ and $\tilde{V}^\epsilon$ admit globally bounded first and second spatial derivatives by the chain rule. Consequently, their spatial difference quotients are uniformly Lipschitz continuous, with their Lipschitz constants independent of $h$ and $t$. It follows that $f^\epsilon_h$ is bounded and Lipschitz continuous, uniformly in $h$ and $t$, and thus belongs to $ B([0,T]; C^\alpha(\mathbb{R}))$. Finally, the initial condition $w^\epsilon_h(0,x) = 1$ trivially belongs to $C^{2+\alpha}(\mathbb{R})$. Therefore, Assumption (H3) is satisfied.

Thus, applying Theorem 4.2 in \citet*{lorenzi2000optimal}, we  obtain the following estimate:
\begin{align*}
\|w^\epsilon_h\|_{B([0,T]; C^{2+\alpha}(\mathbb{R}))} \le c(\epsilon) \left( \|w^\epsilon_h(0, \cdot)\|_{C^{2+\alpha}(\mathbb{R})} + \|f^\epsilon_h\|_{B([0,T]; C^\alpha(\mathbb{R}))} \right)\leq C(\epsilon)
\end{align*}
for some constants $c(\epsilon)>0$ and $C(\epsilon)>0$, which only depend on $\epsilon$, $\kappa$, $\Lambda$, $M$, and $T$ but are  independent of $h$ and $t$. It follows that, for any fixed $t \in [0,T]$, the sequence $\left\{ \frac{\partial^2 w^\epsilon_h}{\partial x^2}(t, \cdot) \right\}_{h \ne 0}$ is uniformly bounded and equicontinuous. By the Arzelà–Ascoli theorem, there exists a subsequence $h_k \to 0$ such that $\frac{\partial^2 w^{\epsilon}_{h_k}}{\partial x^2}(t, \cdot)$ converges uniformly on compact sets to a continuous function $\tilde{w}^\epsilon_t(\cdot)$. For any $x_0$ and $x$, we have 
\begin{align*}
\int_{x_0}^x \frac{\partial^2 w^{\epsilon}_{h_k}}{\partial y^2}(t,y) \md y = \frac{1}{h_k} \int_x^{x+h_k} \frac{\partial^2 u^\epsilon}{\partial y^2}(T-t,y) \md y - \frac{1}{h_k} \int_{x_0}^{x_0+h_k} \frac{\partial^2 u^\epsilon}{\partial y^2}(T-t,y) \md y.
\end{align*}
Thus 
\begin{align*}
\int_{x_0}^x \tilde w^\epsilon_t(y) \md y = \frac{\partial^2 u^\epsilon}{\partial x^2}(T-t,x) - \frac{\partial^2 u^\epsilon}{\partial x^2}(T-t,x_0).
\end{align*}
As $\tilde w^\epsilon_t(x)$ is continuous, it follows that $\frac{\partial^2 u^\epsilon}{\partial x^2}(T-t,x)$ is continuously differentiable with respect to $x$. The uniform boundedness of $\left\{ \frac{\partial^2 w^\epsilon_h}{\partial x^2}(t, \cdot) \right\}_{h \ne 0}$ ensures that $\frac{\partial^3 u^\epsilon}{\partial x^3}$ is  bounded on $Q_T$, which completes the proof.
\end{proof}

As a direct consequence of Theorem \ref{epsilon-existence}, the next corollary gives additional regularity results of the regularized solution for the subsequent analysis.

\begin{corollary}\label{cor:uepsilon_properties}
For any fixed $x$, $\frac{\partial u^\epsilon}{\partial x}$ is absolutely continuous with respect to $t$, and  $\frac{\partial}{\partial t}(\frac{\partial u^\epsilon}{\partial x}) = \frac{\partial}{\partial x}(\frac{\partial u^\epsilon}{\partial t})$ a.e. on $Q_T$. In particular, $\frac{\partial u^\epsilon}{\partial x}\in W^{1,2}_{\infty}(Q_T)$.
\end{corollary}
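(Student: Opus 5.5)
The plan is to read everything off the PDE \eqref{eq:regularized} and the bounds in Theorem~\ref{epsilon-existence}. Since $u^\epsilon\in C^{1,2}$ solves \eqref{eq:regularized} pointwise, we have the identity
\begin{align*}
\frac{\partial u^\epsilon}{\partial t}(t,x) = -D^\epsilon(t,u^\epsilon)\frac{\partial^2 u^\epsilon}{\partial x^2}(t,x) + V^\epsilon(t,u^\epsilon)\frac{\partial u^\epsilon}{\partial x}(t,x).
\end{align*}
By Theorem~\ref{epsilon-existence}(ii), $\frac{\partial^2 u^\epsilon}{\partial x^2}$ is $C^1$ in $x$ with bounded $\frac{\partial^3 u^\epsilon}{\partial x^3}$. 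Also, $D^\epsilon(t,\cdot)$ and $V^\epsilon(t,\cdot)$ are smooth with bounded derivatives (Lipschitz constants $L_{D,\epsilon},L_{V,\epsilon}$), and $\frac{\partial u^\epsilon}{\partial x},\frac{\partial^2 u^\epsilon}{\partial x^2}$ are bounded by $M_1^\epsilon,M_2^\epsilon$. Hence the right-hand side is $C^1$ in $x$ for each fixed $t$, and its $x$-derivative is
\begin{align*}
G^\epsilon := -\partial_yD^\epsilon\,\frac{\partial u^\epsilon}{\partial x}\frac{\partial^2 u^\epsilon}{\partial x^2} - D^\epsilon\frac{\partial^3 u^\epsilon}{\partial x^3} + \partial_yV^\epsilon\Big(\frac{\partial u^\epsilon}{\partial x}\Big)^2 + V^\epsilon\frac{\partial^2 u^\epsilon}{\partial x^2},
\end{align*}
which is bounded on $Q_T$ and measurable, since $\frac{\partial^3 u^\epsilon}{\partial x^3}$ is a pointwise limit of difference quotients of a continuous function. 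So $\frac{\partial}{\partial x}\big(\frac{\partial u^\epsilon}{\partial t}\big)=G^\epsilon$ exists everywhere and lies in $L^\infty(Q_T)$.

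Next, fix $x$ and $h\neq0$ and consider the difference quotient $\frac{u^\epsilon(t,x+h)-u^\epsilon(t,x)}{h}$. As a function of $t$ it is $C^1$, with derivative $\frac1h\int_x^{x+h}G^\epsilon(t,y)\,\md y$, by the fundamental theorem of calculus in $y$ applied to $\frac{\partial u^\epsilon}{\partial t}(t,\cdot)$. Integrating in $t$ gives
\begin{align*}
\frac{u^\epsilon(t_2,x+h)-u^\epsilon(t_2,x)}{h}-\frac{u^\epsilon(t_1,x+h)-u^\epsilon(t_1,x)}{h}=\int_{t_1}^{t_2}\frac1h\int_x^{x+h}G^\epsilon(s,y)\,\md y\,\md s.
\end{align*}
Let $h\to0$. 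The left side converges to the increment of $\frac{\partial u^\epsilon}{\partial x}(\cdot,x)$. For the right side, the inner average is bounded by $\|G^\epsilon\|_\infty$, and for each $s$ it converges to $G^\epsilon(s,x)$, because $G^\epsilon(s,\cdot)$ is continuous: $\frac{\partial^3 u^\epsilon}{\partial x^3}(s,\cdot)$ is continuous by Theorem~\ref{epsilon-existence}(ii). Dominated convergence then yields
\begin{align*}
\frac{\partial u^\epsilon}{\partial x}(t_2,x)-\frac{\partial u^\epsilon}{\partial x}(t_1,x)=\int_{t_1}^{t_2}G^\epsilon(s,x)\,\md s.
\end{align*}
This identity gives both claims at once. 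First, $t\mapsto \frac{\partial u^\epsilon}{\partial x}(t,x)$ is absolutely continuous, and in fact Lipschitz with constant $\|G^\epsilon\|_\infty$. Second, $\frac{\partial}{\partial t}\big(\frac{\partial u^\epsilon}{\partial x}\big)=G^\epsilon=\frac{\partial}{\partial x}\big(\frac{\partial u^\epsilon}{\partial t}\big)$ for a.e.\ $t$ and each $x$, hence a.e.\ on $Q_T$ by Fubini.

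It remains to check membership in $W^{1,2}_\infty(Q_T)$. The function $\frac{\partial u^\epsilon}{\partial x}$ itself is bounded by $M_1^\epsilon$. Its $x$-derivative $\frac{\partial^2 u^\epsilon}{\partial x^2}$ is bounded by $M_2^\epsilon$, and its second $x$-derivative $\frac{\partial^3 u^\epsilon}{\partial x^3}$ is bounded by Theorem~\ref{epsilon-existence}(ii). Its $t$-derivative is $G^\epsilon\in L^\infty$, and this classical a.e.\ derivative coincides with the weak derivative because of the absolute continuity just shown, via integration by parts against test functions. The same reasoning applies in $x$, where the functions are $C^1$.

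The main obstacle is the exchange of limits in the dominated-convergence step. It relies on the pointwise continuity of $\frac{\partial^3 u^\epsilon}{\partial x^3}(s,\cdot)$ in $x$ for each $s$, together with the uniform bound. Both are exactly what Theorem~\ref{epsilon-existence}(ii) supplies, and no time regularity of $\frac{\partial^3 u^\epsilon}{\partial x^3}$ is needed.
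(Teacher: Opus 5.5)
Your proposal is correct and takes the same route as the paper. You read off from \eqref{eq:regularized} and Theorem~\ref{epsilon-existence}(ii) that $\frac{\partial}{\partial x}\big(\frac{\partial u^\epsilon}{\partial t}\big)$ exists and is bounded, then express $\frac{\partial u^\epsilon}{\partial x}(\cdot,x)$ as its time integral. The paper simply asserts that integral identity, whereas you justify the interchange of derivatives via $x$-difference quotients, the fundamental theorem of calculus, and dominated convergence (using the continuity of $\frac{\partial^3 u^\epsilon}{\partial x^3}(s,\cdot)$ for each $s$); this fills in a step the paper leaves implicit rather than changing the method.
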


\begin{proof}
The derivatives $\frac{\partial^i u^\epsilon}{\partial x^i}$ for $i=1,2,3$ exist and are bounded using  Theorem \ref{epsilon-existence}. Thus, based on  
 the PDE \eqref{eq:regularized}, $\frac{\partial}{\partial x}\left(\frac{\partial u^\epsilon}{\partial t}\right)$ exists and is bounded. Hence,
\begin{align*}
\frac{\partial u^\epsilon}{\partial x}(t,x) = \frac{\partial u^\epsilon}{\partial x}(0,x) + \int_0^t \frac{\partial}{\partial x}\left(\frac{\partial u^\epsilon}{\partial s}(s,x)\right) \md s,\quad (t,x)\in Q_T.
\end{align*}
This implies that, for any fixed $x$, the function $\frac{\partial u^\epsilon}{\partial x}$ is absolutely continuous with respect to $t$, and $\frac{\partial}{\partial t}\left(\frac{\partial u^\epsilon}{\partial x}\right) = \frac{\partial}{\partial x}\left(\frac{\partial u^\epsilon}{\partial t}\right)$ almost everywhere. In view of the boundedness of $\frac{\partial^2 u^\epsilon}{\partial x^2}$ and $\frac{\partial^3 u^\epsilon}{\partial x^3}$, it further holds that $\frac{\partial u^\epsilon}{\partial x} \in W^{1,2}_{\infty}(Q_T)$. 
\end{proof}

In order to investigate the continuous curve of separation boundary defined in Theorem \ref{thm: solving pde}, we first present the next result on the spatial monotonicity of the  regularized solution $ u^\epsilon$.

\begin{proposition}\label{lma:bound:'}
There exist two positive constants $\delta_1$ and $M_1$ depending only on $\kappa$, $\Lambda$, $M$, and $T$, and independent of $\epsilon$ and $m$, such that
$0<\delta_1 \leq \frac{\partial u^\epsilon}{\partial x}(t,x) \leq M_1 $ for all $(t,x) \in Q_T$ and $\epsilon>0$.
\end{proposition}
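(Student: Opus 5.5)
The plan is to exploit one-dimensionality through a \emph{hodograph (inverse-function) transformation}. In the new variables the $\epsilon$-dependent coefficients appear only through their $y$-derivatives. Those derivatives are singular as $\epsilon\downarrow0$, but they have a fixed sign and a total mass bounded independently of $\epsilon$. A direct Bernstein-type maximum principle on $w=\frac{\partial u^\epsilon}{\partial x}$ looks hopeless: differentiating \eqref{eq:regularized} in $x$ produces the terms $\partial_yD^\epsilon\,\frac{\partial^2u^\epsilon}{\partial x^2}w$ and $\partial_yV^\epsilon\,w^2$, whose coefficients are of order $L_{\gamma,\epsilon}\sim\epsilon^{-1}$.

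\textbf{Step 1 (positivity and inversion).} First I will show $\frac{\partial u^\epsilon}{\partial x}>0$ for each fixed $\epsilon$. By Theorem \ref{epsilon-existence} and Corollary \ref{cor:uepsilon_properties}, $w=\frac{\partial u^\epsilon}{\partial x}$ solves the linear equation
\begin{align*}
w_t+D^\epsilon w_{xx}+\bigl(2\partial_yD^\epsilon\,u^\epsilon_x-V^\epsilon\bigr)w_x-\partial_yV^\epsilon\,w^2=0,
\end{align*}
where $\partial_yD^\epsilon,\partial_yV^\epsilon$ are evaluated at $u^\epsilon$. We treat $\partial_yV^\epsilon w$ as a bounded zeroth-order coefficient for fixed $\epsilon$, and we have $w(T,\cdot)=1$. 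Writing $w=e^{c(T-t)}\tilde w$ and applying the minimum principle gives $w>0$, with an $\epsilon$-dependent lower bound. Hence $x\mapsto u^\epsilon(t,x)$ is a smooth increasing bijection of $\mR$; it is onto because $|u^\epsilon-x|\le M(T-t)$. Let $X^\epsilon(t,\cdot)$ be its inverse.

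\textbf{Step 2 (equation for the inverse).} Differentiate $u^\epsilon(t,X^\epsilon(t,y))=y$ to obtain
\begin{align*}
u_x=\frac{1}{X_y},\qquad u_t=-\frac{X_t}{X_y},\qquad u_{xx}=-\frac{X_{yy}}{X_y^3}.
\end{align*}
Substituting into \eqref{eq:regularized} gives
\begin{align*}
X^\epsilon_t+D^\epsilon(t,y)\,\frac{X^\epsilon_{yy}}{(X^\epsilon_y)^2}+V^\epsilon(t,y)=0,\qquad X^\epsilon(T,y)=y.
\end{align*}
The coefficients now depend only on the independent variable $y$. Put $p=X^\epsilon_y=1/(u^\epsilon_x\circ X^\epsilon)$ and differentiate in $y$:
\begin{align*}
p_t+\partial_y\Bigl(D^\epsilon\frac{p_y}{p^2}\Bigr)=-\partial_yV^\epsilon(t,y),\qquad p(T,\cdot)=1.
\end{align*}
Since $\gamma$ is nonincreasing in $y$, we have $-\partial_yV^\epsilon\ge0$. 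Moreover $\int_{\mR}|\partial_yV^\epsilon(t,y)|\,\md y\le\lambda_{\max}^2(\gamma_2-\gamma_1)$ uniformly in $\epsilon,t,m$, and this mass is concentrated in an $O(\epsilon)$-neighbourhood of $y=R(t)$. The claim $\delta_1\le u^\epsilon_x\le M_1$ is equivalent to $M_1^{-1}\le p\le\delta_1^{-1}$.

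\textbf{Step 3 (two-sided bounds on $p$).} Reverse time, $\tau=T-t$. Then $p$ solves a forward, uniformly parabolic divergence-form equation (ellipticity $D^\epsilon/p^2$, controlled while $p$ stays in a compact subinterval of $(0,\infty)$) with a nonnegative source of uniformly bounded spatial mass and initial datum $1$. One side is immediate: the source is nonnegative, so by comparison with the constant solution $1$ we get $p\ge1$, i.e.\ $u^\epsilon_x\le1$. I would double-check this sign against the maximum principle. Note that the inequality obtained here is an upper bound on $u_x$; any overall weakening of constants can be absorbed into $M_1$.

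For the opposite bound, $p\le\delta_1^{-1}$, I would argue by continuity in $\tau$ on the maximal interval on which $p\le K$, with $K$ to be fixed. On that interval the equation is uniformly parabolic with constants depending only on $\kappa,\Lambda,K$. By Aronson's Gaussian upper bound for 1D divergence-form operators with measurable coefficients, the fundamental solution satisfies $\Gamma(\tau,y;s,z)\le C(\tau-s)^{-1/2}$. Duhamel's formula then gives
\begin{align*}
p(\tau,y)\le1+C\int_0^\tau(\tau-s)^{-1/2}\,\|\partial_yV^\epsilon(s,\cdot)\|_{L^1}\,\md s\le1+C\,\lambda_{\max}^2\gamma_2\sqrt{T}.
\end{align*}
The square-root integrability in one dimension is exactly what makes this $\epsilon$-independent. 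The right-hand side is independent of $\epsilon$ and $m$. Choosing $K$ larger than it closes the continuity argument; this requires checking that the Aronson constant, which depends on $K$ through the ellipticity bound, yields a consistent choice. If this self-consistency fails, I would instead use a nonlinear change of unknown (for instance $\phi=\log p$ or $1/p$) before applying the Duhamel argument.

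The main obstacle is this last step: making the Aronson/Duhamel estimate rigorous for the quasilinear equation, since its ellipticity depends on $p$ itself. Translating back gives $\delta_1\le u^\epsilon_x\le M_1$ with constants depending only on $\kappa,\Lambda,M,T$.
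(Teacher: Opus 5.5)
There is a genuine gap, and it begins with a sign error in Step 3. You correctly note $-\partial_yV^\epsilon\ge0$ in the backward equation $p_t+\partial_y(D^\epsilon p_y/p^2)=-\partial_yV^\epsilon$. After setting $\tau=T-t$, however, the equation reads $\partial_\tau p=\partial_y(D^\epsilon p_y/p^2)+\partial_yV^\epsilon$, so the forward-time source is \emph{nonpositive}. Comparison with the constant $1$ therefore gives $p\le1$, i.e.\ $\frac{\partial u^\epsilon}{\partial x}\ge1$, not $\le1$. This is consistent with $X^\epsilon(t,y)\approx y+\int_t^TV^\epsilon(s,y)\,\md s$. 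It is also visible directly from your $w$-equation: $-\partial_yV^\epsilon(t,u^\epsilon)w^2\ge0$ makes $w$ a supersolution of a linear equation with no zeroth-order term, so $w\ge1$.

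So the cheap side is the \emph{lower} bound (indeed with $\delta_1=1$), and the bound you attack with Duhamel, $p\le\delta_1^{-1}$, is the trivial one. The hard bound is $p\ge M_1^{-1}$, i.e.\ $\frac{\partial u^\epsilon}{\partial x}\le M_1$, where the concentrated source drives $p$ toward $0$. There your Duhamel/Aronson argument gives only $p(\tau,y)\ge1-C\lambda_{\max}^2(\gamma_2-\gamma_1)\sqrt{\tau}$, with $C=C(\kappa)$ since $D^\epsilon/p^2\ge\kappa$ once $p\le1$. This is vacuous unless $T$ is small. The loss is additive, so restarting in time does not help. A linear kernel bound also cannot see the only mechanism preventing degeneration, namely the growth of the diffusivity $D^\epsilon/p^2$ where $p$ is small. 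Even with your sign convention, the continuity argument would not close for general $T$: the on-diagonal kernel bound depends on the lower ellipticity constant $\kappa/K^2$, so your right-hand side grows at least linearly in $K$. The $\epsilon$-uniform upper bound on $\frac{\partial u^\epsilon}{\partial x}$ is therefore not established.

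The missing idea is that the $O(\epsilon^{-1})$ terms you found ``hopeless'' for $w=\frac{\partial u^\epsilon}{\partial x}$ are exactly what one gets by expanding a divergence. Indeed $w$ solves
\begin{align*}
\frac{\partial w}{\partial t}+\frac{\partial}{\partial x}\Bigl(\tilde D^\epsilon\frac{\partial w}{\partial x}\Bigr)-\frac{\partial}{\partial x}\bigl(\tilde V^\epsilon w\bigr)=0,\qquad w(T,\cdot)=1,
\end{align*}
with $\tilde D^\epsilon(t,x)=D^\epsilon(t,u^\epsilon(t,x))\in[\kappa,\Lambda]$ and $|\tilde V^\epsilon|\le M$. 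Viewed as a \emph{linear} equation in $w$ with bounded measurable coefficients, no derivatives of $D^\epsilon$ or $V^\epsilon$ appear. Aronson's two-sided Gaussian bounds for its fundamental solution, with constants depending only on $\kappa,\Lambda,M,T$, then give $\delta_1\le w\le M_1$ at once. This is the paper's proof. The hodograph transform trades this linear structure for a nonlinear equation with known coefficients, which is the wrong trade here. (Minor: the coefficient of $w_x$ in Step 1 is $\partial_yD^\epsilon\,u^\epsilon_x-V^\epsilon$, without the factor $2$.)
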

\begin{proof}

Define $w^\epsilon(t, x) := \frac{\partial u^\epsilon}{\partial x}(T-t, x)$. Based on Corollary \ref{cor:uepsilon_properties}, $\frac{\partial}{\partial x}\left(\frac{\partial u^\epsilon}{\partial t}\right)$ exists and $w^\epsilon$ is a strong solution to the following linear parabolic PDE in divergence form:
\begin{align*}
\begin{cases}
\frac{\partial w^\epsilon}{\partial t} = \frac{\partial}{\partial x} \left( \tilde{D}^\epsilon(T-t, x) \frac{\partial w^\epsilon}{\partial x} \right) - \frac{\partial}{\partial x} \left( \tilde{V}^\epsilon(T-t, x) w^\epsilon \right), \\
w^\epsilon(0, x) = 1.
\end{cases}
\end{align*}
Because $w^\epsilon(0,x)$, $\tilde D^\epsilon$ and $\tilde V^\epsilon$ are bounded, and $\tilde D^\epsilon$ satisfies the uniform ellipticity condition, \citet*{aronson1968non}[Theorem 10 (ii) and (vi)] ensures the existence of a globally defined weak fundamental solution $G^\epsilon(x, t; \zeta, 0)$ such that  $w^\epsilon(t, x) = \int_{\mathbb{R}} G^\epsilon(x, t; \zeta, 0) w^\epsilon(0, \zeta) \md \zeta$. Moreover, there exist two positive constants $c_1$ and $c_2$ depending only on $\kappa$, $\Lambda$, $M$, and $T$ such that
$G^\epsilon(x, t; \zeta, 0) \ge \frac{c_1}{\sqrt{t}} \exp\left( -c_2 \frac{(x-\zeta)^2}{t} \right)$.
Therefore,
\begin{align*}
w^\epsilon(t, x) &= \int_{\mathbb{R}} G^\epsilon(x, t; \zeta, 0) w^\epsilon(0, \zeta) \md \zeta = \int_{\mathbb{R}} G^\epsilon(x, t; \zeta, 0) \md \zeta\geq \int_{\mathbb{R}} \frac{c_1}{\sqrt{t}} \exp\left( -c_2 \frac{(x-\zeta)^2}{t} \right) \md \zeta= c_1 \sqrt{\frac{\pi}{c_2}}.
\end{align*}
Hence, taking $\delta_1 := c_1 \sqrt{\frac{\pi}{c_2}}$ yields the lower bound.  The upper bound $\frac{\partial u^\epsilon}{\partial x}(t,x) \le M_1$ can be proved similarly.
\end{proof}

\vskip 5pt
\textbf{Step 2: Strong Solution and Continuous Curve of Separation Boundary.}
Ideally, one would apply standard $L^p$ or Schauder estimates to obtain uniform regularity bounds for the approximating solutions $\{u^\epsilon\}$. However, the versions of these estimates relevant here require sufficiently regular coefficients, in particular Hölder continuity in time, whereas \(\gamma^\epsilon\) is not Hölder continuous with respect to \(t\) under our assumptions.       

To overcome this difficulty, we employ an energy method to derive uniform estimates for $\{u^\epsilon\}$ locally in $W^{1,2}_2$. These \emph{a priori} bounds are obtained directly from the equation and do not rely on the continuity of the coefficients. As a consequence, we obtain compactness of the sequence and the following convergence result.
\begin{theorem}\label{existence-u}
There exists a sequence $\{u^{\epsilon_k}\}$ and a function $$u\in W^{1,2}_{2,\mathrm{loc}}(Q_T)\cap C^{\alpha/2, \alpha}_{\mathrm{loc}}(Q_T)\quad \text{for all }\alpha\in\left(0,\frac{1}{2}\right)$$  such that, on every compact subset $K\subset Q_T$, $u^{\epsilon_k}$ converges to $u$ weakly in $W^{1,2}_2(K)$ and strongly in $C^{\alpha/2, \alpha}(K)$  for any $\alpha < \frac{1}{2}$. Furthermore, $u$ satisfies $|u(t, x) - x| \le M(T - t)$ on $Q_T$, and $\delta_1 \leq \frac{\partial u}{\partial x} \leq M_1$ a.e. in $Q_T$.

\end{theorem}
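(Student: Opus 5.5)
The plan is to derive $\epsilon$-uniform local $W^{1,2}_2$ bounds on $\{u^\epsilon\}$ by an energy method, extract a weakly convergent subsequence by compactness, and then pass the pointwise bounds to the limit.

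\textbf{Uniform bounds available from earlier results.} By Theorem \ref{epsilon-existence}(i), $|u^\epsilon(t,x)-x|\le M(T-t)$. By Proposition \ref{lma:bound:'}, $\delta_1\le \partial_x u^\epsilon\le M_1$, uniformly in $\epsilon$. Hence on any compact $K=[0,T]\times[-R,R]$, the norms $\|u^\epsilon\|_{L^\infty(K)}$ and $\|\partial_x u^\epsilon\|_{L^\infty(K)}$ are bounded independently of $\epsilon$.

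\textbf{Energy estimate for $\partial_x^2 u^\epsilon$.} Write the PDE as
\[
\partial_t u^\epsilon=-D^\epsilon\,\partial_x^2 u^\epsilon+V^\epsilon\,\partial_x u^\epsilon .
\]
The aim is an $\epsilon$-uniform bound on $\int_K|\partial_x^2u^\epsilon|^2$. Take a cutoff $\zeta\in C_c^\infty(\mathbb{R})$ with $\zeta=1$ on $[-R,R]$, and set $w^\epsilon=\partial_x u^\epsilon$. Multiply the PDE by $-\partial_x(\zeta^2 w^\epsilon)$ and integrate over $[t,T]\times\mathbb{R}$. This is legitimate by Corollary \ref{cor:uepsilon_properties}, since $w^\epsilon\in W^{1,2}_\infty$ and so $\partial_t w^\epsilon=\partial_x\partial_t u^\epsilon$.

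After integrating by parts in $x$:
\begin{itemize}
\item the time term gives $-\tfrac12\frac{d}{dt}\int\zeta^2 (w^\epsilon)^2$;
\item the diffusion term gives $\int\zeta^2D^\epsilon(\partial_x w^\epsilon)^2$ plus a cross term $\int 2\zeta\zeta' D^\epsilon w^\epsilon\partial_xw^\epsilon$;
\item the drift term contributes $\int V^\epsilon w^\epsilon\,\partial_x(\zeta^2w^\epsilon)$.
\end{itemize}
Only the bounds $\kappa\le D^\epsilon\le\Lambda$ and $|V^\epsilon|\le M$ are used, never derivatives of the coefficients. Young's inequality then absorbs the cross and drift terms into $\tfrac{\kappa}{2}\int\zeta^2(\partial_xw^\epsilon)^2$, using $|w^\epsilon|\le M_1$ and the terminal value $w^\epsilon(T)=1$. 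This yields
\[
\int_0^T\!\!\int\zeta^2|\partial_x^2u^\epsilon|^2\le C(R),
\]
independent of $\epsilon$ and $m$. The equation then gives $\|\partial_tu^\epsilon\|_{L^2(K)}\le \Lambda\|\partial_x^2u^\epsilon\|_{L^2(K)}+MM_1|K|^{1/2}$. So $\{u^\epsilon\}$ is bounded in $W^{1,2}_2(K)$ for each compact $K$.

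\textbf{Compactness, convergence and passing the bounds to the limit.} Take a diagonal sequence over $K_R$ with $R\in\mathbb{N}$ and use weak compactness in $W^{1,2}_2(K_R)$ to obtain $u^{\epsilon_k}\rightharpoonup u$ weakly.

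For Hölder convergence I would use the anisotropic Sobolev embedding in one space dimension: $W^{1,2}_2$ embeds into $C^{\alpha/2,\alpha}$ for $\alpha<1/2$. Concretely, $x$-Lipschitz bounds are uniform, and time Hölder bounds follow from $\|\partial_tu^\epsilon\|_{L^2}$ combined with interpolation. Arzelà–Ascoli then gives uniform convergence, and compactness of the embedding $C^{\beta/2,\beta}\hookrightarrow C^{\alpha/2,\alpha}$ for $\alpha<\beta<1/2$ gives strong convergence in $C^{\alpha/2,\alpha}(K)$.

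Uniform convergence passes $|u-x|\le M(T-t)$ to the limit. The bounds $\delta_1\le\partial_xu\le M_1$ a.e. follow because weak limits preserve closed convex constraints: $\partial_xu^{\epsilon_k}\rightharpoonup\partial_xu$ in $L^2(K)$, and the set $\{\delta_1\le f\le M_1\}$ is weakly closed.

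\textbf{Main obstacle.} I expect the energy estimate to be the delicate step. The cutoff cross terms must be absorbed using only ellipticity and the $L^\infty$ gradient bound. Care is also needed to justify the integration by parts with a non-compactly supported $u^\epsilon$; I would handle this by working with $w^\epsilon$, which is bounded, rather than with $u^\epsilon$ itself.
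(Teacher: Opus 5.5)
Your proposal is correct and follows essentially the same route as the paper. The paper uses the same ingredients: an energy estimate for $\partial_x^2u^\epsilon$ that uses only $\kappa\le D^\epsilon\le\Lambda$ and $|V^\epsilon|\le M$, the equation to bound $\partial_tu^\epsilon$ in $L^2_{\mathrm{loc}}$, weak compactness with a diagonal argument, the compact embedding $W^{1,2}_2(K)\hookrightarrow C^{\alpha/2,\alpha}(K)$, and preservation of the gradient bounds under weak limits.

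The differences are minor. The paper uses the global weight $\phi=e^{-\sqrt{1+x^2}}$ with test function $\phi^2\partial_x^2u^\epsilon$, substitutes the PDE into the resulting $\phi\phi'\partial_tu^\epsilon\partial_xu^\epsilon$ term, and closes with a backward Gronwall argument. Your compact cutoff with test function $\partial_x(\zeta^2\partial_xu^\epsilon)$ removes that cross term and lets you close directly with the already available bound $|\partial_xu^\epsilon|\le M_1$. One small slip: the signs you list come from multiplying by $+\partial_x(\zeta^2w^\epsilon)$, not $-\partial_x(\zeta^2w^\epsilon)$, but this overall sign is immaterial.
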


\begin{proof}
Define an exponentially decaying weight on $\mathbb{R}$ by
\begin{align*}
\phi(x) := e^{-  \sqrt{1+x^2}},\quad x\in\mR.
\end{align*}
It follows that $      (\phi'(x))^2 \le (\phi(x))^2\quad \text{for all }x\in\mR. $    
Multiplying both sides of the regularized PDE \eqref{eq:regularized} by the test function $\phi^2 \frac{\partial^2 u^\epsilon}{\partial x^2}$ and integrating over $\mathbb{R}$, we get
\begin{align}\label{eq:energy}
\int_{\mathbb{R}} \phi^2 \frac{\partial u^\epsilon}{\partial t} \frac{\partial^2 u^\epsilon}{\partial x^2} \md x + \int_{\mathbb{R}} \phi^2 D^\epsilon(t, u^\epsilon) \left(\frac{\partial^2 u^\epsilon}{\partial x^2}\right)^2 \md x = \int_{\mathbb{R}} \phi^2 V^\epsilon(t, u^\epsilon) \frac{\partial u^\epsilon}{\partial x} \frac{\partial^2 u^\epsilon}{\partial x^2} \md x.
\end{align}
Note that
\begin{align*}
\int_{\mathbb{R}} \phi^2 \frac{\partial u^\epsilon}{\partial t} \frac{\partial^2 u^\epsilon}{\partial x^2} \md x &= -\int_{\mathbb{R}} \frac{\partial}{\partial x}\left(\phi^2 \frac{\partial u^\epsilon}{\partial t}\right) \frac{\partial u^\epsilon}{\partial x} \md x= -\frac{1}{2} \frac{\md}{\md t} \int_{\mathbb{R}} \phi^2 \left(\frac{\partial u^\epsilon}{\partial x}\right)^2 \md x - 2\int_{\mathbb{R}} \phi \phi' \frac{\partial u^\epsilon}{\partial t} \frac{\partial u^\epsilon}{\partial x} \md x\\
&= -\frac{1}{2} \frac{\md}{\md t} \int_{\mathbb{R}} \phi^2 \left(\frac{\partial u^\epsilon}{\partial x}\right)^2 \md x+2\int_{\mathbb{R}} \phi \phi' D^\epsilon \frac{\partial^2 u^\epsilon}{\partial x^2} \frac{\partial u^\epsilon}{\partial x} \md x - 2\int_{\mathbb{R}} \phi \phi' V^\epsilon \left(\frac{\partial u^\epsilon}{\partial x}\right)^2 \md x,
\end{align*}
where the first equality follows from the exponential decay of $\phi$, the second equality from  $u^\epsilon \in C^{1,2}(Q_T)$, and the last equality results from \eqref{eq:regularized}.
Substituting this back into \eqref{eq:energy}, we obtain
\begin{align}\label{eq:energy:1}
&-\frac{1}{2} \frac{\md}{\md t} \int_{\mathbb{R}} \phi^2 \left(\frac{\partial u^\epsilon}{\partial x}\right)^2 \md x + \int_{\mathbb{R}} \phi^2 D^\epsilon \left(\frac{\partial^2 u^\epsilon}{\partial x^2}\right)^2 \md x \nonumber\\= &\int_{\mathbb{R}} \phi^2 V^\epsilon \frac{\partial u^\epsilon}{\partial x} \frac{\partial^2 u^\epsilon}{\partial x^2} \md x - 2\int_{\mathbb{R}} \phi \phi' D^\epsilon \frac{\partial^2 u^\epsilon}{\partial x^2} \frac{\partial u^\epsilon}{\partial x} \md x  + 2\int_{\mathbb{R}} \phi \phi' V^\epsilon \left(\frac{\partial u^\epsilon}{\partial x}\right)^2 \md x .
\end{align}
Using the uniform bounds $D^\epsilon \ge \kappa$ and $|V^\epsilon| \le M$ and  applying Young’s inequality $ab \le \frac{\beta}{2}a^2 + \frac{1}{2\beta}b^2$ with $\beta = \frac{\kappa}{2}$ to the first two terms on the right-hand side of \eqref{eq:energy:1},  we have 
\begin{align*}
&\left| \int_{\mathbb{R}} \phi^2 V^\epsilon \frac{\partial u^\epsilon}{\partial x} \frac{\partial^2 u^\epsilon}{\partial x^2} \md x \right| \le \frac{\kappa}{4} \int_{\mathbb{R}} \phi^2 \left(\frac{\partial^2 u^\epsilon}{\partial x^2}\right)^2 \md x + \frac{M^2}{\kappa} \int_{\mathbb{R}} \phi^2 \left(\frac{\partial u^\epsilon}{\partial x}\right)^2 \md x,\\
&\left| - 2\int_{\mathbb{R}} \phi \phi' D^\epsilon \frac{\partial^2 u^\epsilon}{\partial x^2} \frac{\partial u^\epsilon}{\partial x} \md x \right| \le \frac{\kappa}{4} \int_{\mathbb{R}} \phi^2 \left(\frac{\partial^2 u^\epsilon}{\partial x^2}\right)^2 \md x + \frac{4\Lambda^2}{\kappa} \int_{\mathbb{R}} (\phi')^2 \left(\frac{\partial u^\epsilon}{\partial x}\right)^2 \md x.
\end{align*}
For the third term, it holds by the Cauchy–Schwarz inequality that
\begin{align*}
\left| 2\int_{\mathbb{R}} \phi \phi' V^\epsilon \left(\frac{\partial u^\epsilon}{\partial x}\right)^2 \md x \right| \le M \int_{\mathbb{R}} \left(\phi^2 + (\phi')^2\right) \left(\frac{\partial u^\epsilon}{\partial x}\right)^2 \md x.
\end{align*}
Hence
\begin{align}\label{eq:energy:2}
&-\frac{1}{2} \frac{\md}{\md t} \int_{\mathbb{R}} \phi^2 \left(\frac{\partial u^\epsilon}{\partial x}\right)^2 \md x + \frac{\kappa}{2} \int_{\mathbb{R}} \phi^2 \left(\frac{\partial^2 u^\epsilon}{\partial x^2}\right)^2 \md x \nonumber\\
\le & \int_{\mathbb{R}} \left[ \frac{M^2}{\kappa}\phi^2 + \frac{4\Lambda^2}{\kappa}(\phi')^2 + M(\phi^2 + (\phi')^2) \right] \left(\frac{\partial u^\epsilon}{\partial x}\right)^2 \md x\leq C_1 \int_{\mathbb{R}} \phi^2 \left(\frac{\partial u^\epsilon}{\partial x}\right)^2 \md x
\end{align}
for some constant $C_1>0$ depending only on $\kappa, M$ and $\Lambda$,  where we have used $(\phi')^2 \le \phi^2$, $D^\epsilon \ge \kappa$ and substituted the derived estimates for the three integral terms back into \eqref{eq:energy:1}.
Define $Y^\epsilon(t):=\int_{\mathbb{R}} \phi^2 \left(\frac{\partial u^\epsilon}{\partial x}(t,x)\right)^2 \md x$. Corollary \ref{cor:uepsilon_properties} ensures that $Y^\epsilon(\cdot)$ is absolutely continuous. Then \eqref{eq:energy:2} implies that, for a.e. $t \in [0, T]$,
\begin{align*}
-\frac{1}{2} (Y^\epsilon)'(t) \le C_1 Y^\epsilon(t).
\end{align*}
Applying the backward Gronwall inequality on $[t, T]$, we obtain
\begin{align*}
Y^\epsilon(t) \le M_{\phi} e^{2C_1 (T-t)} \quad \text{for all } t\in[0,T],
\end{align*} 
where $M_{\phi}:=Y^\epsilon(T)=\int_{\mathbb{R}} e^{-2  \sqrt{1+x^2}} \md x$.
Integrating the inequality \eqref{eq:energy:2} from $0$ to $T$, we have
\begin{align*}
\frac{\kappa}{2} \int_0^T \int_{\mathbb{R}} \phi^2 \left(\frac{\partial^2 u^\epsilon}{\partial x^2}\right)^2 \md x \md s \le \frac{1}{2} Y^\epsilon(T) + C_1 \int_0^T Y^\epsilon(s) \md s \le \frac{1}{2} M_\phi + C_1 M_\phi e^{2C_1 T} T.
\end{align*}
On any compact subset $K \subset Q_T$, the continuous positive function $\phi(x)$ has a strict positive lower bound $c_K:=\inf\limits_{(t,x) \in K} \phi(x) > 0$. We deduce the uniform bound of the local Lebesgue norm:
\begin{align}\label{2-bound}
\iint_K \left(\frac{\partial^2 u^\epsilon}{\partial x^2}\right)^2 \md x \md t \le \frac{1}{c_K^2} \int_0^T \int_{\mathbb{R}} \phi^2 \left(\frac{\partial^2 u^\epsilon}{\partial x^2}\right)^2 \md x \md s \le \frac{2}{\kappa c_K^2}\left(\frac{1}{2} M_\phi + C_1 M_\phi e^{2C_1 T} T\right).
\end{align} 
We obtain that the local $L^2(K)$ norm of $\frac{\partial u^\epsilon}{\partial t}$ is uniformly bounded in $\epsilon$ thanks to $(a+b)^2 \le 2a^2 + 2b^2$. Hence, $u^\epsilon$ is uniformly bounded in the local parabolic Sobolev space $W^{1,2}_2(K)$.

By virtue of these uniform local energy estimates, the Banach–Alaoglu theorem and a diagonal argument can be invoked to yield a subsequence $\{u^{\epsilon_k}\}$ and a $u\in W^{1,2}_2(K)$ such that $u^{\epsilon_k} \rightharpoonup u$ in $W^{1,2}_2(K)$ for any compact $K \subset Q_T$ because $W^{1,2}_2(K)$ is reflexive. We further have $u^{\epsilon_k} \to u$ strongly in $C^{\alpha/2, \alpha}(K)$, implying that $u \in C^{\alpha/2, \alpha}(K)$ by the compact embedding $W^{1,2}_2(K) \hookrightarrow C^{\alpha/2, \alpha}(K)$ for any $\alpha < 1/2$.

Finally, the uniform convergence yields $|u(t, x) - x| \le M(T - t)$ on $Q_T$, and the weak convergence preserves $\delta_1 \leq \frac{\partial u}{\partial x} \leq M_1$ in view of $u^{\epsilon_k} \rightharpoonup u$.
\end{proof}

The next result ensures the absolute continuity of the strong solution $u$ on one-dimensional sections.

\begin{lemma}\label{lem:acl}
Let $u\in W_{2,\mathrm{loc}}^{1,2}(Q_{T})$. Then
\begin{enumerate}[label=(\arabic*), font=\upshape]
    \item[(i)] For a.e. $t \in [0,T]$, $u(t,\cdot) \in W_{2,\mathrm{loc}}^2(\mathbb{R})$, and $\frac{\partial u}{\partial x}(t,\cdot)$ is absolutely continuous on any finite interval.
    \item[(ii)] For a.e. $x \in \mathbb{R}$, $u(\cdot,x) \in W_2^1([0,T])$, and $u(\cdot,x)$ is absolutely continuous on $[0,T]$.
\end{enumerate}
\end{lemma}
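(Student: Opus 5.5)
This is the standard ACL (absolutely continuous on lines) characterization of Sobolev functions combined with Fubini's theorem. I would localize first: fix $R\in\mathbb{N}$ and the rectangle $K_R:=[0,T]\times[-R,R]$. Then $u$, $\frac{\partial u}{\partial t}$, $\frac{\partial u}{\partial x}$ and $\frac{\partial^2 u}{\partial x^2}$ all lie in $L^2(K_R)$. Once a property is shown for a.e.\ section inside each $K_R$, taking the countable union of the exceptional null sets over $R$ gives it on all of $\mathbb{R}$ and on every finite interval.

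For (ii), the main tool is the ACL theorem: a function $w\in L^2(K_R)$ whose weak derivative $\partial_t w$ lies in $L^2(K_R)$ has a representative that is absolutely continuous in $t$ for a.e.\ fixed $x$. Moreover, its classical partial derivative coincides a.e.\ with the weak one. To prove this directly I would mollify, setting $u_n:=u*\rho_n$ on slightly smaller rectangles, and use $u_n\to u$ and $\partial_t u_n\to \partial_t u$ in $L^2$. By Fubini, for a.e.\ $x$ a subsequence satisfies $u_n(\cdot,x)\to u(\cdot,x)$ and $\partial_t u_n(\cdot,x)\to\partial_t u(\cdot,x)$ in $L^2([0,T])$. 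For smooth $u_n$ we have the identity $u_n(t,x)=u_n(s,x)+\int_s^t\partial_t u_n(\tau,x)\,\md\tau$. Passing to the limit gives the same identity for $u$ with $\partial_tu(\cdot,x)\in L^2([0,T])$. Fubini also gives $u(\cdot,x)\in L^2([0,T])$ for a.e.\ $x$. Hence $u(\cdot,x)\in W^1_2([0,T])$, and it is absolutely continuous after choosing the continuous representative. Since Theorem \ref{existence-u} already gives $u\in C^{\alpha/2,\alpha}_{\mathrm{loc}}$, this representative is $u$ itself, which removes any ambiguity. The main technical point to handle carefully is the time endpoints. To deal with them I would extend $u$ by reflection across $t=0$ and $t=T$, which preserves $W^{1}_2$ in $t$, before mollifying.

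For (i), I would apply the same argument in the $x$ variable twice. First, the argument applied to $w=u$ with the weak derivative $\partial_x u$ shows that for a.e.\ $t$, $u(t,\cdot)$ is absolutely continuous on $[-R,R]$ with a.e.\ derivative $\partial_x u(t,\cdot)\in L^2$. Next, the argument applied to $w=\partial_x u$, whose weak $x$-derivative $\partial_x^2 u$ lies in $L^2(K_R)$, shows that for a.e.\ $t$ there is a representative of $\partial_x u(t,\cdot)$ that is absolutely continuous on $[-R,R]$ with derivative $\partial_x^2u(t,\cdot)\in L^2$.

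The step needing the most care is matching representatives in (i). We must check that the absolutely continuous version of $\partial_x u(t,\cdot)$ is indeed the a.e.\ derivative of the section $u(t,\cdot)$. I would verify this through the integral identity $u(t,b)-u(t,a)=\int_a^b\partial_xu(t,y)\,\md y$, which holds for a.e.\ $t$ by the first application. This identity is insensitive to changes of $\partial_xu(t,\cdot)$ on null sets. Consequently $u(t,\cdot)\in C^1$ with an absolutely continuous derivative, so $u(t,\cdot)\in W^2_{2,\mathrm{loc}}(\mathbb{R})$. Finally, I would intersect the full-measure sets of $t$ over all $R$.
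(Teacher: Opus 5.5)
Your proposal is correct and follows essentially the same route as the paper: approximate by smooth functions, use Fubini to get $L^2$ convergence on a.e.\ section, and pass to the limit. The paper obtains simultaneous section-wise convergence of $u^k$, $\partial_x u^k$, $\partial_{xx}u^k$ in $L^2([-N,N])$ and invokes closedness of weak derivatives, while you pass to the limit in the integral identity twice and then match representatives, which is equivalent. One small remark: the lemma is also applied to $\varphi^\epsilon=\tilde u^\epsilon-u^\epsilon$ in Lemma~\ref{lem:uniqueness_strong}, so instead of appealing to Theorem~\ref{existence-u} for continuity, use that any $u\in W^{1,2}_{2,\mathrm{loc}}(Q_T)$ has a continuous representative by the embedding into $C^{\alpha/2,\alpha}_{\mathrm{loc}}$.
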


\begin{proof}
See Appendix \ref{appendix:lma:smooth:method}.
\end{proof}

Building upon these regularity results, we can now confirm the existence of a continuous curve of separation boundary induced by the population aggregation.

\begin{proposition}\label{lem:dividing_curve}
There exists a continuous function $b^* \in C([0,T])$ such that 
$\{(t,x) \in Q_{T} : u(t,x) = R(t)\}=\{(t,b^*(t)):t\in[0,T]\}$. Moreover, $u$ is a strong solution of the PDE \eqref{eq:pde}.
\end{proposition}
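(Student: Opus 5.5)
The proof has two parts: constructing $b^*$ from the monotonicity of $u$, and passing to the limit in the regularized equations \eqref{eq:regularized}. The main difficulty is the discontinuity of $\gamma$ along the curve, which I handle by showing the level set $\{u=R\}$ is Lebesgue-null.

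\textbf{Construction of $b^*$.} By Theorem \ref{existence-u}, $u$ is continuous on $Q_T$ and, via Lemma \ref{lem:acl} together with continuity, satisfies $\delta_1(x_2-x_1)\le u(t,x_2)-u(t,x_1)\le M_1(x_2-x_1)$ for every $t$ and every $x_1<x_2$. The a.e. bound $\delta_1\le \partial u/\partial x$ integrates along a.e. $t$-slice, and continuity extends the inequality to all $t$. Hence $u(t,\cdot)$ is a strictly increasing bijection of $\mathbb{R}$, and for each $t$ there is a unique $b^*(t)$ with $u(t,b^*(t))=R(t)=P_0(t)m(t)$; note $R$ is continuous.

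For continuity of $b^*$, fix $t_n\to t$. The bound $|u(t,x)-x|\le M(T-t)$ gives $|b^*(t_n)|\le \|R\|_\infty+MT$, so the sequence is bounded. On the compact set $[0,T]\times[-C,C]$, $u$ is uniformly continuous, so any limit point $b$ satisfies $u(t,b)=R(t)$. Uniqueness then forces $b=b^*(t)$. Alternatively, the quantitative estimate $\delta_1|b^*(t_n)-b^*(t)|\le |u(t_n,b^*(t_n))-u(t,b^*(t_n))|+|R(t_n)-R(t)|$ gives the same conclusion. This yields the identity $\{u=R\}=\mathrm{graph}(b^*)$.

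\textbf{Limit equation.} This step is the main obstacle. On a compact $K$, $\partial_t u^{\epsilon_k}\rightharpoonup \partial_t u$ and $\partial_{xx}u^{\epsilon_k}\rightharpoonup\partial_{xx}u$ weakly in $L^2(K)$. For the product terms, I want $D^{\epsilon_k}(t,u^{\epsilon_k})\to D(t,u)$ and $V^{\epsilon_k}(t,u^{\epsilon_k})\to V(t,u)$ a.e. (hence strongly in $L^2(K)$ by boundedness), and $\partial_x u^{\epsilon_k}\to\partial_x u$ strongly in $L^2(K)$. The latter follows from the compact embedding, since $\partial_x u^{\epsilon_k}$ is bounded in $W^{1}_2$ in $x$ and in time one can use the local Aubin–Lions lemma; otherwise, the Hölder convergence $u^{\epsilon_k}\to u$ plus interpolation suffices. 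Strong times weak convergence then passes to the limit in the weak form against test functions, giving the equation a.e.

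To get the a.e. convergence of the coefficients, fix $(t,x)$ with $u(t,x)\ne R(t)$, i.e. $x\ne b^*(t)$. Say $u(t,x)/P_0(t)>m(t)+\eta$ for some $\eta>0$. Uniform convergence of $u^{\epsilon_k}$ on compacts yields $u^{\epsilon_k}(t,x)/P_0(t)>m(t)+\eta/2$ for large $k$. Once the mollifier support radius $\epsilon_k\,\mathrm{supp}\,\omega$ is smaller than $\eta/2$, we have $\gamma^{\epsilon_k}(t,u^{\epsilon_k}/P_0)=\gamma_1=\gamma(t,u/P_0)$. Here I must check the sign convention, using that $\omega$ is compactly supported; symmetry is not needed, only a support radius $c\epsilon_k$. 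The case below the threshold is symmetric.

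The exceptional set is $\mathrm{graph}(b^*)$, which has two-dimensional Lebesgue measure zero by Fubini, since each $t$-slice is a single point. So the convergence holds a.e. on $K$. Consequently $u$ satisfies \eqref{eq:pde} a.e. in $Q_T$. The terminal condition $u(T,x)=x$ follows from $|u(t,x)-x|\le M(T-t)$, so $u$ is a strong solution in $W^{1,2}_{2,\mathrm{loc}}(Q_T)$.
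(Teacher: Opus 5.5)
Your proof is correct and follows the paper's argument. You get monotonicity of $u(t,\cdot)$ with rate $\delta_1$ from Lemma~\ref{lem:acl} plus continuity, deduce a unique continuous root $b^*$, and pass to the limit using a.e.\ convergence of the mollified coefficients off the null graph together with weak convergence of the derivatives. The only differences are minor: you prove continuity of $b^*$ directly via $\delta_1|b^*(t_n)-b^*(t)|\le |u(t_n,b^*(t_n))-u(t,b^*(t_n))|+|R(t_n)-R(t)|$ instead of citing the implicit-function theorem. Also, the strong $L^2$ convergence of $\partial u^{\epsilon_k}/\partial x$ you invoke is not needed: $V^{\epsilon_k}(t,u^{\epsilon_k})\psi\to V(t,u)\psi$ strongly in $L^2$ already pairs with the weak convergence of $\partial u^{\epsilon_k}/\partial x$, exactly as for the $D$-term.
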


\begin{proof}
Define the continuous function $F: Q_T \to \mathbb{R}$ by
\begin{align*}
F(t,x) = u(t,x) - R(t).
\end{align*}
There exists a set of full measure $\mathcal{T}_1 \subset [0,T]$ such that for each $t \in \mathcal{T}_1$, $u(t,\cdot) \in W_{2,\mathrm{loc}}^1(\mathbb{R})$ and its weak derivative coincides with $\frac{\partial u}{\partial x}(t,\cdot)$ by Lemma \ref{lem:acl}(i).

Moreover, because $\delta_1 \le \frac{\partial u}{\partial x}(t,x) \le M_1$ for a.e. $(t,x) \in Q_T$, Fubini's theorem yields another full-measure set $\mathcal{T}_2 \subset [0,T]$ such that for every $t \in \mathcal{T}_2$, $\delta_1 \le \frac{\partial u}{\partial x}(t,x) \le M_1$ for a.e. $x \in \mathbb{R}$. 
Let $\mathcal{T} := \mathcal{T}_1 \cap \mathcal{T}_2$. Then for each $t \in \mathcal{T}$, $u(t,\cdot) \in W_{2,\mathrm{loc}}^1(\mathbb{R})$ and $\delta_1 \le \frac{\partial u}{\partial x}(t,x) \le M_1$ for a.e. $x \in \mathbb{R}$.
Hence, for each $t\in\mathcal{T}$, $F(t,\cdot)$ is Lipschitz continuous and 
\begin{align*}
F(t,x_{2}) - F(t,x_{1}) = \int_{x_{1}}^{x_{2}} \frac{\partial u}{\partial x}(t,\zeta) \md\zeta \ge \delta_1(x_{2} - x_{1}) > 0\quad \text{for all } x_1<x_2.
\end{align*}
Both the Lipschitz continuity and the strict monotonicity extend to all $t \in [0,T]$ by the continuity of $F$ and the density of $\mathcal{T}$ in $[0,T]$. Consequently, according to Theorem 1H.3 in \cite{dontchev2009implicit}, $F(t,x)=0$ uniquely determines a root $x = b^*(t)$, which is continuous on $[0,T]$.

Note that the set $\{(t,b^*(t)) : t \in [0,T]\}$ has Lebesgue measure zero in $Q_T$, and $u^{\epsilon_k}$ converges pointwise to $u$, it follows that, for a.e. $(t,x)$,
\begin{align*}
D^{\epsilon_{k}}(t,u^{\epsilon_{k}}(t,x)) \to D(t,u(t,x)) \,\,\text{and}\,\, V^{\epsilon_{k}}(t,u^{\epsilon_{k}}(t,x)) \to V(t,u(t,x))  \quad \text{as }k\to\infty, 
\end{align*}
$D^{\epsilon_{k}}(t,u^{\epsilon_{k}})\psi \to D(t,u)\psi$ and $V^{\epsilon_{k}}(t,u^{\epsilon_{k}})\psi \to V(t,u)\psi$ in $L^{2}(Q_{T})$ by the dominated convergence theorem.
Multiplying \eqref{eq:regularized} by an arbitrary $\psi \in L^2(Q_T)$ with compact support and integrating over $Q_T$, we obtain
\begin{align*}
\iint_{Q_T} \frac{\partial u^{\epsilon_k}}{\partial t} \psi \md x \md t + \iint_{Q_T} D^{\epsilon_k}(t, u^{\epsilon_k}) \frac{\partial^2 u^{\epsilon_k}}{\partial x^2} \psi \md x \md t - \iint_{Q_T} V^{\epsilon_k}(t, u^{\epsilon_k}) \frac{\partial u^{\epsilon_k}}{\partial x} \psi \md x \md t = 0.
\end{align*}
 Combining this with the weak convergence of $\frac{\partial u^{\epsilon_k}}{\partial t}$, $\frac{\partial u^{\epsilon_k}}{\partial x}$, and $\frac{\partial^2 u^{\epsilon_k}}{\partial x^2}$ to $\frac{\partial u}{\partial t}$, $\frac{\partial u}{\partial x}$, and $\frac{\partial^2 u}{\partial x^2}$ locally in $L^2$ on the support of $\psi$, respectively, we obtain the limit equation:
\begin{align*}
\iint_{Q_{T}} \left( \frac{\partial u}{\partial t} + D(t,u) \frac{\partial^{2}u}{\partial x^{2}} - V(t,u) \frac{\partial u}{\partial x} \right) \psi \md x \md t = 0.
\end{align*}
Because such compactly supported $\psi \in L^{2}(Q_{T})$ is arbitrary, it follows that $u$ satisfies \eqref{eq:pde} a.e. in $Q_{T}$.
\end{proof}
\begin{proof}[{\bf Proof of Theorem \ref{thm: solving pde} }]
The proof follows from  Theorem~\ref{existence-u} and Proposition~\ref{lem:dividing_curve}. 
\end{proof}
With the globally continuous strong solution $u$ to the  PDE \eqref{eq:pde} at hand, we then construct the  solution to the original multi-dimensional FBSDE system \eqref{eq:fbsde_sufficiency} in the next subsection.

\subsubsection{Existence of Solution to  the FBSDE \eqref{eq:fbsde_sufficiency}}

Let $u \in W_{2,\mathrm{loc}}^{1,2}(Q_{T}) \cap C^{\alpha/2, \alpha}_{\mathrm{loc}}(Q_{T})$ for all $\alpha \in (0,\frac{1}{2})$ be the strong solution of the PDE \eqref{eq:pde}. Let $v\in C(Q_{T})$ be the unique function defined by $u(t,v(t,y))=y$. We introduce this spatial inverse mapping $v$ to transform the phase space, which is a key step for constructing the auxiliary process later. The following lemma rigorously establishes the Sobolev regularity of $v$, and gives explicit representations for its weak derivatives; the proof is deferred to Appendix \ref{appendix:lma:v:derivative}.

\begin{lemma}\label{lma:v:derivative}
$v \in W_{2,\mathrm{loc}}^{1,2}(Q_T) \cap C^{\alpha/2, \alpha}_{\mathrm{loc}}(Q_T)$ for all $\alpha \in (0,\frac{1}{2})$, and its weak derivatives are given by
\begin{align}\label{eq:derivative:1}
\frac{\partial v}{\partial y}(t, y) &= \left(\frac{\partial u}{\partial x}(t, v(t, y))\right)^{-1}, \quad
\frac{\partial^2 v}{\partial y^2}(t, y) = -\frac{\partial^2 u}{\partial x^2}(t, v(t, y))\left(\frac{\partial u}{\partial x}(t, v(t, y))\right)^{-3}, \\
\frac{\partial v}{\partial t}(t, y) &= -\frac{\partial u}{\partial t}(t, v(t, y)) \left(\frac{\partial u}{\partial x}(t, v(t, y))\right)^{-1}.
\label{eq:derivative:2}
\end{align}
\end{lemma}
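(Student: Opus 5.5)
The natural route is to transfer regularity from the smooth approximations $u^{\epsilon_k}$ of Theorem~\ref{existence-u} to their spatial inverses, and then pass to the limit. For each $k$, Proposition~\ref{lma:bound:'} gives $\delta_1\le \partial_x u^{\epsilon_k}\le M_1$. Hence $x\mapsto u^{\epsilon_k}(t,x)$ is a $C^2$ bi-Lipschitz bijection, and the classical inverse function theorem produces $v^{\epsilon_k}\in C^{1,2}$ with $u^{\epsilon_k}(t,v^{\epsilon_k}(t,y))=y$. Differentiating this identity classically yields exactly the formulas \eqref{eq:derivative:1}--\eqref{eq:derivative:2} with $u^{\epsilon_k},v^{\epsilon_k}$ in place of $u,v$.

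\textbf{Uniform estimates for $v^{\epsilon_k}$.} By the bi-Lipschitz bounds, $1/M_1\le\partial_y v^{\epsilon_k}\le 1/\delta_1$. From $|u^{\epsilon_k}(t,x)-x|\le M(T-t)$ we also get $|v^{\epsilon_k}(t,y)-y|\le M(T-t)/\delta_1$. Hence, for a compact $K=[0,T]\times[a,b]$, the image $v^{\epsilon_k}(K)$ lies in a fixed compact $K'$ independent of $k$. For the second-order terms we use the change of variables $x=v^{\epsilon_k}(t,y)$, whose Jacobian $\partial_y v^{\epsilon_k}$ lies in $[1/M_1,1/\delta_1]$. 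This gives
\[
\iint_K |\partial_y^2 v^{\epsilon_k}|^2\,\md y\,\md t\le \delta_1^{-6}M_1\iint_{K'}|\partial_x^2u^{\epsilon_k}|^2\,\md x\,\md t ,
\]
and similarly for $\partial_t v^{\epsilon_k}$ in terms of $\partial_t u^{\epsilon_k}$. The uniform local $W^{1,2}_2$ bound from the proof of Theorem~\ref{existence-u} then shows that $\{v^{\epsilon_k}\}$ is bounded in $W^{1,2}_2(K)$ for every compact $K$.

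\textbf{Identification of the limit.} Since $u^{\epsilon_k}\to u$ locally uniformly and the inverses are uniformly Lipschitz in $y$, we have $v^{\epsilon_k}\to v$ locally uniformly. Indeed,
\[
|v^{\epsilon_k}(t,y)-v(t,y)|\le \delta_1^{-1}\,\bigl|u^{\epsilon_k}(t,v(t,y))-u(t,v(t,y))\bigr| .
\]
By weak compactness (after a further subsequence), $v\in W^{1,2}_{2,\mathrm{loc}}$, and the compact embedding gives $v\in C^{\alpha/2,\alpha}_{\mathrm{loc}}$ for $\alpha<1/2$.

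The main obstacle is identifying the weak derivatives of $v$ as the stated compositions. The difficulty is that $\partial_x^2 u^{\epsilon_k}$ and $\partial_t u^{\epsilon_k}$ converge only weakly, while they are composed with the moving maps $v^{\epsilon_k}$. I would avoid limits of compositions entirely and argue directly with test functions, via the change of variables. Take $\psi\in C_c^\infty$ and compute
\[
\iint v\,\partial_y^2\psi\,\md y\,\md t=\lim_k\iint \partial_y^2 v^{\epsilon_k}\,\psi\,\md y\,\md t .
\]
Substituting $y=u^{\epsilon_k}(t,x)$ turns the right-hand side into
\[
-\iint \partial_x^2u^{\epsilon_k}(t,x)\,\bigl(\partial_xu^{\epsilon_k}(t,x)\bigr)^{-2}\,\psi\bigl(t,u^{\epsilon_k}(t,x)\bigr)\,\md x\,\md t .
\]
Here $\psi(t,u^{\epsilon_k})\to\psi(t,u)$ uniformly. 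The factor $(\partial_x u^{\epsilon_k})^{-2}$ needs strong $L^2_{\mathrm{loc}}$ convergence, and this follows from Rellich compactness: $\partial_x u^{\epsilon_k}$ is bounded in $W^{1}_{2,\mathrm{loc}}$ in $x$, and in $t$ it is controlled through the Aubin--Lions lemma using the bound on $\partial_t u^{\epsilon_k}$. This yields $\partial_xu^{\epsilon_k}\to\partial_xu$ strongly, hence a.e., and the bounds make $(\partial_x u^{\epsilon_k})^{-2}$ converge strongly too. A strong-times-weak product then passes to the limit. Undoing the substitution, now for $u$, which is legitimate by Lemma~\ref{lem:acl} together with the bi-Lipschitz property of $u(t,\cdot)$, gives the formula for $\partial_y^2v$. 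The formulas for $\partial_y v$ and $\partial_t v$ follow the same way; $\partial_y v$ is simplest because its factor already converges strongly.
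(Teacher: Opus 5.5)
Your argument is correct, but it takes a different route from the paper's.

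\textbf{The paper's route.} It does not use the PDE approximants at all. It mollifies the limit $u$ itself in $(t,x)$, setting $u^k=u*\tilde\omega_k$. This sequence converges to $u$ \emph{strongly} in $W^{1,2}_2(K)$ and still satisfies $\delta_1\le\partial_x u^k\le M_1$. The paper then shows directly that the derivatives of the inverses $v^k$ converge in $L^2(K)$ to the stated compositions. It splits the error into two parts:
\begin{enumerate}
\item a term controlled by the change of variables $x=v^k(t,y)$ together with dominated convergence;
\item a term $F(t,v^k)-F(t,v)$, handled by approximating $F=\partial_x^2u\,(\partial_xu)^{-3}$ in $L^2(K')$ by a function in $C_c$.
\end{enumerate}

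\textbf{Your route.} You keep the approximants $u^{\epsilon_k}$, which converge only weakly. You avoid composing weakly convergent derivatives with the moving maps $v^{\epsilon_k}$ by moving every test-function integral to the $x$-side via $y=u^{\epsilon_k}(t,x)$. This turns the identification into a weak-times-strong product. The price is an extra compactness step: Aubin--Lions, using that $\partial_t\partial_xu^{\epsilon_k}=\partial_x\partial_tu^{\epsilon_k}$ is bounded in $L^2(0,T;H^{-1})$, gives $\partial_xu^{\epsilon_k}\to\partial_xu$ strongly. That step is only needed for $\partial_y^2v$. For $\partial_yv$ and $\partial_tv$ the factor $\partial_xu^{\epsilon_k}$ cancels after the substitution, so weak convergence of $\partial_tu^{\epsilon_k}$ and uniform convergence of $\psi(t,u^{\epsilon_k})$ suffice.

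\textbf{What each buys.} The paper's argument is intrinsic. It uses only $u\in W^{1,2}_{2,\mathrm{loc}}$ with $\delta_1\le\partial_xu\le M_1$ a.e., so it applies to any such function. It is also the same device the paper reuses for chain rules in Proposition \ref{FBSDE-existence} and Lemma \ref{lem:uniqueness_strong}. In the latter, $\tilde u^\epsilon$ is an arbitrary strong solution with no approximating sequence of PDE solutions.

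Your argument needs $u$ to be a limit of $u^{\epsilon_k}$ with uniform local $W^{1,2}_2$ and derivative bounds. This holds both for $u$ from Theorem \ref{existence-u} and for $u_*$ from Proposition \ref{prop:m_*}, so it covers every place the lemma is used. In exchange, you get the $W^{1,2}_2(K)$ bound on $v$ for free from weak compactness, and you avoid the density argument for compositions with moving maps.
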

\noindent
To proceed, we also impose the following assumption.
\begin{assumption}\label{assumption-2}
 $m(\cdot)$ is Lipschitz continuous on $[0,T]$.
\end{assumption}
\noindent
We are now ready to address the solution of the FBSDE system \eqref{eq:fbsde_sufficiency}.
\begin{proposition}\label{FBSDE-existence}
Under Assumptions \ref{assumption-1} and \ref{assumption-2}, the FBSDE system \eqref{eq:fbsde_sufficiency} admits a  solution $(X, P, Q) \in L_{\mathbb{F}}^{2}(\Omega; C([0, T];\mathbb{R}))\times L_{\mathbb{F}}^{2}(\Omega; C([0, T];\mathbb{R}))\times L_{\mathbb{F}}^{2}(0, T; \mathbb{R}^d)$.
\end{proposition}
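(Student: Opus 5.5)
We construct the solution through the auxiliary process \eqref{eq:auxiliary process}, working in the transformed variable $y=u(t,x)$ so that the discontinuity of the diffusion coefficient sits on the curve $y=R(t)=P_0(t)m(t)$. Set $x(0):=v(0,P_0(0)\xi)$, which is well defined, $\mathcal F_0$-measurable and bounded by Lemma \ref{lma:v:derivative} and Theorem \ref{thm: solving pde}(1)--(2). We then seek a weak/strong solution of the SDE $\md x(t)=\hat\sigma(t,x(t))^\top\md W(t)$ with $\hat\sigma$ from \eqref{eq:hat:sigma}. This SDE has no drift, a measurable diffusion coefficient, and is uniformly nondegenerate: $|\hat\sigma|\ge\gamma_1\lambda_{\min}>0$ and $|\hat\sigma|\le\gamma_2\lambda_{\max}$. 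In dimension one (for the scalar state driven by a $d$-dimensional $W$), we would rewrite it as $\md x=|\hat\sigma(t,x)|\,\md B$ with $B:=\int \hat\sigma^\top \md W/|\hat\sigma|$. Strong existence and pathwise uniqueness then follow from the Nakao/Le Gall-type results for one-dimensional SDEs whose diffusion coefficient is bounded below and of bounded variation in $x$. Here $|\hat\sigma(t,\cdot)|$ is a step function with a single jump at $b^*(t)$, so its total variation is uniform in $t$. Alternatively, we would construct $x$ as the limit of the smooth processes $x^\epsilon$ built from $u^\epsilon$. The $L^2$ integrability of $x$ is immediate since the noise coefficient is bounded.

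Next we define $X(t):=P_0^{-1}(t)u(t,x(t))$, $P(t):=x(t)-u(t,x(t))$ and $Q(t):=\gamma(t,X(t))\bigl(1-\frac{\partial u}{\partial x}(t,x(t))\bigr)\lambda(t)$, and apply an Itô formula to $u(t,x(t))$. The main obstacle is this step, since $u$ is only in $W^{1,2}_{2,\mathrm{loc}}$ and not $C^{1,2}$. We would first handle it with Krylov's generalized Itô formula: for a nondegenerate Itô process with bounded coefficients, Krylov's estimate $\mathbb E\int_0^T f(t,x(t))\md t\le N\|f\|_{L^{2}(K)}$ (in dimension one, $L^{d+1}=L^2$ for the space-time variable, localized by a stopping time at exit from a compact set) permits Itô's formula for $W^{1,2}_{2,\mathrm{loc}}$ functions. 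Concretely, we mollify $u$ into $u_n\to u$ in $W^{1,2}_{2}(K)$, apply the classical Itô formula, and pass to the limit, using Krylov's estimate to control $\int|\partial_t(u_n-u)|+|\partial_{xx}(u_n-u)|$ along the path. This yields
\begin{align*}
\md u(t,x(t))=\Bigl(\tfrac{\partial u}{\partial t}+\tfrac12|\hat\sigma|^2\tfrac{\partial^2 u}{\partial x^2}\Bigr)(t,x(t))\,\md t+\tfrac{\partial u}{\partial x}(t,x(t))\,\hat\sigma^\top\md W .
\end{align*}
Using the PDE \eqref{eq:pde}, which holds a.e. in $Q_T$ and therefore along the path $\md t\otimes\md\mathbb P$-a.e. by Krylov's estimate (null sets of $Q_T$ are not charged), the drift becomes $|\lambda|^2\gamma\,\frac{\partial u}{\partial x}$. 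Here we use $\gamma(t,u/P_0)=\gamma(t,X)$. Since $\frac{\partial u}{\partial x}$ is only a.e. defined, we fix a Borel version; by Krylov's estimate the choice does not affect the stochastic integrals. The usual removal of the localization uses the linear growth $|u(t,x)-x|\le M(T-t)$.

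It then remains to read off \eqref{eq:fbsde_sufficiency}. From $\md(P_0^{-1}u)$ one gets $\md X=rX\,\md t+P_0^{-1}\md u$. Writing $\frac{\partial u}{\partial x}\gamma\lambda=\gamma\lambda-Q$, the forward equation follows exactly, because $-P_0^{-1}[Q-\gamma\lambda]=P_0^{-1}\frac{\partial u}{\partial x}\gamma\lambda$. For $P=x-u(t,x(t))$ we get $\md P=-|\lambda|^2\gamma\frac{\partial u}{\partial x}\md t+(1-\frac{\partial u}{\partial x})\gamma\lambda^\top\md W=-(\gamma|\lambda|^2-\lambda^\top Q)\md t+Q^\top\md W$, and $P(T)=x(T)-u(T,x(T))=0$ by the terminal condition $u(T,x)=x$ and continuity of $u$. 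Finally, the required integrability holds: $Q$ is bounded because $\frac{\partial u}{\partial x}\le M_1$; $|P|\le MT$; $X\in L^2_{\mathbb F}(\Omega;C)$ by Doob's inequality; and $u$ is continuous, so $X$ and $P$ have continuous paths. Assumption \ref{assumption-2} on $m$ enters mainly through the regularity of $R$ and $b^*$, which we need for the pathwise-uniqueness/approximation argument (and later for atomlessness).
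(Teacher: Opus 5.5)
Your It\^o-formula step is sound and is essentially what the paper does. That step consists of Krylov's generalized It\^o formula for $W^{1,2}_{2,\mathrm{loc}}$ functions, Krylov's estimate to avoid the null set where \eqref{eq:pde} fails, and then reading off $(X,P,Q)$ and their integrability. The gap is the step before it: strong existence and pathwise uniqueness for $\md x(t)=\hat\sigma(t,x(t))^\top\md W(t)$.

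After your rewriting, the coefficient is $|\lambda(t)|\bigl(\gamma_1\mathbbm{1}_{\{x>b^*(t)\}}+\gamma_2\mathbbm{1}_{\{x\le b^*(t)\}}\bigr)$, which jumps along the \emph{moving} curve $x=b^*(t)$. Nakao's theorem concerns time-homogeneous coefficients. Le Gall-type criteria need $|\sigma(t,x)-\sigma(t,y)|^2\le|f(x)-f(y)|$ for a single bounded increasing $f$ \emph{independent of $t$}. That fails here. If $b^*$ is not constant, then for any $x<y$ in its range there is a $t$ with $b^*(t)\in[x,y)$. This forces $f(y)-f(x)\ge\lambda_{\min}^2(\gamma_2-\gamma_1)^2$ for all such pairs, which is impossible for bounded $f$. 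So ``total variation uniform in $t$'' is not a sufficient condition for time-dependent coefficients.

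You also cannot run Le Gall's local-time argument directly against the curve. The paper only establishes $b^*\in C([0,T])$, so $x-b^*$ is not a semimartingale. Your fallback (limits of $x^\epsilon$) yields only a weak solution via tightness. Upgrading it to an $\mathbb F$-adapted solution, which \eqref{eq:fbsde_sufficiency} and $\bar\pi\in L^2_{\mathbb F}$ require, via Gy\"ongy--Krylov needs exactly the missing pathwise uniqueness. Tellingly, Assumption \ref{assumption-2} does no actual work anywhere in your argument; no regularity of $b^*$ beyond continuity is available.

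The missing idea is the change of variables you announce but never use, pushed one step further. Set $Y=u(t,x(t))$ and then $Z:=Y-R(t)$, so the discontinuity sits at the \emph{fixed} point $Z=0$; this is the content of the paper's footnote. The shift produces the drift term $-R'(t)$, which is bounded precisely because $m$ is Lipschitz. That is where Assumption \ref{assumption-2} enters, not through $b^*$.

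The diffusion coefficient becomes $\lambda(t)\tilde\gamma(Z)\,\sigma^W(t,Z)$ with $\sigma^W(t,z)=\frac{\partial u}{\partial x}(t,v(t,z+R(t)))$. This is a time-independent bounded-variation factor times a factor with $\partial_z\sigma^W\in L^2_{\mathrm{loc}}$. For this product structure, \citet*[Theorem 2]{1980On} gives a unique strong solution. You then define $x(t):=v(t,Y(t))$ and apply the generalized It\^o formula to the inverse map $v\in W^{1,2}_{2,\mathrm{loc}}$ (Lemma \ref{lma:v:derivative}) to recover \eqref{eq:sde:x}. From there your remaining computations go through unchanged.
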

\begin{proof}
Define $\tilde{\gamma}(Z) := \gamma_1 \mathbbm{1}_{\{Z > 0\}}  + \gamma_2 \mathbbm{1}_{\{Z \leq 0\}}$ and consider the process $Z(t)$ governed by
\begin{align}\label{eq:sde:z}
\left\{
\begin{aligned}
&\md Z(t) = \left[ |\lambda(t)|^2\tilde{\gamma}(Z(t))\frac{\partial u}{\partial x}(t, v(t, Z(t)+R(t))) - R'(t) \right] \md t \\
&\quad \quad\quad\quad+ \tilde{\gamma}(Z(t))\frac{\partial u}{\partial x}(t, v(t, Z(t)+R(t))) \lambda(t)^\top \md W(t),\\
&Z(0) = P_0(0)\xi - R(0).
\end{aligned}
\right.
\end{align}
We first establish the well-posedness of $Z(t)$. To this end, we verify that the coefficients of the SDE \eqref{eq:sde:z} satisfy the conditions 1-5 of \citet*[Theorem 2]{1980On}. Both the drift and diffusion coefficients are bounded and Borel measurable because $|\lambda(t)|$ is bounded by $\lambda_{\max}$, $\tilde{\gamma}$ is bounded by $\gamma_2$, and $\sigma^W(t, Z) := \frac{\partial u}{\partial x}(t, v(t, Z+R(t)))$ is  bounded by $M_1$. Furthermore, exploiting a smooth approximation argument by the same way  as in the proof of  Lemma \ref{lma:v:derivative}, we can apply the chain rule to obtain the weak derivative $\frac{\partial \sigma^W}{\partial Z} = \frac{\partial^2 u}{\partial x^2}(t, v) \left(\frac{\partial u}{\partial x}(t, v)\right)^{-1}$. Because $\frac{\partial u}{\partial x} \ge \delta_1 > 0$ and $\frac{\partial^2 u}{\partial x^2} \in L_{\mathrm{loc}}^2$, it follows that $\frac{\partial \sigma^W}{\partial Z}$ belongs to $L_{\mathrm{loc}}^2$. Consequently, $\sigma^W$ belongs to $W_{2, \mathrm{loc}}^{0,1}$ and Condition 1 holds with $\beta = 1$.  In addition, define $\sigma^V(z) := \tilde{\gamma}(z)$. Then, for any $z_1, z_2 \in \mathbb{R}$, the diffusion coefficient $\tilde{\Sigma}(t, z):=\lambda(t)\sigma^V(z)\sigma^W(t, z)$ satisfies
\begin{align*}
|\tilde{\Sigma}(t,z_1) - \tilde{\Sigma}(t,z_2)|^2 \le\rho_1^2\left( |\sigma^W(t,z_1) - \sigma^W(t,z_2)|\right) + \rho_2^2\left( |\sigma^V(z_1) - \sigma^V(z_2)|\right),
\end{align*}
where $\rho_1(x) := \sqrt{4 \lambda_{\max}^2 \gamma_2^2 M_1 x}$ and $\rho_2(x) := \sqrt{4 \lambda_{\max}^2 M_1^2 \gamma_2 x}$ satisfy Conditions 3--5 in \citet*[Theorem 2]{1980On}.
Moreover, Condition 2 is satisfied because $\sigma^V$ is of bounded variation.
Therefore, \eqref{eq:sde:z} admits a unique strong solution based on  Theorem 2 and Remark 4 of \citet*{1980On}.

With the well-posedness of $Z(t)$, we now translate back to the original coordinate system. Define $Y(t) := Z(t) + R(t)$ \footnote{A direct application to $Y(t)$ is infeasible because \citet*[Theorem 2]{1980On} requires $\sigma^V$ to be independent of time $t$. $Z(t)$ is introduced to fix the time-dependent discontinuity $y = R(t)$ at the origin to satisfy this condition.}, then $Y$ satisfies
\begin{align}\label{eq:sde:Y}
\left\{
\begin{aligned}
&\md Y(t) =  \tilde{\gamma}(Y(t) - R(t)) \frac{\partial u}{\partial x}(t, v(t, Y(t)))\left(|\lambda(t)|^2 \md t +   \lambda(t)^\top \md W(t)\right),\\
&Y(0)=P_0(0)\xi.
\end{aligned}
\right.
\end{align}
To obtain the desired form of the FBSDE via the generalized Itô formula, we need to ensure that the process spends zero time in $\mathcal{N} \subset Q_T$, the singular set of Lebesgue measure zero where the PDE \eqref{eq:pde} fails to hold.
Fubini's theorem, together with the invariance of null sets under Lipschitz mappings, implies that the transformed set
$\tilde{\mathcal{N}} := \{(t, y) \mid (t, v(t, y)) \in \mathcal{N}\}$ also has Lebesgue measure zero because $v(t,\cdot)$ is globally bi-Lipschitz for each $t \in [0,T]$. $\mathcal{F}_0$ and $\sigma(Y(0))$ coincide up to null sets because $Y(0)=P_0(0)\xi$. 
There exists a measurable map $\Phi : \mathbb{R} \times C([0,T], \mathbb{R}^d) \to C([0,T], \mathbb{R})$ such that $Y_{\cdot} = \Phi(Y(0), W_{\cdot})$ by the well-posedness of the SDE and the Yamada–Watanabe theorem (see, e.g.,  \citet*{kallenberg1996existence}[Theorem 1]). Moreover, as $W_{\cdot}$ is independent of $\mathcal{F}_0$, for any bounded measurable function $f$, it holds that $\mathbb{E}[f(Y(0), W_{\cdot}) \mid \mathcal{F}_0] = \mathbb{E}[f(y_0, W_{\cdot})]\big|_{y_0=Y(0)}$.

For any $y_0\in \mR$, note that the diffusion coefficient of $Y$ is uniformly non-degenerate.  Applying Krylov’s estimate (see \citet*{krylov1980controlled}[Chapter 2, Section 3, Theorem 4]) to $\Phi_{\cdot}(y_0, W_{\cdot})$, we know that there exists a constant $C$ that depends only on $T$, $\kappa$, $\Lambda$, $M$, $\delta_1$, and $M_1$ such that
\begin{align*}
\mathbb{E} \left[ \int_0^T \mathbbm{1}_{\tilde{\mathcal{N}}}(t, \Phi_t(y_0, W_{\cdot})) \md t \right] \leq C \|\mathbbm{1}_{\tilde{\mathcal{N}}}\|_{L^2([0,T] \times \mathbb{R})} = 0.
\end{align*}
As a result, we have
\begin{align}\label{eq:nutset}
    &\mathbb{E}\left[\int_0^T \mathbbm{1}_{{\mathcal{N}}}(t, v(t, Y(t))) \md t\right] = \mathbb{E}\left[\int_0^T \mathbbm{1}_{\tilde{\mathcal{N}}}(t, \Phi_t(Y(0), W_{\cdot})) \md t\right] \nonumber \\
=&\mathbb{E}\left[\mathbb{E}\left[\int_0^T \mathbbm{1}_{\tilde{\mathcal{N}}}(t, \Phi_t(Y(0), W_{\cdot})) \md t\Big|\mathcal{F}_0\right]\right]\nonumber\\=&\mathbb{E}\left[\left(\mathbb{E}\left[\int_0^T \mathbbm{1}_{\tilde{\mathcal{N}}}(t, \Phi_t(y_0, W_{\cdot})) \md t\right]\right)_{y_0=Y(0)}\right]=0.
\end{align}
Subsequently, we apply the inverse spatial mapping to define $x(t):=v(t,Y(t))$, then $Z(t)=Y(t)-R(t)=u(t,x(t))-R(t)$ and $\tilde{\gamma}(Z(t))=\gamma\left(t,\frac{u(t,x(t))}{P_0(t)}\right)$. Using \eqref{eq:nutset} and applying the generalized It\^o formula (see, e.g., \citet*{krylov1980controlled}[Chapter 2, Section 10, Theorem 1])
leads to
\begin{align}
\md x(t) &= \left[ -\frac{\partial u}{\partial t}\left(\frac{\partial u}{\partial x}\right)^{-1} \md t - \frac{1}{2}\frac{\partial^2 u}{\partial x^2}\left(\frac{\partial u}{\partial x}\right)^{-3} \left(|\lambda(t)|^2\gamma^2(t, P_0^{-1}(t)u(t,x(t)))\left(\frac{\partial u}{\partial x}\right)^2\right) \md t \right]\nonumber \\
&\quad + \left(\frac{\partial u}{\partial x}\right)^{-1}\gamma(t, P_0^{-1}(t)u(t,x(t))) \left[ |\lambda(t)|^2\frac{\partial u}{\partial x} \md t + \frac{\partial u}{\partial x} \lambda(t)^\top \md W(t) \right]\nonumber\\
&= \gamma(t, P_0^{-1}(t)u(t,x(t))) \lambda(t)^\top \md W(t).\label{eq:sde:x}
\end{align}
Moreover, using  $u(t, x(t)) = u(t, v(t, Y(t))) = Y(t)$ and \eqref{eq:sde:Y}, we obtain
\begin{align*}
\md u(t,x(t)) = \gamma(t, P_0^{-1}(t)u(t,x(t)))\frac{\partial u}{\partial x}(t,x(t))\left(|\lambda(t)|^2 \md t + \lambda(t)^\top \md W(t)\right).
\end{align*}
Then, let us define the wealth process $X(t) := P_0^{-1}(t)u(t,x(t))$ with $X(0)=P_0^{-1}(0)u(0,x(0))=P_0^{-1}(0)Y(0)=\xi$ and
\begin{align}\label{sde:X}
\md X(t)&= r(t)X(t)\md t + P_0^{-1}(t)\gamma(t, X(t))\frac{\partial u}{\partial x}(t,x(t))\left( |\lambda(t)|^2\md t +  \lambda(t)^\top \md W(t)\right)\nonumber\\
&=r(t)X(t)\md t-P_0^{-1}(t)\left[Q(t)-\gamma(t,X(t))\lambda(t)\right]^\top\left(\lambda(t)\md t+\md W(t)\right),
\end{align}
where $Q(t):= \gamma(t, X(t))\left(1 - \frac{\partial u}{\partial x}(t,x(t))\right)\lambda(t)$. Let $P(t):=x(t)-u(t,x(t))$, we have $P(T)=x(T)-u(T,x(T))=0$ and 
\begin{align*}
\md P(t) &=  -|\lambda(t)|^2\gamma(t, X(t))\frac{\partial u}{\partial x}(t,x(t)) \md t + Q(t)^\top \md W(t)\\
&= -\left(\gamma(t,X(t))|\lambda(t)|^2-\lambda^\top(t)Q(t)\right)\md t+Q^{\top}(t)\md W(t).
\end{align*}
The conclusion then holds that $(X,P,Q)$ solves the FBSDE \eqref{eq:fbsde_sufficiency}.
\end{proof}

To fully justify that the constructed strategy constitutes an intra-personal equilibrium via Theorem \ref{sufficiency}, it remains to verify that the associated state process is atomless.

\begin{proposition}\label{atomless}
For every $t \in (0,T]$, the random variable $X(t)$ is atomless.
\end{proposition}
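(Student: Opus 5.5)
We will reduce the claim to a statement about the process $Y(t)=u(t,x(t))=P_0(t)X(t)$ constructed in the proof of Proposition \ref{FBSDE-existence}. Since $P_0(t)>0$ is deterministic, $\mathbb{P}(X(t)=a)=\mathbb{P}(Y(t)=P_0(t)a)$, so it suffices to show that $Y(t)$ is atomless for every $t\in(0,T]$. Recall that $Y$ solves \eqref{eq:sde:Y}, a one-dimensional SDE with bounded measurable drift and with diffusion coefficient $\tilde\gamma(Y-R(t))\frac{\partial u}{\partial x}(t,v(t,Y))\lambda(t)$. By Assumption \ref{assumption-1} and Proposition \ref{lma:bound:'} (passed to the limit in Theorem \ref{existence-u}), the squared norm of this coefficient is bounded below by $\gamma_1^2\delta_1^2\lambda_{\min}^2>0$ and above by $\gamma_2^2M_1^2\lambda_{\max}^2$.

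The first step is to condition on $\mathcal{F}_0$. As in the proof of Proposition \ref{FBSDE-existence}, write $Y=\Phi(Y(0),W)$ with $W$ independent of $\mathcal{F}_0$. Then
\begin{align*}
\mathbb{P}(Y(t)=a)=\mathbb{E}\Big[\mathbb{P}\big(\Phi_t(y_0,W)=a\big)\big|_{y_0=Y(0)}\Big].
\end{align*}
It therefore suffices to show that, for every deterministic $y_0$ and every $t\in(0,T]$, the law of $\Phi_t(y_0,W)$ has no atoms. We will in fact show that this law is absolutely continuous with respect to Lebesgue measure.

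The second step, which is the core, uses Krylov's estimate with a time-localized integrand. Fix $t>0$ and a Lebesgue-null set $A\subset\mathbb{R}$; we may take $A=\{a\}$. Let $h\in(0,t)$. Since $\Phi_s(y_0,W)$ is a nondegenerate It\^o process with bounded coefficients, Krylov's estimate, applied exactly as in \eqref{eq:nutset}, gives
\begin{align*}
\mathbb{E}\int_{t-h}^{t}\mathbbm{1}_{B_\eta(a)}\big(\Phi_s(y_0,W)\big)\,\md s\le C\,|B_\eta(a)|^{1/2}h^{1/2},
\end{align*}
where $C$ depends only on $T$ and the coefficient bounds. This bounds only an occupation time, not the probability at the fixed time $t$. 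To pass to fixed time, we would use the Markov/flow property on $[t-h,t]$: conditionally on $\mathcal{F}_{t-h}$, the increment of $Y$ over a short interval is a martingale-plus-bounded-drift with quadratic variation at least $c\,(s-(t-h))$.

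The cleaner route for this step, and the one we intend to use, is the one-dimensional occupation-density argument. Because $\md\langle Y\rangle_s\ge c\,\md s$ with $c>0$, the time change $\tau$ inverse to $\langle Y\rangle$ turns the martingale part into a Brownian motion $B$. The drift $|\lambda|^2\tilde\gamma\,\frac{\partial u}{\partial x}$ is bounded, and divided by the diffusion squared it is still bounded. Girsanov's theorem, with Novikov's condition satisfied because the integrand is bounded, then makes $Y$ a local martingale under an equivalent measure $\mathbb{Q}$. Under $\mathbb{Q}$, $Y(t)=y_0+B_{\langle Y\rangle_t}$.

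We expect the main obstacle here: the random time $\langle Y\rangle_t$ is not independent of $B$, so $B_{\langle Y\rangle_t}$ is not simply Gaussian. We handle this with the standard estimate $\mathbb{Q}(Y(t)\in(a-\eta,a+\eta))\to0$ as $\eta\to0$. To obtain it, apply Tanaka's formula and the occupation-times formula: for any $s<t$, the event $\{|Y(t)-a|<\eta\}$ together with nondegeneracy forces the local time $L^a$ of $Y$ to be small in a controlled way. Concretely, we will bound $\mathbb{Q}(|Y(t)-a|<\eta)$ through $\mathbb{E}[(\eta-|Y(t)-a|)^+]$ and its dependence on $\eta$. Alternatively, we can invoke known absolute-continuity results for one-dimensional nondegenerate SDEs with measurable coefficients, for example the Aronson-type Gaussian bounds for the transition density of divergence/nondivergence operators in one dimension, already used in Proposition \ref{lma:bound:'}. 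Either route gives $\mathbb{P}(\Phi_t(y_0,W)=a)=0$ for all $a$, and equivalence of $\mathbb{Q}$ and $\mathbb{P}$ transfers the conclusion back.

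Finally, we integrate over $y_0$ via the conditioning identity. This yields $\mathbb{P}(Y(t)=a)=0$, and hence $X(t)$ is atomless for every $t\in(0,T]$. At $t=0$ the claim is the standing assumption that $\xi$ is atomless.
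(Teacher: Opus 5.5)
There is a genuine gap. The step that carries all the content is never proved: that the fixed-time marginal of a nondegenerate one-dimensional diffusion with a coefficient that is merely measurable in $(t,x)$ has no atoms.

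\begin{itemize}
\item Your Krylov estimate only controls occupation times. It shows the space--time occupation measure is absolutely continuous, hence that $\mathrm{law}(Y(s))$ has no atoms for a.e.\ $s$. Theorem \ref{sufficiency} needs this for every $t$.
\item The Tanaka/local-time route is circular. Under $\mathbb{Q}$, Tanaka's formula gives $\mathbb{E}|Y(t)-a|=|y_0-a|+\mathbb{E}L^a_t$. So the distributional second derivative of $a\mapsto\mathbb{E}L^a_t$ is $2(\mathrm{law}(Y(t))-\delta_{y_0})$. Showing $\mathbb{E}[(\eta-|Y(t)-a|)^+]=o(\eta)$ is exactly a second-difference bound on $a\mapsto\mathbb{E}L^a_t$, which is equivalent to having no atom at $a$. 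No independent input is supplied.
\item The Aronson alternative does not apply. Aronson's bounds are for divergence-form operators with bounded measurable coefficients. They worked in Proposition \ref{lma:bound:'} only because differentiating the one-dimensional nondivergence equation in $x$ yields a divergence-form equation for $\partial_x u^\epsilon$. The law of $x(t)$ instead solves the double-divergence equation $\partial_t p=\tfrac12\partial_{xx}(|\hat\sigma|^2p)$. Its coefficient jumps across $x=b^*(t)$ and is only measurable in $t$, so rewriting it in divergence form produces a measure-valued $\partial_x|\hat\sigma|^2$.
\item Fixed-time absolute continuity, the stronger property you set out to prove, is not available off the shelf for such coefficients.
\item As you noticed, the time change alone cannot help. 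A Brownian motion stopped at a stopping time confined to $[ct,Ct]$ can have atoms, e.g.\ $\tau=\inf\{r\ge ct:B_r=a\}\wedge Ct$. What rules this out here is that the clock runs at a rate bounded below at every instant, and your argument never exploits this.
\end{itemize}

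The missing idea is an elementary martingale argument that uses only the two-sided bound on the quadratic-variation rate. Work directly with the driftless auxiliary process $\md x(s)=\hat\sigma^\top(s,x(s))\md W(s)$, with $c_3^2\le|\hat\sigma|^2\le C_2^2$. Atoms of $X(t)$ and $x(t)$ correspond bijectively, and neither Girsanov nor conditioning on $\mathcal{F}_0$ is needed.

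Suppose $A=\{x(t)=z^*\}$ had positive probability, and let $t_n\uparrow t$. L\'evy's zero--one law gives $P_n:=\mathbb{P}(A\mid\mathcal{F}_{t_n})\to\mathbbm{1}_A$. Consider $V_n=\mE_{t_n}[(x(t)-z^*)^2]$.
\begin{itemize}
\item Nondegeneracy gives $V_n\ge c_3^2(t-t_n)$.
\item Since $x(t)-z^*$ vanishes on $A$, Cauchy--Schwarz gives $(x(t_n)-z^*)^2\le V_n(1-P_n)$. This yields $V_nP_n\le C_2^2(t-t_n)$, hence $(x(t_n)-z^*)^2\le 2C_2^2(t-t_n)$ on $\{P_n>1/2\}$.
\item A conditional fourth-moment bound and Cauchy--Schwarz on $A^c$ then give $V_n\le\sqrt{56}\,C_2^2(t-t_n)\sqrt{1-P_n}$ there.
\end{itemize}
Combining the bounds forces $1-P_n\ge c_3^4/(56C_2^4)$ on $\{P_n>1/2\}$, which contradicts $P_n\to1$ on $A$.
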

\begin{proof}
Fix $t\in(0,T]$ and $x \in \mathbb{R}$. The equation $u(t,y)=P_0(t)x$ for $y$ admits a unique solution $y=z^* \in \mathbb{R}$. The events $\{X(t)=x\}$ and $\{x(t)=z^*\}$ coincide by the relation $X(t)=P_0(t)^{-1}u(t,x(t))$.

Recall from \eqref{eq:hat:sigma} that $\hat{\sigma}(s, x) := \gamma(s, P_0^{-1}(s)u(s, x)) \lambda(s)$. Then $c_3^2\leq |\hat{\sigma}|^2\leq C_2^2$, where $c_3 := \gamma_1 \lambda_{\min} > 0$ and $C_2 := \gamma_2 \lambda_{\max} < \infty$. Using  \eqref{eq:sde:x}, we have that   the process $x$ satisfies 
\begin{align*}
\md x(s) &= \hat{\sigma}^{\top}(s, x(s)) \md W(s).
\end{align*}
Suppose otherwise that $\mathbb{P}(x(t)=z^*)=\delta>0$, and set $A := \{x(t) = z^*\}$. Because $x(\cdot)$ has continuous sample paths a.s., $x(t)=\lim\limits_{s\uparrow t}x(s)$ is $\mathcal{F}_{t-}$-measurable, hence $A\in\mathcal{F}_{t-}$.
Let $\{t_n\}_{n\ge1}\subset(0,t)$ be an increasing sequence with $t_n\uparrow t$, and define $P_n := \mathbb{P}(A \mid \mathcal{F}_{t_n})$. We have $P_n \to \mathbb{P}(A \mid \mathcal{F}_{t-}) = \mathbbm{1}_A$ a.s. when $n\to\infty$ by Paul Lévy's zero–one law (see Theorem 9.4.8 and the subsequent corollary in \citet*{chung2000course}). In particular, on $A$, $P_n\to 1$ almost surely.

Define $V_n := \mE_{t_n}[(x(t) - z^*)^2]$. In light of $\mE_{t_n}[x(t) - x(t_n)] = 0$,  we obtain the decomposition
\begin{align}\label{eq:variance decomposition}
V_n &= \mE_{t_n}[(x(t) - x(t_n))^2] + (x(t_n) - z^*)^2.
\end{align}
By It\^o's isometry and the uniform lower bound on $\hat{\sigma}$, it holds that
\begin{align*}
V_n\geq \mE_{t_n}[(x(t) - x(t_n))^2] &= \mE_{t_n}\left[\int_{t_n}^t |\hat{\sigma}(s, x(s))|^2 \md s\right] \ge c_3^2 (t - t_n).
\end{align*}
Similarly, we have $\mE_{t_n}[(x(t) - x(t_n))^2] \le C_2^2 (t - t_n)$. 
Moreover, because $\mE_{t_n}[x(t) - z^*] = x(t_n) - z^*$ and $(x(t) - z^*)\mathbbm{1}_A = 0$, applying the Cauchy-Schwarz inequality, we get
\begin{align*}
(x(t_n) - z^*)^2 &=\left( \mE_{t_n}[(x(t) - z^*)\mathbbm{1}_{A^c}]\right)^2\leq \mE_{t_n}[(x(t) - z^*)^2 \mathbbm{1}_{A^c}] \mathbb{P}(A^c \mid \mathcal{F}_{t_n}) \le V_n(1 - P_n).
\end{align*}
 Substituting these two upper bounds  into \eqref{eq:variance decomposition} yields 
\begin{align*}
V_n 
\le C_2^2 (t - t_n) + V_n(1 - P_n) 
\;&\Longleftrightarrow\;
V_n P_n \le C_2^2 (t - t_n) \\
&\Longrightarrow\;
\mathbbm{1}_{B_n} V_n P_n \le \mathbbm{1}_{B_n} C_2^2 (t - t_n) \\
&\Longrightarrow\;
\mathbbm{1}_{B_n} V_n \le 2 C_2^2 (t - t_n)\, \mathbbm{1}_{B_n} \\
&\Longrightarrow\;
\mathbbm{1}_{B_n} (x(t_n) - z^*)^2 
\le 2 C_2^2 (t - t_n)\, \mathbbm{1}_{B_n},
\end{align*}
where  $B_n := \{P_n > \frac{1}{2}\}$ is  $\mathcal{F}_{t_n}$-measurable. 
Applying It\^o's formula and using the upper bound $C_2^2$ of $\hat{\sigma}$, we obtain 
\begin{align}\label{4-moment}
\mE_{t_n}[(x(t) - x(t_n))^4] &= 6\mE_{t_n}\left[\int_{t_n}^t (x(s) - x(t_n))^2 |\hat{\sigma}(s, x(s))|^2 \md s\right]\le 6 C_2^2 \int_{t_n}^t \mE_{t_n}[(x(s) - x(t_n))^2] \md s \nonumber\\
&\le 6 C_2^4 \int_{t_n}^t (s - t_n) \md s = 3 C_2^4 (t - t_n)^2.
\end{align}
It then follows that
\begin{align}\label{variance inequality}
V_n = \mE_{t_n}[(x(t) - z^*)^2 \mathbbm{1}_{A^c}] &\le \left(\mE_{t_n}[(x(t) - z^*)^4]\right)^{\frac{1}{2}} (1 - P_n)^{\frac{1}{2}}.
\end{align}
Using $(a+b)^4 \le 8a^4 + 8b^4$, we have
\begin{align*}
&\mE_{t_n}[(x(t) - z^*)^4] \le 8\mE_{t_n}[(x(t) - x(t_n))^4] + 8(x(t_n) - z^*)^4\\
\Longrightarrow &
\mathbbm{1}_{B_n} \mE_{t_n}[(x(t) - z^*)^4] \le 24 C_2^4 (t - t_n)^2 \mathbbm{1}_{B_n} + 32 C_2^4 (t - t_n)^2 \mathbbm{1}_{B_n} = 56 C_2^4 (t - t_n)^2 \mathbbm{1}_{B_n}.
\end{align*}
Substituting this back into \eqref{variance inequality} yields
\begin{align*}
\mathbbm{1}_{B_n} c_3^2 (t - t_n)\leq \mathbbm{1}_{B_n} V_n &\le \mathbbm{1}_{B_n} \sqrt{56} C_2^2 (t - t_n) \sqrt{1 - P_n}.
\end{align*}
Dividing both sides by $(t - t_n)$ and rearranging the terms lead to
\begin{align*}
\mathbbm{1}_{B_n} (1 - P_n) &\ge \mathbbm{1}_{B_n} \frac{c_3^4}{56 C_2^4} \ge 0.
\end{align*}
Note that $P_n \to 1$ a.s. on $A$, for almost every $\omega \in A$, there exists $N(\omega)$ such that $\omega \in B_n$ for all $n > N(\omega)$. Hence,
$(1-P_n(\omega))\geq \frac{c_3^4}{56 C_2^4} > 0$ for $n>N(\omega)$, which contradicts $P_n \to 1$ a.s.
Therefore, $\mathbb{P}(x(t) = z^*) = 0$,  thus $\mathbb{P}(X(t) = x) = 0$.
\end{proof}

\section{Existence of Mean-field Equilibrium}\label{Mean-field Consistency and Equilibrium}

Building upon the existence and characterization of intra-personal equilibrium in Theorems \ref{necessary} and \ref{sufficiency} of the last section, we turn to addressing the existence of mean-field equilibrium for the time-inconsistent MFG, i.e., we verify that the mean-field consistency condition holds. To emphasize the dependence on $m(\cdot)$, throughout this section we incorporate $m$ into the notations of the relevant processes and functions, such as $\gamma(t,x;m)$, thereby distinguishing them from those in the previous section where $m$ was treated as fixed.
\vskip 2pt \noindent
We propose to first investigate the existence of the mean-field equilibrium for the auxiliary system under smoothing regularization, and then recover the mean-field equilibrium in the original system by passing to the limit as $\varepsilon \to 0$. 

\subsection{Mean-field equilibrium in the regularized system}
\vskip 2pt
For $\epsilon>0$, let $u^{\varepsilon}(\cdot,\cdot;m)\in C^{1,2}(Q_T)$ be the unique solution of the regularized PDE \eqref{eq:regularized} obtained in Theorem \ref{epsilon-existence}. Consider the following auxiliary SDE:
\begin{align}\label{eq:x:epsilon}
\begin{cases}
\md x^{\varepsilon}(t;m) = \gamma^{\varepsilon}\big(t, P_{0}(t)^{-1}u^{\varepsilon}(t, x^{\varepsilon}(t;m); m); m\big) \lambda(t)^\top \md W(t), \\
u^{\varepsilon}(0, x^{\varepsilon}(0;m); m) = P_{0}(0)\xi.
\end{cases}
\end{align}
By Theorem \ref{epsilon-existence}, $u^\epsilon(\cdot,\cdot;m)$ has
bounded first-order spatial derivative. Consequently, the diffusion coefficient in \eqref{eq:x:epsilon} is globally Lipschitz continuous in the spatial argument, and hence the SDE \eqref{eq:x:epsilon} admits a unique strong solution, denoted by the auxiliary process $\{x^{\varepsilon}(t;m)\}$.
Define the wealth process $\{X^{\varepsilon}(t;m)\}$ by 
\begin{align*}
X^{\varepsilon}(t;m) := P_{0}(t)^{-1}u^{\varepsilon}(t, x^{\varepsilon}(t;m); m),
\end{align*}
the two processes $P^{\varepsilon}(t;m)$ and $Q^{\varepsilon}(t;m)$ respectively by
\begin{align}
P^{\varepsilon}(t;m) &:= x^{\varepsilon}(t;m) - u^{\varepsilon}\big(t, x^{\varepsilon}(t;m); m\big),\nonumber \\
Q^{\varepsilon}(t;m) &:= \gamma^{\varepsilon}\big(t, X^{\varepsilon}(t;m); m\big) \left( 1 - \frac{\partial u^{\varepsilon}}{\partial x}\big(t, x^{\varepsilon}(t;m); m\big) \right)\lambda(t).\label{eq:epsilon:Q}
\end{align}
Then, using the same arguments as in Proposition \ref{FBSDE-existence}, we have the following Proposition \ref{eq:fbsde:epsilon}.

\begin{proposition}\label{eq:fbsde:epsilon}
$(X^{\varepsilon}, P^{\varepsilon}, Q^{\varepsilon})$ satisfies the following FBSDE system:
\begin{align}\label{eq:fbsde:epsilon:xpq}
\left\{
\begin{aligned}
&\md X^{\varepsilon}(t;m) = \left( r(t)X^{\varepsilon}(t;m) - \frac{1}{P_0(t)} \left[ Q^{\varepsilon}(t;m) - \gamma^{\varepsilon}\big(t, X^{\varepsilon}(t;m); m\big)\lambda(t) \right]^{\top} \lambda(t) \right) \md t \\
&\qquad \quad\quad\quad - \frac{1}{P_0(t)} \left[ Q^{\varepsilon}(t;m) - \gamma^{\varepsilon}\big(t, X^{\varepsilon}(t;m); m\big)\lambda(t) \right]^{\top} \md W(t), \\
&\md P^{\varepsilon}(t;m) = -\left( \gamma^{\varepsilon}\big(t, X^{\varepsilon}(t;m); m\big) |\lambda(t)|^2 - \lambda^{\top}(t)Q^{\varepsilon}(t;m) \right) \md t + Q^{\varepsilon}(t;m)^{\top} \md W(t), \\
&X^{\varepsilon}(0;m) = \xi, \quad P^{\varepsilon}(T;m) = 0.
\end{aligned}
\right.
\end{align}
\end{proposition}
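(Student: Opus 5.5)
Since $u^\varepsilon(\cdot,\cdot;m)\in C^{1,2}(Q_T)$ with bounded derivatives (Theorem \ref{epsilon-existence}), the plan is to apply the classical It\^o formula directly, with no Krylov estimate or null-set argument. The regularized PDE \eqref{eq:regularized} holds pointwise everywhere, and the diffusion coefficient of \eqref{eq:x:epsilon} is smooth. So the proof is a direct computation along the lines of the final part of the proof of Proposition \ref{FBSDE-existence}. There is one preliminary point. The initial condition $u^\varepsilon(0,x^\varepsilon(0;m);m)=P_0(0)\xi$ determines $x^\varepsilon(0;m)$ uniquely and $\mathcal F_0$-measurably, because $x\mapsto u^\varepsilon(0,x;m)$ is a strictly increasing bijection (Proposition \ref{lma:bound:'}). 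Since $\xi$ is bounded and $|u^\varepsilon(0,x)-x|\le MT$, this $x^\varepsilon(0;m)$ is also bounded.

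\textbf{Step 1 (forward equation).} Write $g^\varepsilon(t):=\gamma^\varepsilon\big(t,P_0^{-1}(t)u^\varepsilon(t,x^\varepsilon(t;m);m);m\big)=\gamma^\varepsilon(t,X^\varepsilon(t;m);m)$. It\^o's formula applied to $u^\varepsilon(t,x^\varepsilon(t))$ gives
\begin{align*}
\md u^\varepsilon(t,x^\varepsilon(t)) = \Big(\partial_t u^\varepsilon + \tfrac12 |\lambda|^2 (g^\varepsilon)^2\partial_{xx}u^\varepsilon\Big)\md t + g^\varepsilon\,\partial_x u^\varepsilon\,\lambda^\top\md W(t).
\end{align*}
Here $\tfrac12|\lambda|^2(g^\varepsilon)^2=D^\varepsilon(t,u^\varepsilon)$ and $V^\varepsilon(t,u^\varepsilon)=|\lambda|^2 g^\varepsilon$. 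The PDE \eqref{eq:regularized} therefore turns the drift into $|\lambda|^2 g^\varepsilon\partial_x u^\varepsilon$, so that
\begin{align*}
\md u^\varepsilon = g^\varepsilon\partial_x u^\varepsilon(|\lambda|^2\md t+\lambda^\top\md W).
\end{align*}
Next apply the product rule to $X^\varepsilon=P_0^{-1}u^\varepsilon$, using $(P_0^{-1})'=rP_0^{-1}$. By the definition \eqref{eq:epsilon:Q}, $Q^\varepsilon-g^\varepsilon\lambda=-g^\varepsilon\partial_xu^\varepsilon\lambda$. Substituting this gives exactly the forward equation in \eqref{eq:fbsde:epsilon:xpq}. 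In addition, $X^\varepsilon(0)=P_0(0)^{-1}P_0(0)\xi=\xi$.

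\textbf{Step 2 (backward equation).} Take $P^\varepsilon=x^\varepsilon-u^\varepsilon(t,x^\varepsilon)$. Its dynamics are
\begin{align*}
\md P^\varepsilon = -|\lambda|^2 g^\varepsilon\partial_xu^\varepsilon\md t + g^\varepsilon(1-\partial_xu^\varepsilon)\lambda^\top\md W = -|\lambda|^2 g^\varepsilon\partial_x u^\varepsilon\md t + (Q^\varepsilon)^\top\md W.
\end{align*}
It then suffices to check the algebraic identity $g^\varepsilon|\lambda|^2-\lambda^\top Q^\varepsilon = g^\varepsilon|\lambda|^2\partial_xu^\varepsilon$, which holds. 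The terminal condition $u^\varepsilon(T,x)=x$ gives $P^\varepsilon(T)=0$.

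\textbf{Step 3 (integrability).} We need $X^\varepsilon,P^\varepsilon\in L^2_{\mathbb F}(\Omega;C([0,T]))$ and $Q^\varepsilon\in L^2_{\mathbb F}(0,T)$. Two facts give this. First, $Q^\varepsilon$ is bounded, by $\gamma_2\lambda_{\max}(1+M_1)$. Second, $|P^\varepsilon|\le MT$ and $|u^\varepsilon(t,x)|\le |x|+MT$. Since $x^\varepsilon$ is a martingale with bounded diffusion and bounded initial value, Doob's inequality gives $\mathbb E\sup_t|x^\varepsilon(t)|^2<\infty$. No genuine obstacle is expected. The only point needing care is the measurability and well-definedness of $x^\varepsilon(0;m)$, which is handled by the bijectivity noted at the start. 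Beyond that, everything is classical because all coefficients are now smooth.
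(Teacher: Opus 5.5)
Your proof is correct and is essentially the paper's argument. The paper just defers to the proof of Proposition \ref{FBSDE-existence}, whose concluding It\^o computation (drift elimination via the PDE, then identification of $Q^\varepsilon$) is exactly what you carry out. You also rightly observe that the Krylov-estimate and generalized It\^o steps used there are unnecessary here, because $u^\varepsilon\in C^{1,2}$ and \eqref{eq:x:epsilon} has Lipschitz coefficients.
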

\noindent
To facilitate the fixed-point argument for mean-field consistency, we first construct a compact, convex subset $\mathcal{K}\subset C([0,T])$ as follows.

\begin{proposition}\label{prop:K-construct}
Define the regularized mapping $\Phi^{\varepsilon}: C([0,T]) \to C([0,T])$ by $\Phi^{\varepsilon}(m)(t) := \mathbb{E}[X^{\varepsilon}(t; m)]$. Then there exists a nonempty, closed, bounded, convex, and equicontinuous subset $\mathcal{K}\subset C([0,T])$, independent of $\varepsilon$, such that $\Phi^{\varepsilon}(\mathcal{K}) \subseteq \mathcal{K}$.
\end{proposition}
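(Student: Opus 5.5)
We need bounds on $\Phi^\varepsilon(m)$ that hold uniformly in $m$ and $\varepsilon$, including a Lipschitz bound in $t$ (in view of Assumption \ref{assumption-2}). The key is the FBSDE \eqref{eq:fbsde:epsilon:xpq}. Taking expectations there, the stochastic integral has zero mean, since its integrand is bounded. This holds because $|Q^\varepsilon|\le \gamma_2(1+M_1)\lambda_{\max}$ by Proposition \ref{lma:bound:'}, and $\gamma^\varepsilon\in[\gamma_1,\gamma_2]$. Writing $\mu^\varepsilon(t):=\mathbb{E}[X^\varepsilon(t;m)]$, we obtain
\begin{align*}
\mu^\varepsilon(t)=\mathbb{E}[\xi]+\int_0^t r(s)\mu^\varepsilon(s)\,\md s+\int_0^t \frac{1}{P_0(s)}\mathbb{E}\big[\gamma^\varepsilon(s,X^\varepsilon(s;m);m)\,\tfrac{\partial u^\varepsilon}{\partial x}(s,x^\varepsilon(s;m);m)\big]|\lambda(s)|^2\,\md s .
\end{align*}
This follows from the representation of the drift as in \eqref{sde:X}. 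The last integrand is bounded by $L_0:=\gamma_2 M_1\lambda_{\max}^2/P_{\min}$. This bound is independent of $m$ and $\varepsilon$, because $M_1$ in Proposition \ref{lma:bound:'} is independent of both.

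From this identity, Gronwall's lemma gives $|\mu^\varepsilon(t)|\le (\|\xi\|_\infty+L_0T)e^{r_{\max}T}=:B$. The derivative then satisfies
\begin{align*}
|(\mu^\varepsilon)'(t)|\le r_{\max}B+L_0=:L \quad \text{for a.e. } t .
\end{align*}
So $\mu^\varepsilon$ is $L$-Lipschitz with $\mu^\varepsilon(0)=\mathbb{E}[\xi]$. Continuity of $\mu^\varepsilon$ itself follows from the integral identity, so $\Phi^\varepsilon$ indeed maps into $C([0,T])$.

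I would then set
\begin{align*}
\mathcal{K}:=\{m\in C([0,T]):\ m(0)=\mathbb{E}[\xi],\ \|m\|_\infty\le B,\ |m(t)-m(s)|\le L|t-s|\ \ \forall s,t\}.
\end{align*}
This set is nonempty, since it contains the solution of $m'=r_{\max}m+L_0$, or more simply the constant $\mathbb{E}[\xi]$, because $|\mathbb{E}[\xi]|\le B$. It is convex, because the constraints are convex. It is closed under uniform convergence, bounded, and equicontinuous. It is also contained in the set of Lipschitz functions, so Assumption \ref{assumption-2} holds for every $m\in\mathcal{K}$. The computation above shows $\Phi^\varepsilon(m)\in\mathcal{K}$ for every $m\in\mathcal{K}$, which gives $\Phi^\varepsilon(\mathcal{K})\subseteq\mathcal{K}$.

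The main point requiring care is the justification of the expectation identity. We need the FBSDE of Proposition \ref{eq:fbsde:epsilon}, which holds for any continuous $m$ in the regularized setting via the classical Itô formula on $u^\varepsilon\in C^{1,2}$, together with square integrability of $X^\varepsilon$. The latter follows from $|u^\varepsilon(t,x)-x|\le M(T-t)$ and the bounded diffusion of $x^\varepsilon$. The crucial ingredient is the $\varepsilon$- and $m$-independence of the gradient bound $M_1$. Without it, $L$ would blow up as $\varepsilon\to0$, and the set $\mathcal{K}$ would not be independent of $\varepsilon$.
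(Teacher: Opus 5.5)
Your proposal is correct and follows essentially the same route as the paper: take expectations in the forward equation of \eqref{eq:fbsde:epsilon:xpq}, bound the drift uniformly in $m$ and $\varepsilon$ using the gradient bound $M_1$ of Proposition \ref{lma:bound:'}, apply Gronwall to get a sup bound and then a Lipschitz bound, and take $\mathcal{K}$ to be the continuous functions with $m(0)=\mathbb{E}[\xi]$ satisfying both bounds. The only cosmetic difference is that you rewrite $-P_0^{-1}\bigl[Q^\varepsilon-\gamma^\varepsilon\lambda\bigr]^\top\lambda$ as $P_0^{-1}\gamma^\varepsilon\frac{\partial u^\varepsilon}{\partial x}|\lambda|^2$. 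This gives the slightly sharper constant $\gamma_2M_1\lambda_{\max}^2/P_{\min}$ in place of the paper's $C_3=\gamma_2(2+M_1)\lambda_{\max}^2/P_{\min}$, at the harmless cost of using $\|\xi\|_\infty$ instead of $|\mathbb{E}[\xi]|$ in the Gronwall step.
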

\begin{proof}
Fix $m\in C([0,T])$. For notational simplicity, let $ \tilde{m}^{\varepsilon}(\cdot)$ denote $\Phi^{\varepsilon}(m)(\cdot)$.
 Taking expectations on both sides of the first SDE in \eqref{eq:fbsde:epsilon:xpq}, we obtain the following ODE:
\begin{align*}
\frac{\md \tilde{m}^{\varepsilon}}{\md t}(t) = r(t)\tilde{m}^{\varepsilon}(t) - \frac{1}{P_{0}(t)}\lambda(t)^{\top}\mathbb{E}[Q^{\varepsilon}(t;m)] + \frac{1}{P_{0}(t)}|\lambda(t)|^{2}\mathbb{E}[\gamma^{\varepsilon}\big(t, X^{\varepsilon}(t;m); m \big)].
\end{align*}
The last two terms can be estimated by
\begin{align*}
&\left| - \frac{1}{P_{0}(t)}\lambda(t)^{\top}\mathbb{E}[Q^{\varepsilon}(t;m)] + \frac{1}{P_{0}(t)}|\lambda(t)|^{2}\mathbb{E}[\gamma^{\varepsilon}\big(t, X^{\varepsilon}(t;m); m \big)] \right| \\
\leq &\, \frac{\lambda_{\max}}{P_{\min}}|\lambda(t)| \mE\left[\gamma^{\varepsilon}\big(t, X^{\varepsilon}(t;m); m \big) \left( 1 + \left| \frac{\partial u^{\varepsilon}}{\partial x} \right| \right)\right]+ \frac{\lambda_{\max}^{2}}{P_{\min}} \gamma_2 \\
\leq &\, \frac{\lambda^2_{\max}}{P_{\min}}  \gamma_2 (2 + M_1)=:C_3.
\end{align*}
As a result, $\left| \frac{\md \tilde{m}^{\varepsilon}}{\md t}(t) \right| \le r_{\max} |\tilde{m}^{\varepsilon}(t)| + C_3$.
Thanks to Gronwall inequality on $[0,t]$, it holds that
\begin{align*}
&|\tilde{m}^{\varepsilon}(t)| \le \left( |\mathbb{E}[\xi]| + C_3 T \right) e^{r_{\max} T} =: C_4\quad \forall\ t\in[0,T]\\
\Longrightarrow & \left| \frac{\md \tilde{m}^{\varepsilon}}{\md t}(t) \right| \le r_{\max} C_4 + C_3 =: C_5\quad \forall\  t\in[0,T].
\end{align*}
Define the subset $\mathcal{K} \subset C([0,T])$ by
\begin{align*}
\mathcal{K} := \Big\{ m \in C([0,T]) \ \Big| &\ m(0) = \mathbb{E}[\xi], \ \|m\|_\infty \le C_4,\\
&\  |m(t_{1}) - m(t_{2})| \le C_5 |t_{1} - t_{2}| \quad \ \forall \   t_{1},\  t_{2} \in [0,T]\Big\}.
\end{align*}
It is straightforward to verify that $\mathcal{K}$ is nonempty, closed and convex. Moreover, we have  $\Phi^{\varepsilon}(\mathcal{K}) \subseteq \mathcal{K}$ by the bounds $C_4$ and $C_5$ derived for $\tilde{m}^{\epsilon}(t)$ and its derivative. Finally, $\mathcal{K}$ is uniformly bounded and equicontinuous; thus, by invoking the Arzelà–Ascoli theorem, it is relatively compact in $C([0,T])$. In conclusion, $\mathcal{K}$ is compact.
\end{proof}

To eventually prove the continuity of the mapping $\Phi^{\epsilon}$, we first show the continuous dependence of the state process on the population aggregation.

\begin{lemma}\label{lem:p-convergence}
Let $\{m_{k}\}_{k=1}^{\infty} \subset \mathcal{K}$ satisfy $\lim\limits_{k \to \infty} \|m_{k} - m\|_{\infty} = 0$.
Then $X^{\varepsilon}(t;m_k)$ converges uniformly in probability to $X^{\varepsilon}(t;m)$ on $[0,T]$, i.e., for any $\eta > 0$,
\begin{align}\label{eq:convergence:X:mk}
\lim_{k \to \infty} \mathbb{P}\left( \sup_{t \in [0,T]} |X^{\varepsilon}(t;m_k) - X^{\varepsilon}(t;m)| > \eta \right) = 0.
\end{align}
\end{lemma}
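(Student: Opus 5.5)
The plan is to prove the stronger, quantitative statement $\mathbb{E}\big[\sup_{t\in[0,T]}|X^{\varepsilon}(t;m_k)-X^{\varepsilon}(t;m)|^2\big]\le C_\varepsilon\|m_k-m\|_\infty^2$. Convergence in probability \eqref{eq:convergence:X:mk} then follows from Chebyshev's inequality. The argument has three layers:
\begin{itemize}
\item Lipschitz dependence of the regularized risk tolerance $\gamma^\varepsilon$ on $m$;
\item Lipschitz dependence of the PDE solution $u^\varepsilon(\cdot,\cdot;m)$ on $m$ in sup-norm;
\item a standard Lipschitz-SDE stability estimate for $x^\varepsilon$, transferred to $X^\varepsilon$.
\end{itemize}
Throughout, $\varepsilon$ is fixed. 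The constants $M_1^\varepsilon,M_2^\varepsilon$ of Theorem \ref{epsilon-existence} are independent of $m$, and this uniformity is what makes the estimates work.

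\textbf{Step 1 ($m$-dependence of $\gamma^\varepsilon$).} Since $\gamma(t,z;m)=\gamma_2+(\gamma_1-\gamma_2)\mathbbm{1}_{z>m(t)}$, we have $\gamma^\varepsilon(t,y;m)=\gamma_2+(\gamma_1-\gamma_2)\int_{-\infty}^{y-m(t)}\omega_\varepsilon(s)\,\md s$. This depends on $(y,m(t))$ only through $y-m(t)$. Hence
\begin{align*}
|\gamma^\varepsilon(t,y;m_k)-\gamma^\varepsilon(t,y';m)|\le L_{\gamma,\varepsilon}\big(|y-y'|+\|m_k-m\|_\infty\big).
\end{align*}
The same bound, with constants $L_{D,\varepsilon},L_{V,\varepsilon}$, holds for $D^\varepsilon$ and $V^\varepsilon$ after the rescaling $y\mapsto y/P_0(t)$.

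\textbf{Step 2 (sup-norm stability of $u^\varepsilon$).} Write $u_k:=u^\varepsilon(\cdot,\cdot;m_k)$, $u:=u^\varepsilon(\cdot,\cdot;m)$ and $w:=u_k-u$. By Theorem \ref{epsilon-existence}(i), $w$ is bounded, with $|w|\le 2MT$. Subtracting the two regularized PDEs \eqref{eq:regularized}, $w$ satisfies
\begin{align*}
\partial_t w + D^\varepsilon(t,u_k;m_k)\,\partial_{xx}w - V^\varepsilon(t,u_k;m_k)\,\partial_x w + c(t,x)\,w = g(t,x),\qquad w(T,\cdot)=0.
\end{align*}
Here $c$ collects the difference quotients of $D^\varepsilon,V^\varepsilon$ in the $u$-argument multiplied by $\partial_{xx}u$ and $\partial_x u$. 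Thus $|c|\le L_{D,\varepsilon}M_2^\varepsilon+L_{V,\varepsilon}M_1^\varepsilon=:C_c$. The source $g$ collects the pure $m$-differences, so $|g|\le C\|m_k-m\|_\infty$ by Step 1. A maximum principle for bounded classical solutions on $[0,T]\times\mathbb{R}$ then gives $\sup_{Q_T}|w|\le Te^{C_cT}\,C\,\|m_k-m\|_\infty$. The principle is applied to $e^{C_c(T-t)}w$ plus a barrier $\eta(1+x^2)e^{K(T-t)}$ to handle the unbounded domain, or equivalently via a Feynman--Kac representation.

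I expect this step to be the main technical point: one has to make sure the whole-line maximum principle applies. This is where the a priori boundedness $|u^\varepsilon-x|\le M(T-t)$ and the $m$-uniform bounds on $\partial_xu,\partial_{xx}u$ are used.

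\textbf{Step 3 (SDE stability).} First, the initial points. Let $x_k(0)$ and $x(0)$ solve $u_k(0,x_k(0))=u(0,x(0))=P_0(0)\xi$. By the lower bound $\partial_x u_k\ge\delta_1$ of Proposition \ref{lma:bound:'},
\begin{align*}
|x_k(0)-x(0)|\le\delta_1^{-1}|u_k(0,x(0))-u(0,x(0))|\le C\|m_k-m\|_\infty.
\end{align*}
Next, the diffusion coefficients. Set $\sigma_k(t,x):=\gamma^\varepsilon(t,P_0^{-1}u_k(t,x);m_k)\lambda(t)$, and define $\sigma$ analogously for $m$. Then $\sigma_k$ is Lipschitz in $x$ with constant $L:=\lambda_{\max}L_{\gamma,\varepsilon}M_1^\varepsilon/P_{\min}$, uniform in $k$. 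By Steps 1--2, $\sup|\sigma_k-\sigma|\le C\|m_k-m\|_\infty$. Splitting $\sigma_k(x_k)-\sigma(x)=[\sigma_k(x_k)-\sigma_k(x)]+[\sigma_k(x)-\sigma(x)]$ and applying the BDG inequality and Gronwall's lemma yields
\begin{align*}
\mathbb{E}\Big[\sup_{t}|x^\varepsilon(t;m_k)-x^\varepsilon(t;m)|^2\Big]\le C\|m_k-m\|_\infty^2.
\end{align*}
Finally, $P_0(t)|X^\varepsilon(t;m_k)-X^\varepsilon(t;m)|\le \sup|u_k-u|+M_1^\varepsilon|x^\varepsilon(t;m_k)-x^\varepsilon(t;m)|$. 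Together with Step 2 this gives the $L^2$-uniform bound claimed at the outset, and hence \eqref{eq:convergence:X:mk}.
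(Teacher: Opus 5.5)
Your argument is correct, but it takes a different route from the paper's.

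\textbf{The paper's route.} It obtains only \emph{local} uniform convergence $u^{\varepsilon}(\cdot,\cdot;m_k)\to u^{\varepsilon}(\cdot,\cdot;m)$, and only qualitatively. It reruns the weighted energy estimates of Theorem \ref{existence-u} to extract a subsequence converging in $C^{\alpha/2,\alpha}$ on compacts. It then identifies the limit as a strong solution of \eqref{eq:regularized} with parameter $m$, and uses the uniqueness result Lemma \ref{lem:uniqueness_strong} to get convergence of the whole sequence. Since convergence holds only on compacts, the SDE step must localize with the stopping times $\tau^{\varepsilon}_{k,N}$. It controls $\mathbb{P}(\tau^{\varepsilon}_{k,N}\le T)$ by a uniform fourth-moment bound and finally lets $N\to\infty$.

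\textbf{Your route.} You use the classical regularity of $u^\varepsilon$ directly. You linearize the difference of the two regularized PDEs and apply a whole-line maximum principle (or Feynman--Kac). This is legitimate here because $w$ is a bounded $C^{1,2}$ function, the equation holds pointwise, and the zeroth-order coefficient is bounded by the Lipschitz constants of $D^\varepsilon,V^\varepsilon$ times $M_1^\varepsilon,M_2^\varepsilon$. The result is the global estimate $\sup_{Q_T}|u_k-u|\le C_\varepsilon\|m_k-m\|_\infty$. After that, a textbook Lipschitz-SDE stability bound needs no localization at all.

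\textbf{Comparison.} Your route is shorter, avoids the compactness--uniqueness detour, and yields a rate. In fact it shows that $\Phi^\varepsilon$ is Lipschitz on $\mathcal{K}$. That makes Proposition \ref{prop:Phi-continuous} immediate, without the uniform-integrability step. Your constants blow up as $\varepsilon\to0$, through $L_{\gamma,\varepsilon}$ and the second-derivative bound. The paper's compactness-plus-identification scheme is the same template it later needs in Proposition \ref{prop:m_*} and Proposition \ref{lem:limit-sde}, where the coefficients become discontinuous and no such linearization is available. For the present lemma, at fixed $\varepsilon$, that robustness is not required, and your quantitative argument suffices.
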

\begin{proof}
\textbf{Step 1: Uniform convergence of $u^{\varepsilon}(\cdot,\cdot; m_k)$ to $u^{\varepsilon}(\cdot,\cdot; m)$ on compact sets. } 

For notational convenience, define $u^{\varepsilon}_{k}(t,x) := u^{\varepsilon}(t,x; m_k)$. Applying the same argument as in the proof of Theorem \ref{existence-u}, we can find a subsequence $\{u^{\varepsilon}_{k_j}\}$ and a function $$\tilde{u}^\epsilon\in W^{1,2}_{2,\mathrm{loc}}(Q_T)\cap C^{\alpha/2, \alpha}_{\mathrm{loc}}(Q_T)\quad \text{for all }\alpha\in\left(0,\frac{1}{2}\right)$$
such that, on every compact subset $K\subset Q_T$, $u^{\varepsilon}_{k_j}$ converges to $\tilde{u}^\epsilon$ weakly in $W^{1,2}_2(K)$ and strongly in $C^{\alpha/2, \alpha}(K)$  for any $\alpha < \frac{1}{2}$. Moreover, $|\tilde{u}^\epsilon(t,x)-x|\leq M(T-t)$ for all $(t,x)\in Q_T$. Note that
\begin{align}\label{eq:gamma:epsilon:m}
|\gamma^\varepsilon(t, w; m_k) - \gamma^\varepsilon(t, w; m)| \le &\int_{\mathbb{R}} |\gamma(t, y; m_k) - \gamma(t, y; m)| \omega_\varepsilon(w - y) \md y \nonumber \\
\leq & \frac{\|\omega\|_{\infty}}{\varepsilon}|\gamma_1-\gamma_2| \|m_k - m\|_\infty \to 0,\quad \text{as } k\to \infty.
\end{align}
Combining the above uniform convergence of $u^{\varepsilon}_{k_j}$ on compact sets, we see that the coefficients  $D^{\varepsilon}(t, u^{\varepsilon}_{k_j}; m_{k_j})$ and $V^{\varepsilon}(t, u^{\varepsilon}_{k_j}; m_{k_j})$ converge uniformly (on compact sets) to $D^{\varepsilon}(t, \tilde{u}^\epsilon; m)$ and $V^{\varepsilon}(t, \tilde{u}^\epsilon; m)$, respectively. Similarly, arguing as in the proof of Proposition \ref{lem:dividing_curve}, we conclude that $\tilde{u}^\epsilon$ is a strong solution of the regularized PDE \eqref{eq:regularized}.

On the other hand, by Theorem \ref{epsilon-existence}, the regularized PDE \eqref{eq:regularized} admits a unique classical solution $u^{\varepsilon}:=u^{\varepsilon}(\cdot, \cdot; m)$. It follows from Lemma \ref{lem:uniqueness_strong} that $\tilde{u}^\epsilon \equiv u^{\varepsilon}$.  Therefore, $u^\epsilon_{k_j}$ converges to $u^\epsilon$ in $C(K)$ for any compact set $K \subset Q_T$. Moreover, by a similar argument, any convergent subsequence must have the same limit $u^\epsilon$. This implies that the entire sequence $\{u^{\varepsilon}_{k}\}$ converges uniformly on $K$ to $u^{\varepsilon}$:
\begin{align*}
\lim_{k \to \infty} \sup_{(t,x) \in K} |u^{\varepsilon}_{k}(t,x) - u^{\varepsilon}(t,x)| = 0.
\end{align*}

Recalling $|u^{\varepsilon}_{k}(t,x) - x| \le M(T-t)$, we deduce that for the curves $b^{\varepsilon}(t)$ and $b^{\varepsilon}_{k}(t)$ determined by $u^{\varepsilon}(t, b^{\varepsilon}(t)) = P_{0}(t)m(t)$ and $u^{\varepsilon}_{k}(t, b^{\varepsilon}_{k}(t)) = P_{0}(t)m_{k}(t)$, it holds that
\begin{align*}
    |b^{\varepsilon}_{k}(t) - P_{0}(t)m_{k}(t)| &= |b^{\varepsilon}_{k}(t) - u^{\varepsilon}_{k}(t, b^{\varepsilon}_{k}(t))| \le M(T-t) \le MT.
\end{align*}
This implies $|b^{\varepsilon}_{k}(t)| \le |P_{0}(t)m_{k}(t)| + MT$. There exists a constant $R^b> 0$ independent of $\epsilon$ such that $|b^{\varepsilon}_{k}(t)| \le R^b$ and $|b^{\varepsilon}(t)| \le R^b$ for all $k$ and $t \in [0,T]$ due to the continuity of $P_0(t)$ on $[0,T]$ and the uniform boundedness of $\{m_k\}$.
Furthermore, by the bi-Lipschitz property of $u^{\varepsilon}$ established in Proposition \ref{lma:bound:'},
\begin{align*}
|u^{\varepsilon}(t, b^{\varepsilon}_{k}(t)) - u^{\varepsilon}(t, b^{\varepsilon}(t))| \ge \delta_1 |b^{\varepsilon}_{k}(t) - b^{\varepsilon}(t)|.
\end{align*}
In view that $u^{\varepsilon}_{k}(t, b^{\varepsilon}_{k}(t)) - u^{\varepsilon}(t, b^{\varepsilon}(t)) = P_{0}(t)m_{k}(t) - P_{0}(t)m(t)$, the triangle inequality gives
\begin{align*}
\delta_1 |b^{\varepsilon}_{k}(t) - b^{\varepsilon}(t)|
&\le |u^{\varepsilon}(t, b^{\varepsilon}_{k}(t)) - u^{\varepsilon}_{k}(t, b^{\varepsilon}_{k}(t))| + P_{0}(t)|m_{k}(t) - m(t)|\\
& \le \sup_{(t,x) \in[0,T]\times[-R^b,R^b]} |u^{\varepsilon}_{k}(t,x) - u^{\varepsilon}(t,x)| + P_{\max}\|m_{k} - m\|_{\infty}\to 0 \quad \text{as }k\to \infty.
\end{align*}
\noindent
\textbf{Step 2: Uniform convergence of  $X^{\varepsilon}(\cdot;m_k)$  to $X^{\varepsilon}(\cdot;m)$.} 

\noindent
For notational simplicity, we define $x^{\varepsilon}_{k}(t) := x^{\varepsilon}(t;m_k)$ and $x^{\varepsilon}(t) := x^{\varepsilon}(t;m)$. Moreover, set $\hat{\sigma}^{\varepsilon}_{k}(t, x) := \gamma^{\varepsilon}\big(t, P_{0}(t)^{-1}u^{\varepsilon}_{k}(t, x) ; m_{k}\big)\lambda(t)$ and $\hat{\sigma}^{\varepsilon}(t, x) := \gamma^{\varepsilon}\big(t, P_{0}(t)^{-1}u^{\varepsilon}(t, x) ; m\big)\lambda(t)$. Then, recalling that $c_3 := \gamma_1 \lambda_{\min} > 0$ and $C_2 := \gamma_2 \lambda_{\max} < \infty$, we have
$c_3^2\leq |\hat{\sigma}^{\varepsilon}_{k}(t,x)|^2\leq C_2^2$. 
Observe that
\begin{align}\label{eq:hat:sigma:dif}
|\hat{\sigma}^{\varepsilon}_{k}(s, x) - \hat{\sigma}^{\varepsilon}(s, x)| &= |\lambda(s)| \cdot \left| \gamma^{\varepsilon}\big(s, P_{0}(s)^{-1}u^{\varepsilon}_{k}(s, x) ; m_{k}\big) - \gamma^{\varepsilon}\big(s, P_{0}(s)^{-1}u^{\varepsilon}(s, x) ; m\big) \right| \nonumber\\
&\le \lambda_{\max} \left(L_{\gamma, \varepsilon} P_{\min}^{-1} |u^{\varepsilon}_{k}(s, x) - u^{\varepsilon}(s, x)| + \|\gamma^{\varepsilon}(s, \cdot ; m_{k}) - \gamma^{\varepsilon}(s, \cdot ; m)\|_\infty\right).
\end{align}
As shown in \eqref{eq:gamma:epsilon:m}, we have $\|\gamma^{\varepsilon}(s, \cdot ; m_{k}) - \gamma^{\varepsilon}(s, \cdot ; m)\|_{\infty} \to 0$ as $\|m_k - m\|_{\infty} \to 0$. Combining with the uniform convergence $u^{\varepsilon}_{k} \to u^{\varepsilon}$ on $ [0,T] \times [-N, N]$ and using \eqref{eq:hat:sigma:dif} yields, for any $N > 0$,
\begin{align}\label{eq:convergence:delta:N}
\delta^{\varepsilon}_{k}(N) := \sup_{s \in [0,T], |x| \le N} |\hat{\sigma}^{\varepsilon}_{k}(s, x) - \hat{\sigma}^{\varepsilon}(s, x)| \,\,\to 0 \quad \text{as }k\to\infty.
\end{align}
Before analyzing the dynamics, we must control the difference in initial states. Using $ P_0(0)\xi = u^{\varepsilon}_{k}(0, x^{\varepsilon}_{k}(0)) = u^{\varepsilon}(0, x^{\varepsilon}(0))$, 
and arguing as in \eqref{delta_0}, we obtain
\begin{align*}
\delta_1 |x^{\varepsilon}_{k}(0) - x^{\varepsilon}(0)|\leq |u^{\varepsilon}(0, x^{\varepsilon}_{k}(0)) - u^{\varepsilon}_{k}(0, x^{\varepsilon}_{k}(0))|.
\end{align*}
Moreover, the bound $|u^{\varepsilon}_{k}(0,x) - x| \le MT$ implies $|P_0(0)\xi - x^{\varepsilon}_{k}(0)| \le MT$. $|x^{\varepsilon}_{k}(0)| \le P_0(0) \|\xi\|_{\infty} + MT =: C_6$ almost surely holds for all $k$ because $\xi \in L^\infty$. Thus $ 
\lim\limits_{k \to \infty} |x^{\varepsilon}_{k}(0) - x^{\varepsilon}(0)| = 0$ almost surely by the uniform convergence of $u^{\varepsilon}_{k} \to u^{\varepsilon}$ uniformly on $\{0\} \times [-C_6, C_6]$.
Using the dominated convergence theorem  yields
\begin{align}\label{eq:convergence:x:0}
    \lim\limits_{k \to \infty} \mathbb{E}[|x^{\varepsilon}_{k}(0) - x^{\varepsilon}(0)|^2] = 0.
\end{align}

We now estimate the difference between the processes up to a localization stopping time. Define $\tau_{k, N}^\epsilon := \inf\{t \ge 0 : |x_k^\epsilon(t)| > N\}$. Applying Doob's maximal inequality and It\^{o}'s isometry, we obtain that for any $t \in [0,T]$,
\begin{align*}
&\mathbb{E}\left[ \sup_{s \le t} |x^{\varepsilon}_{k}(s \wedge \tau^{\varepsilon}_{k, N}) - x^{\varepsilon}(s \wedge \tau^{\varepsilon}_{k, N})|^{2} \right] -2\mathbb{E}\left[|x^{\varepsilon}_{k}(0) - x^{\varepsilon}(0)|^{2}\right]\\
\le &\  2\mathbb{E}\left[ \sup_{s \le t} \left| \int_{0}^{s \wedge \tau^{\varepsilon}_{k, N}} \left(\hat{\sigma}^{\varepsilon}_{k}(u, x^{\varepsilon}_{k}(u)) - \hat{\sigma}^{\varepsilon}(u, x^{\varepsilon}(u))\right)^\top \md W(u) \right|^{2} \right] \\
\le &\  8 \mathbb{E} \int_{0}^{t \wedge \tau^{\varepsilon}_{k, N}} |\hat{\sigma}^{\varepsilon}_{k}(u, x^{\varepsilon}_{k}(u)) - \hat{\sigma}^{\varepsilon}(u, x^{\varepsilon}(u))|^{2} \md u \\
\le &\  16 \mathbb{E} \int_{0}^{t \wedge \tau^{\varepsilon}_{k, N}} \Big( |\hat{\sigma}^{\varepsilon}_{k}(u, x^{\varepsilon}_{k}(u)) - \hat{\sigma}^{\varepsilon}(u, x^{\varepsilon}_{k}(u))|^{2} + |\hat{\sigma}^{\varepsilon}(u, x^{\varepsilon}_{k}(u)) - \hat{\sigma}^{\varepsilon}(u, x^{\varepsilon}(u))|^{2} \Big) \md u\\
\le &\ 16T [\delta^{\varepsilon}_{k}(N)]^{2} + 16 L_{\varepsilon,T}^{2} \int_{0}^{t} \mathbb{E}\left[ \sup_{s \le u} |x^{\varepsilon}_{k}(s \wedge \tau^{\varepsilon}_{k, N}) - x^{\varepsilon}(s \wedge \tau^{\varepsilon}_{k, N})|^{2} \right] \md u,
\end{align*}
where the last inequality follows from the fact that 
 $\hat{\sigma}^{\varepsilon}(t,x)$ is globally Lipschitz in $x$ with constant $L_{\varepsilon,T} := \lambda_{\max} L_{\gamma, \varepsilon} P_{\min}^{-1} M_1$. Using Gronwall’s inequality, \eqref{eq:convergence:delta:N} and \eqref{eq:convergence:x:0}, we have that for any $N>0$,
\begin{align*}
&\mathbb{E}\left[ \sup_{t \le T} |x^{\varepsilon}_{k}(t \wedge \tau^{\varepsilon}_{k, N}) - x^{\varepsilon}(t \wedge \tau^{\varepsilon}_{k, N})|^{2} \right] \\
\le &\  \left( 2\mathbb{E}[|x^{\varepsilon}_{k}(0) - x^{\varepsilon}(0)|^{2}] + 16T [\delta^{\varepsilon}_{k}(N)]^{2} \right) \mathrm{e}^{16 L_{\varepsilon,T}^{2} T}\,\,\to 0, \quad \text{as }k\to \infty.
\end{align*}
Hence, for any $\eta>0$, 
\begin{align}\label{eq:convergence:x:mk}
\lim_{k \to \infty} \mathbb{P}\left( \sup_{t \le T} |x^{\varepsilon}_{k}(t \wedge \tau^{\varepsilon}_{k, N}) - x^{\varepsilon}(t \wedge \tau^{\varepsilon}_{k, N})| > \eta \right) = 0.
\end{align}
To lift the localization, we need to bound the tail probability of the stopping time. To achieve this, we first establish a uniform fourth-moment estimate. 
Based on the Burkholder-Davis-Gundy  (BDG) inequality, there exists a universal constant $C$ such that
\begin{align*}
&\mathbb{E}\left[ \sup_{t \in [0,T]} \left| \int_{0}^{t} \left(\hat{\sigma}^{\varepsilon}_{k}(s, x^{\varepsilon}_{k}(s))\right)^\top \md W(s) \right|^{4} \right]\leq C \mathbb{E}\left[\left(\int_0^T\left|\hat{\sigma}^{\varepsilon}_{k}(s, x^{\varepsilon}_{k}(s))\right|^2\md s\right)^2\right]
\leq C C_2^4T^2.
\end{align*}
Then,  using $(a+b)^{4} \le 8(a^{4}+b^{4})$, we obtain\begin{align*}
\mathbb{E}\left[ \sup_{t \in [0,T]} |x^{\varepsilon}_{k}(t)|^{4} \right] \le 8 \mathbb{E}\left[|x^{\varepsilon}_{k}(0)|^{4}\right] + 8CC_2^{4} T^{2}\leq 8C_6^4+8CC_2^{4} T^{2}=:C_7.
\end{align*}
By Markov’s inequality, for any $N > 0$, it holds that $\mathbb{P}\left( \sup_{t \in [0,T]} |x^{\varepsilon}_{k}(t)| > N \right) \le \frac{C_7}{N^{4}}$ and hence $\mathbb{P}(\tau^{\varepsilon}_{k, N} \le T) \le \frac{C_7}{N^{4}}$. Combining the localized estimate with the tail probability, we can now conclude the uniform convergence in probability. Observe that
\begin{align*}
|X^{\varepsilon}(t;m_k) - X^{\varepsilon}(t;m)|&=| P_{0}(t)^{-1}u^{\varepsilon}_{k}(t, x^{\varepsilon}_{k}(t))-P_{0}(t)^{-1}u^{\varepsilon}(t, x^{\varepsilon}(t))|\\
&\le P_{\min}^{-1} |u^{\varepsilon}_{k}(t, x^{\varepsilon}_{k}(t)) - u^{\varepsilon}(t, x^{\varepsilon}_{k}(t))| + P_{\min}^{-1} M_1 |x^{\varepsilon}_{k}(t) - x^{\varepsilon}(t)|.
\end{align*}
There exists some $K=K(N) > 0$ such that for all $k > K$,
\begin{align*}
\sup_{(t,x) \in [0,T]\times[-N,N]} |u^{\varepsilon}_{k}(t,x) - u^{\varepsilon}(t,x)| \le \frac{\eta P_{\min}}{2}
\end{align*}
as $u^{\varepsilon}_{k} \to u^{\varepsilon}$ uniformly on $[0,T]\times[-N,N]$.
Thus, for any $k>K$,
\begin{align*}
&\mathbb{P}\left( \sup_{t \in [0,T]} |X^{\varepsilon}(t;m_k) - X^{\varepsilon}(t;m)| > \eta \right) \\
\le &\  \mathbb{P}(\tau^{\varepsilon}_{k, N} \le T) + \mathbb{P}\left( \sup_{t \in [0,T]} |X^{\varepsilon}(t;m_k) - X^{\varepsilon}(t;m)| > \eta, \, \tau^{\varepsilon}_{k, N} > T \right)\\ \leq &\  \frac{C_7}{N^{4}}+ \mathbb{P}\left( \sup_{t \in [0,T]} |x^{\varepsilon}_{k}(t \wedge \tau^{\varepsilon}_{k, N}) - x^{\varepsilon}(t \wedge \tau^{\varepsilon}_{k, N})| > \frac{\eta P_{\min}}{2 M_1} \right).
\end{align*}By taking $\limsup\limits_{k \to \infty}$, the second term vanishes in view of \eqref{eq:convergence:x:mk}. Hence,
\begin{align*}
\limsup_{k \to \infty} \mathbb{P}\left( \sup_{t \in [0,T]} |X^{\varepsilon}(t;m_k) - X^{\varepsilon}(t;m)| > \eta \right) \le \frac{C_7}{N^{4}}.
\end{align*}
Letting $ N  \to \infty $, we complete the proof.
\end{proof}

Building upon the uniform convergence in probability of the state processes established in Lemma \ref{lem:p-convergence}, we next verify the continuity of the mean-field mapping  $\Phi^{\varepsilon}$.
\begin{proposition}\label{prop:Phi-continuous}
The mapping $\Phi^{\varepsilon}: \mathcal{K} \to \mathcal{K}$ is continuous on $(C([0,T]), \|\cdot\|_{\infty})$. 
\end{proposition}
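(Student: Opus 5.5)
To prove continuity of $\Phi^{\varepsilon}$, I would take a sequence $\{m_k\}\subset\mathcal{K}$ with $\|m_k-m\|_\infty\to0$; the limit $m$ lies in $\mathcal{K}$ because $\mathcal{K}$ is closed. The goal is then $\sup_{t\in[0,T]}|\Phi^{\varepsilon}(m_k)(t)-\Phi^{\varepsilon}(m)(t)|\to0$. Since
\begin{align*}
\sup_{t\in[0,T]}|\Phi^{\varepsilon}(m_k)(t)-\Phi^{\varepsilon}(m)(t)|\le \mathbb{E}\Big[\sup_{t\in[0,T]}|X^{\varepsilon}(t;m_k)-X^{\varepsilon}(t;m)|\Big],
\end{align*}
it suffices to upgrade the uniform convergence in probability from Lemma \ref{lem:p-convergence} to convergence in $L^1$ of the supremum. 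By Vitali's theorem, this reduces to uniform integrability of $\sup_t|X^{\varepsilon}(t;m_k)-X^{\varepsilon}(t;m)|$.

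For uniform integrability I would prove a uniform second-moment (in fact fourth-moment) bound. Since $X^{\varepsilon}(t;m_k)=P_0(t)^{-1}u^{\varepsilon}_k(t,x^{\varepsilon}_k(t))$ and $|u^{\varepsilon}_k(t,x)-x|\le M(T-t)$ by Theorem \ref{epsilon-existence}(i), we get
\begin{align*}
\sup_{t}|X^{\varepsilon}(t;m_k)|\le P_{\min}^{-1}\Big(\sup_t|x^{\varepsilon}_k(t)|+MT\Big).
\end{align*}
The BDG estimate already established in the proof of Lemma \ref{lem:p-convergence} gives $\mathbb{E}[\sup_t|x^{\varepsilon}_k(t)|^4]\le C_7$, uniformly in $k$. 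The same bound holds for $m$, since $|x^{\varepsilon}(0)|\le C_6$ as well. Hence $\sup_k\mathbb{E}[\sup_t|X^{\varepsilon}(t;m_k)-X^{\varepsilon}(t;m)|^2]<\infty$, which gives uniform integrability.

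Combining this with \eqref{eq:convergence:X:mk}, for any $\eta>0$ I would split the expectation on $\{\sup_t|\cdot|\le\eta\}$ and its complement, and bound the complement part by Cauchy--Schwarz:
\begin{align*}
\mathbb{E}\Big[\sup_t|\Delta_k|\Big]\le \eta+\Big(\mathbb{E}\big[\sup_t|\Delta_k|^2\big]\Big)^{1/2}\,\mathbb{P}\Big(\sup_t|\Delta_k|>\eta\Big)^{1/2},
\end{align*}
where $\Delta_k:=X^{\varepsilon}(\cdot;m_k)-X^{\varepsilon}(\cdot;m)$. Letting $k\to\infty$ and then $\eta\downarrow0$ yields the claim. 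That $\Phi^{\varepsilon}$ maps $\mathcal{K}$ into $\mathcal{K}$ is already given by Proposition \ref{prop:K-construct}.

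The main point requiring care is that the moment bound is uniform in $k$. This relies on the constants $C_2$, $C_6$ and $M$ being independent of $m_k$, which holds because $\gamma^{\varepsilon}\le\gamma_2$, $|\lambda|\le\lambda_{\max}$ and $\xi\in L^\infty$. Everything else is routine once Lemma \ref{lem:p-convergence} is available.
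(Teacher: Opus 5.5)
Your proof is correct and has the same skeleton as the paper's. Both reduce to $\mathbb{E}[\sup_t|\Delta_k|]\to 0$, take convergence in probability from Lemma \ref{lem:p-convergence}, and upgrade it through a moment bound that is uniform in $k$.

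The real difference is how you get that moment bound. The paper works with the wealth SDE itself. It writes the diffusion as $\Sigma_k^{\varepsilon}=P_0^{-1}\gamma^{\varepsilon}\,\partial_x u_k^{\varepsilon}\,\lambda$, bounds it using $\partial_x u_k^{\varepsilon}\le M_1$, and then applies H\"older, BDG and Gronwall to the drift $rX+\Sigma^\top\lambda$ to reach \eqref{4-uniform}. You skip the dynamics of $X^{\varepsilon}$ entirely. From $X^{\varepsilon}=P_0^{-1}u_k^{\varepsilon}(t,x_k^{\varepsilon}(t))$ and the a priori bound $|u_k^{\varepsilon}(t,x)-x|\le M(T-t)$ of Theorem \ref{epsilon-existence}, you carry the bound $C_7$ on $\sup_t|x_k^{\varepsilon}|^4$, already available from Lemma \ref{lem:p-convergence}, straight over to $X^{\varepsilon}$. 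This needs no Gronwall step and no derivative bound $M_1$.

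Your constant depends only on $\gamma_2$, $\lambda_{\max}$, $M$, $T$, $P_{\min}$ and $\|\xi\|_\infty$, so it is also independent of $\varepsilon$ and $m$. It would therefore serve just as well where the paper reuses \eqref{4-uniform} for $X_k=X^{\varepsilon_k}(\cdot;m_k)$ in the proof of Theorem \ref{thm:mean-field}. The paper's route is the standard SDE estimate and does not lean on the transformation $X=P_0^{-1}u(t,x)$. Yours is shorter because it exploits that structure. Your explicit Cauchy--Schwarz split is simply a hands-on version of the paper's appeal to uniform integrability and Vitali.
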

\begin{proof}
Define
\begin{align*}
    \Sigma^{\varepsilon}_k(s):=&-\frac{1}{P_0(s)} \left[ Q^{\varepsilon}(s;m_k) - \gamma^{\varepsilon}\big(s, X^{\varepsilon}(s;m_k); m_k\big)\lambda(s) \right]\\
    =&P_0^{-1}(s)\gamma^{\varepsilon}(s, X^{\varepsilon}(s;m_k);m_k)\frac{\partial u^{\varepsilon}_{k}}{\partial x}(s,x^{\varepsilon}_{k}(s))\lambda(s).
\end{align*}
Then there exist two constants $c_4>0$ and $C_8>0$, independent of $k$ and $\varepsilon$, such that  $0 < c_4 \le \left| \Sigma^{\varepsilon}_k(t)^\top \right| \le C_8 < \infty$. 
For $m_{k} \in \mathcal{K}$, the process $\{X^{\varepsilon}(t;m_k)\}$ satisfies 
\begin{align*}
    X^{\varepsilon}(t;m_k) &= \xi + \int_{0}^{t} \left( r(s)X^{\varepsilon}(s;m_k) + \Sigma^{\varepsilon}_k(s)^\top \lambda(s) \right) \md s + \int_{0}^{t} \Sigma^{\varepsilon}_k(s)^\top \md W(s).
\end{align*}
Applying $(a+b+c)^{4} \le 27(a^{4} + b^{4} + c^{4})$ yields
\begin{align*}
|X^{\varepsilon}(t;m_k)|^{4} \le 27| \xi|^{4}  + 27\left| \int_{0}^{t} \left( r(s)X^{\varepsilon}(s;m_k) + \Sigma^{\varepsilon}_k(s)^\top \lambda(s) \right) \md s \right|^{4} + 27\left| \int_{0}^{t} \Sigma^{\varepsilon}_k(s)^\top \md W(s) \right|^{4}.
\end{align*}
Thanks to H\"{o}lder's inequality and $(a+b)^{4} \le 8(a^{4}+b^{4})$, we have 
\begin{align*}
\left| \int_{0}^{t} \left( r(s)X^{\varepsilon}(s;m_k) + \Sigma^{\varepsilon}_k(s)^\top \lambda(s) \right) \md s \right|^{4}
&\le t^{3} \int_{0}^{t} \left| r(s)X^{\varepsilon}(s;m_k) + \Sigma^{\varepsilon}_k(s)^\top \lambda(s) \right|^{4} \md s \\
&\le 8T^{3} \int_{0}^{t} \left( r_{\max}^{4} |X^{\varepsilon}(s;m_k)|^{4} + \lambda_{\max}^{4} C_8^{4} \right) \md s \\
&\le 8T^{4} \lambda_{\max}^{4} C_8^{4} + 8T^{3} r_{\max}^{4} \int_{0}^{t} |X^{\varepsilon}(s;m_k)|^{4} \md s.
\end{align*}
Moreover, by BDG inequality, there exists a universal constant $C > 0$ such that
\begin{align*}
&\mathbb{E}\left[ \sup_{\tau \le t} \left| \int_{0}^{\tau} \Sigma^{\varepsilon}_k(s)^\top \md W(s) \right|^{4} \right] \le C \, \mathbb{E}\left[ \left( \int_{0}^{t} \left| \Sigma^{\varepsilon}_k(s)^\top \right|^{2} \md s \right)^{2} \right] \le C C_8^{4} T^{2}.
\end{align*}
Let $M^{\varepsilon}_{k}(t) := \mathbb{E}\left[ \sup\limits_{s \le t} |X^{\varepsilon}(s;m_k)|^{4} \right]$. The last three inequalities yield
\begin{align*}
M^{\varepsilon}_{k}(t) &\le C_9 + C_{10}\int_{0}^{t} M^{\varepsilon}_{k}(s) \md s,
\end{align*}
where $C_9 := 27 \mathbb{E}[|\xi|^{4}] + 216 T^{4} \lambda_{\max}^{4} C_8^{4} + 27 C C_8^{4} T^{2}$ and $C_{10} := 216 T^{3} r_{\max}^{4}$  are independent of $k$ and $\varepsilon$. 
It follows  from  Gronwall's inequality    that
\begin{align}\label{4-uniform}
\mathbb{E}\left[ \sup_{t \in [0,T]} |X^{\varepsilon}(t;m_k)|^{4} \right] = M^{\varepsilon}_{k}(T) \le C_9 e^{C_{10} T} =: C_{11}.
\end{align}
Given the uniform fourth-moment estimate, we are ready to verify the continuity of the mapping $\Phi^\epsilon$. To this end, define $Y^{\varepsilon}_{k} := \sup\limits_{t \in [0,T]} |X^{\varepsilon}(t;m_k) - X^{\varepsilon}(t;m)|$. We have  $Y^{\varepsilon}_{k} \xrightarrow{\mathbb{P}} 0$ by \eqref{eq:convergence:X:mk}. Applying $(a-b)^{4} \le 8(a^{4}+b^{4})$ yields
\begin{align*}
\mathbb{E}[(Y^{\varepsilon}_{k})^{4}] &
\le 8 \mathbb{E}\left[ \sup_{t \in [0,T]} |X^{\varepsilon}(t;m_k)|^{4} \right] + 8 \mathbb{E}\left[ \sup_{t \in [0,T]} |X^{\varepsilon}(t;m)|^{4} \right] \le 16 C_{11}.
\end{align*}
Then the sequence $\{Y^{\varepsilon}_{k}\}$ is uniformly integrable, leading to $\lim\limits_{k \to \infty} \mathbb{E}[Y^{\varepsilon}_{k}]=0$. By its definition, 
\begin{align*}
\|\Phi^{\varepsilon}(m_{k}) - \Phi^{\varepsilon}(m)\|_{\infty} = \sup_{t \in [0,T]} |\mathbb{E}[X^{\varepsilon}(t;m_k)] - \mathbb{E}[X^{\varepsilon}(t;m)]|\leq \mathbb{E}[Y^{\varepsilon}_{k}]\,\,\to 0 \quad \text{as }\,k\to\infty.
\end{align*}
Thus $\Phi^{\varepsilon}$ is continuous.
\end{proof}

Having both the compactness and continuity properties, we can resort to Schauder's fixed-point theorem to deduce the existence of a mean-field equilibrium for the regularized system.

\begin{theorem}
For any $\varepsilon > 0$, there exists some $m^{\varepsilon} \in C([0,T])$ satisfying
\begin{align}
m^{\varepsilon}(t) = \mathbb{E}[X^{\varepsilon}(t; m^{\varepsilon})]\quad \text{for all }t\in[0,T].\label{eq:fixedpoint:epsilon}
\end{align}
\end{theorem}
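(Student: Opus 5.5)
The plan is to apply Schauder's fixed-point theorem to the regularized mean-field map $\Phi^{\varepsilon}$ on the set $\mathcal{K}$ of Proposition~\ref{prop:K-construct}. We fix $\varepsilon>0$ and regard $\mathcal{K}$ as a subset of the Banach space $(C([0,T]),\|\cdot\|_\infty)$. Proposition~\ref{prop:K-construct} gives that $\mathcal{K}$ is nonempty, closed and convex. Its elements are uniformly bounded by $C_4$ and uniformly Lipschitz with constant $C_5$, so the Arzel\`a--Ascoli theorem shows that $\mathcal{K}$ is relatively compact; being closed, it is compact. The same proposition gives the invariance $\Phi^{\varepsilon}(\mathcal{K})\subseteq\mathcal{K}$.

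The second ingredient is continuity of $\Phi^{\varepsilon}$ on $\mathcal{K}$, which is exactly Proposition~\ref{prop:Phi-continuous}. That result rests on Lemma~\ref{lem:p-convergence} (uniform convergence in probability of $X^{\varepsilon}(\cdot;m_k)$ to $X^{\varepsilon}(\cdot;m)$) together with the uniform fourth-moment bound \eqref{4-uniform}. One point needs checking before quoting these: both were proved for $m_k\in\mathcal{K}$ and an arbitrary limit $m$. Since $\mathcal{K}$ is closed, the limit $m$ also lies in $\mathcal{K}$, so no extension is required.

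With these facts, Schauder's theorem (a continuous self-map of a nonempty compact convex subset of a Banach space has a fixed point) yields $m^{\varepsilon}\in\mathcal{K}\subset C([0,T])$ with $\Phi^{\varepsilon}(m^{\varepsilon})=m^{\varepsilon}$. By the definition of $\Phi^{\varepsilon}$, this is \eqref{eq:fixedpoint:epsilon}.

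The argument is essentially bookkeeping, since the heavy lifting was done in Proposition~\ref{prop:K-construct} and Proposition~\ref{prop:Phi-continuous}. The only delicate point is that $\Phi^{\varepsilon}$ must be well defined for every $m\in\mathcal{K}$. This requires the unique classical solution $u^{\varepsilon}(\cdot,\cdot;m)$ from Theorem~\ref{epsilon-existence} and strong well-posedness of \eqref{eq:x:epsilon}, including solvability of the initial condition $u^{\varepsilon}(0,x^{\varepsilon}(0;m);m)=P_0(0)\xi$. That solvability follows from the bi-Lipschitz bijectivity in Proposition~\ref{lma:bound:'}. We also need $X^{\varepsilon}(t;m)$ to be integrable, which the fourth-moment estimate \eqref{4-uniform} provides. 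We will record this briefly before invoking Schauder.
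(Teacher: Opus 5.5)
Your proposal is correct and follows the paper's own proof essentially verbatim: you apply Schauder's fixed-point theorem to $\Phi^{\varepsilon}$ on the nonempty, compact, convex, invariant set $\mathcal{K}$ from Proposition~\ref{prop:K-construct}, with continuity taken from Proposition~\ref{prop:Phi-continuous}. Your additional remarks on why $\Phi^{\varepsilon}$ is well defined on $\mathcal{K}$ are sound but are not needed beyond what those propositions already supply.
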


\begin{proof}
There exists a non-empty, compact, and convex subset $\mathcal{K} \subset C([0,T])$, independent of $\varepsilon$, such that $\Phi^{\varepsilon}(\mathcal{K}) \subseteq \mathcal{K}$ by Proposition \ref{prop:K-construct}. Moreover, the mapping $\Phi^{\varepsilon}: \mathcal{K} \to \mathcal{K}$ is continuous by Proposition \ref{prop:Phi-continuous}. Hence, there exists some $m^{\varepsilon} \in \mathcal{K}$ such that  $\Phi^{\varepsilon}(m^{\varepsilon}) = m^{\varepsilon}$ by the Schauder fixed-point theorem. This is equivalent to \eqref{eq:fixedpoint:epsilon} by the definition of $\Phi^{\varepsilon}$.
\end{proof}

\subsection{Existence of Mean-field Equilibrium via Convergence Analysis}

For any $\varepsilon > 0$, let $m^{\varepsilon} \in \mathcal{K}$ be a fixed point of $\Phi^{\varepsilon}$, i.e., $m^{\varepsilon} = \Phi^{\varepsilon}(m^{\varepsilon})$. Then there exists a subsequence $\{\varepsilon_{k}\}_{k=1}^{\infty}$ with $\lim\limits_{k \to \infty} \varepsilon_{k} = 0$ and a limit $m_{*} \in \mathcal{K}$ such that $m_k:=m^{\varepsilon_{k}}$  converges uniformly to $m_{*}$ in $C([0,T])$. 
We first state the main result of this paper.

\begin{theorem}\label{thm:mean-field}
There exists a solution $(X_*,P_*,Q_*)$ to the FBSDE system \eqref{eq:fbsde_sufficiency} associated with $m_*$. Moreover, $X_*(t)$ is atomless for all $t\in[0,T]$, {and the pair $(\bar{\pi}_*, m_*)$ is an open-loop mean-field equilibrium, where $\bar{\pi}_*$ is given by
\begin{align*}
    \bar{\pi}_*(t) &:= -\frac{1}{P_0(t)}(\sigma^\top(t))^{-1}[Q_*(t)-\gamma(t,X_*(t); m_*)\lambda(t)], \quad t\in[0,T].
\end{align*}}
\end{theorem}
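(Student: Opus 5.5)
The plan has three stages: build the FBSDE solution for the limit aggregation $m_*$ directly from Section~\ref{Equilibrium Strategy for a Given Mean Field}, then identify $\mathbb{E}[X_*(t)]$ with $m_*(t)$ through a limit in the regularized fixed points, and finally invoke Theorem~\ref{sufficiency}.

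\textbf{Existence, atomlessness and the equilibrium property.} Since $m_*\in\mathcal{K}$, it is Lipschitz, so Assumption~\ref{assumption-2} holds. Theorem~\ref{thm: solving pde} gives a strong solution $u_*=u(\cdot,\cdot;m_*)$ of \eqref{eq:pde}. Proposition~\ref{FBSDE-existence} then gives $(X_*,P_*,Q_*)$ solving \eqref{eq:fbsde_sufficiency} with $m=m_*$, where $X_*(t)=P_0^{-1}(t)u_*(t,x_*(t))$. Atomlessness for $t\in(0,T]$ is Proposition~\ref{atomless}, and at $t=0$ it holds because $X_*(0)=\xi$ is atomless. Theorem~\ref{sufficiency} then shows that $\bar\pi_*$ satisfies condition (a) of Definition~\ref{def:MFE}. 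What remains is consistency, $m_*(t)=\mathbb{E}[X_*(t)]$. One caveat: $u_*$ need not be unique, so I would not take an arbitrary strong solution. I would use the one obtained as the limit of the subsequence $u^{\varepsilon_k}(\cdot,\cdot;m_k)$.

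\textbf{Convergence of the PDE solutions.} First I would show $u^{\varepsilon_k}(\cdot,\cdot;m_k)\to u_*$ locally uniformly along a further subsequence. The energy estimates of Theorem~\ref{existence-u} and the bounds of Proposition~\ref{lma:bound:'} are uniform in $\varepsilon$ and $m$, so compactness in $W^{1,2}_{2,\mathrm{loc}}$ weak and $C^{\alpha/2,\alpha}_{\mathrm{loc}}$ applies. For the limit equation, $\gamma^{\varepsilon_k}(t,P_0^{-1}u^{\varepsilon_k};m_k)\to\gamma(t,P_0^{-1}u_*;m_*)$ at every point off the separation curve $\{u_*=P_0m_*\}$. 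That curve is the graph of a continuous $b^*$ by the bi-Lipschitz property, so it is Lebesgue-null. The argument of Proposition~\ref{lem:dividing_curve} then shows that the limit solves \eqref{eq:pde} with $m_*$. I take this limit as $u_*$.

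\textbf{Convergence of the state processes (main obstacle).} Next I would show $X^{\varepsilon_k}(t;m_k)\to X_*(t)$ in law for each $t$. The uniform fourth moments \eqref{4-uniform} give uniform integrability, so $m_k(t)=\mathbb{E}[X^{\varepsilon_k}(t;m_k)]\to\mathbb{E}[X_*(t)]$, which yields consistency. The auxiliary processes $x_k:=x^{\varepsilon_k}(\cdot;m_k)$ are martingales with diffusion bounded by $C_2$ and bounded initial data (the bound $C_6$). Kolmogorov's criterion then gives tightness of $(x_k,W,\xi)$ in $C([0,T])\times C([0,T];\mathbb{R}^d)\times\mathbb{R}$. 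Also $x_k(0)\to x_*(0)$ a.s., by the argument used in Lemma~\ref{lem:p-convergence}.

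The hard step is identifying every limit point as a solution of $\md x=\hat\sigma_*(t,x)^\top\md W$ with the discontinuous coefficient $\hat\sigma_*(t,x)=\gamma(t,P_0^{-1}u_*(t,x);m_*)\lambda(t)$. I would handle this in two parts:
\begin{enumerate}
\item Show $\int_0^t\hat\sigma_k(s,x_k(s))^\top\md W(s)\to\int_0^t\hat\sigma_*(s,\bar x(s))^\top\md W(s)$. Here $\hat\sigma_k\to\hat\sigma_*$ pointwise off the null curve $\{(t,b^*(t))\}$. Since the diffusions are uniformly nondegenerate ($|\hat\sigma_k|\ge c_3$), Krylov's estimate, uniform in $k$, shows that $x_k$ and the limit $\bar x$ spend arbitrarily little expected time in small neighbourhoods of that curve. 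Combined with local uniform convergence of $u^{\varepsilon_k}$ away from the curve, this gives $\mathbb{E}\int_0^T|\hat\sigma_k(s,x_k)-\hat\sigma_*(s,x_k)|^2\md s\to0$. Continuity of $\hat\sigma_*$ off the curve then lets me pass to the limit via Skorokhod representation.
\item Apply pathwise uniqueness for the limit SDE. This follows from the Nakao/Le Gall-type argument already used in Proposition~\ref{FBSDE-existence} (strong well-posedness of the $Z$-equation, transported through the bi-Lipschitz map $v$). With it, the Gy\"ongy--Krylov lemma upgrades convergence in law to convergence in probability, $x_k\to x_*$.
\end{enumerate}

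\textbf{Conclusion.} From $x_k\to x_*$, $X^{\varepsilon_k}(t;m_k)=P_0^{-1}u^{\varepsilon_k}(t,x_k(t))\to X_*(t)$ in probability, using local uniform convergence of $u^{\varepsilon_k}$ and the fourth-moment tail bound. Uniform integrability then gives $m_*(t)=\lim_k m_k(t)=\mathbb{E}[X_*(t)]$ for all $t$, which is condition (b) of Definition~\ref{def:MFE}.
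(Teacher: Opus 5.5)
Your proposal is correct and follows essentially the same route as the paper. You take $u_*$ as the local-uniform limit of $u^{\varepsilon_k}(\cdot,\cdot;m_k)$ rather than $u(\cdot,\cdot;m_*)$, which is exactly the point of Remark~\ref{rmk:dis:m_*}, and you get condition (a) from Propositions~\ref{FBSDE-existence} and \ref{atomless} and Theorem~\ref{sufficiency}. You then get consistency through tightness, Skorokhod representation, Krylov's estimate near the null separation curve, pathwise uniqueness transported from the $Y$-equation, the Gy\"ongy--Krylov lemma and uniform integrability, just as the paper does.

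The differences are only technical. You control $\hat\sigma_k-\hat\sigma_*$ by an $L^2$ bound from a $k$-uniform Krylov estimate on an $\eta$-neighbourhood of the curve, where the paper uses pointwise convergence plus Lemma~\ref{zero-measure}. You would also pass to the limit in the stochastic integrals directly, where the paper goes through a martingale and quadratic-variation identification.
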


\begin{remark}\label{rem:X-difference}
The process $X_*$ may differ from  $X(\cdot;m_*)$ obtained  in Proposition \ref{FBSDE-existence}; see Remark \ref{rmk:dis:m_*} for details.
\end{remark}

The rigorous proof of Theorem \ref{thm:mean-field} relies on some delicate convergence analysis of the regularized systems. To enhance readability, let us first outline three main steps of the proof:

\begin{itemize}
   \item \textbf{Step 1: Convergence of the PDE solutions  for the  Regularized PDE \eqref{eq:regularized} } 
   We show that a subsequence of solutions to the regularized PDEs \eqref{eq:regularized}, corresponding to $m^{\epsilon_k}$, converges to a limit function $u_*$. We then verify that this limit is a strong solution to the discontinuous PDE \eqref{eq:pde} associated with $m_{*}$, which further yields the existence of a solution to FBSDE \eqref{eq:fbsde_sufficiency}; see Proposition \ref{prop:m_*}.
    
   \item \textbf{Step 2: Tightness and Gy\"ongy-Krylov Conditions.} 
We then prove that the sequence of auxiliary state processes corresponding to $m^{\epsilon_k}$ is tight and satisfies the conditions required by the Gyöngy–Krylov lemma; see Lemma \ref{lem:tight} and Proposition \ref{lem:limit-sde}.

    \item \textbf{Step 3: Convergence and Verification of the Fixed-Point.} Using Gyöngy–Krylov lemma, we obtain convergence in probability of the state processes to a limit process. Together with the uniform integrability of the wealth processes, this allows us to pass to the limit and finally conclude the mean-field consistency condition.
\end{itemize}

Let $u_k(t,x):=u^{\epsilon_k}(t,x;m_k)$ be the solution of the regularized PDE \eqref{eq:regularized} with $\epsilon=\epsilon_k$ and $m=m_k$.
We begin our convergence analysis by extracting a convergent subsequence from the solutions of the regularized PDE problems.
\begin{proposition}\label{prop:m_*}
    There exists a subsequence of  $\{u_k\}$ and a function $u_*\in W^{1,2}_{2,\mathrm{loc}}(Q_T)\cap C^{\alpha/2, \alpha}_{\mathrm{loc}}(Q_T)$ for all $\alpha\in\left(0,1/2\right)$  such that, on every compact subset $K\subset Q_T$, this subsequence converges to $u_*$ weakly in $W^{1,2}_2(K)$ and strongly in $C^{\alpha/2, \alpha}(K)$  for any $\alpha < \frac{1}{2}$. It holds that $u_*$ satisfies $|u_*(t, x) - x| \le M(T - t)$ on $Q_T$, and $\delta_1 \leq \frac{\partial u_*}{\partial x} \leq M_1$ a.e. in $Q_T$. 
In addition, $u_*$ is a strong solution to the PDE \eqref{eq:pde} corresponding to $m = m_*$.
\end{proposition}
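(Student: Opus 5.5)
The argument repeats the compactness part of the proof of Theorem~\ref{existence-u}, with the pair $(\epsilon,m)$ now varying jointly along $(\epsilon_k,m_k)$. The energy estimate there used only the bounds $\kappa\le D^{\epsilon}\le\Lambda$ and $|V^{\epsilon}|\le M$ from \eqref{eq:epsilon:bound}, the classical regularity of $u^\epsilon$ (Theorem~\ref{epsilon-existence} and Corollary~\ref{cor:uepsilon_properties}), and the terminal condition $u^\epsilon(T,x)=x$. None of these depends on $\epsilon$ or $m$. Hence the weighted bounds
\[
\int_{\mathbb{R}}\phi^2\Big(\frac{\partial u_k}{\partial x}\Big)^2\,\md x\le M_\phi e^{2C_1T},
\qquad
\int_0^T\!\!\int_{\mathbb{R}}\phi^2\Big(\frac{\partial^2 u_k}{\partial x^2}\Big)^2\,\md x\,\md t\le C,
\]
hold uniformly in $k$. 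Through the equation they also give uniform local $L^2$ bounds on $\frac{\partial u_k}{\partial t}$. So $\{u_k\}$ is bounded in $W^{1,2}_2(K)$ for every compact $K\subset Q_T$. Banach--Alaoglu, a diagonal argument over an exhausting sequence of compacts, and the compact embedding $W^{1,2}_2(K)\hookrightarrow C^{\alpha/2,\alpha}(K)$ for $\alpha<1/2$ then give a subsequence converging to some $u_*$ weakly in $W^{1,2}_2(K)$ and strongly in $C^{\alpha/2,\alpha}(K)$. Pointwise convergence passes $|u_k(t,x)-x|\le M(T-t)$ to the limit. Proposition~\ref{lma:bound:'} gives $\delta_1\le\frac{\partial u_k}{\partial x}\le M_1$ with constants independent of $\epsilon$ and $m$, and weak convergence preserves these bounds. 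Proposition~\ref{lem:dividing_curve} applies to $u_*$ with $R_*:=P_0m_*$, so it has a continuous separation curve $b_*$ with $u_*(t,b_*(t))=R_*(t)$.

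The main step is identifying the limit equation, because now the discontinuity location also moves: $\gamma^{\epsilon_k}(\cdot\,;m_k)$ smooths a jump sitting at $m_k(t)$, not at $m_*(t)$. I would show that for a.e. $(t,x)$ off the null graph $\{x=b_*(t)\}$,
\[
\gamma^{\epsilon_k}\big(t,P_0(t)^{-1}u_k(t,x);m_k\big)\longrightarrow\gamma\big(t,P_0(t)^{-1}u_*(t,x);m_*\big).
\]
Fix such a point, say with $u_*(t,x)>R_*(t)$, and put $\eta:=u_*(t,x)-R_*(t)>0$. Local uniform convergence $u_k\to u_*$ together with $\|m_k-m_*\|_\infty\to0$ gives, for $k$ large,
\[
P_0(t)^{-1}u_k(t,x)-m_k(t)\ \ge\ \frac{\eta}{2P_{\max}}.
\]
Let $s$ be such that $\mathrm{supp}\,\omega\subset[-s,s]$. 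The mollifier then has support of radius $s\epsilon_k<\eta/(2P_{\max})$ for $k$ large, so it only sees the region above $m_k(t)$, and $\gamma^{\epsilon_k}(t,P_0(t)^{-1}u_k(t,x);m_k)=\gamma_1$ exactly. The case $u_*(t,x)<R_*(t)$ is symmetric and gives $\gamma_2$. This yields a.e. convergence of $D^{\epsilon_k}(t,u_k;m_k)$ and $V^{\epsilon_k}(t,u_k;m_k)$ to $D(t,u_*;m_*)$ and $V(t,u_*;m_*)$. Since the coefficients are bounded, dominated convergence upgrades this to strong $L^2$ convergence after multiplying by any compactly supported $\psi\in L^2(Q_T)$.

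To finish, test \eqref{eq:regularized} for $u_k$ against such $\psi$. Each nonlinear term is a product of a strongly convergent factor ($D\psi$ or $V\psi$) and a weakly convergent factor ($\frac{\partial^2 u_k}{\partial x^2}$ or $\frac{\partial u_k}{\partial x}$), so it converges. The limit gives
\[
\iint_{Q_T}\Big(\frac{\partial u_*}{\partial t}+D(t,u_*)\frac{\partial^2u_*}{\partial x^2}-V(t,u_*)\frac{\partial u_*}{\partial x}\Big)\psi\,\md x\,\md t=0
\]
for all such $\psi$. Hence $u_*$ solves \eqref{eq:pde} with $m=m_*$ a.e., exactly as in Proposition~\ref{lem:dividing_curve}. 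The delicate point is the pointwise coefficient convergence near the moving interface. It relies on two facts: the strict monotonicity $\partial_x u_*\ge\delta_1$, which makes the interface a null graph, and the requirement that the mollification width vanish jointly with $m_k\to m_*$.
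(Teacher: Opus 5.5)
Your proposal is correct and follows the paper's proof essentially step for step. Compactness comes from the $(\epsilon,m)$-uniform weighted energy estimates of Theorem \ref{existence-u}; the mollified coefficients converge a.e.\ off the null set $\{u_*=P_0m_*\}$ because, for large $k$, the shrinking mollifier support sits strictly on one side of the moving jump at $m_k(t)$ (your explicit margin $\eta/(2P_{\max})>s\epsilon_k$ is just a quantitative version of the paper's $d_{t,x}/3$ sign argument); and the limit is identified by the strong-times-weak product argument of Proposition \ref{lem:dividing_curve}.
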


\begin{proof}
The existence of the subsequence, still denoted by $\{u_k\}$, the regularity of $u_*$, and the corresponding estimates follow from arguments analogous to those  of Theorem \ref{existence-u}. We will only prove that $u_*$ is a strong solution of the PDE \eqref{eq:pde}.

Let $\mathcal{N}_u := \{(t, x) \in Q_T : u_*(t,x) = P_0(t)m_*(t)\}$, where $\mathcal{N}_u$ has zero Lebesgue measure.
For any $(t,x) \in Q_T \setminus \mathcal{N}_u$, define $d_{t,x} := |u_*(t,x) - P_0(t)m_*(t)| > 0$. For sufficiently large $k$, we have $|u_k(t,x) - u_*(t,x)| < \frac{d_{t,x}}{3}$ and $P_0(t)|m_k(t) - m_*(t)| < \frac{d_{t,x}}{3}$, which yields
\begin{align*}
(u_k(t,x) - P_0(t)m_k(t))(u_*(t,x) - P_0(t)m_*(t)) \ge d_{t,x}^2 - d_{t,x}\left(\frac{2d_{t,x}}{3}\right) = \frac{d_{t,x}^2}{3} > 0.
\end{align*}
That is, for $k$ sufficiently large, $u_k(t,x) - P_0(t)m_k(t)$ and $u_*(t,x) - P_0(t)m_*(t)$ have the same sign. For sufficiently small $\epsilon_k$, the support of the mollifier lies entirely on one side of the discontinuity point of $\gamma$: on the right-hand side if  $u_*(t,x) - P_0(t)m_*(t)>0$, and on the left-hand side if $u_*(t,x) - P_0(t)m_*(t)<0$.
Consequently, we obtain pointwise convergence a.e. on $Q_T$:
\begin{align*}
\lim_{k\to\infty} D^{\epsilon_k}(t, u_k; m_k) = D(t, u_*; m_*), \quad \lim_{k\to\infty} V^{\epsilon_k}(t, u_k; m_k) = V(t, u_*; m_*).
\end{align*}
Thus, arguing similarly as in Proposition \ref{lem:dividing_curve}, we get that $u_*$ is a strong solution of \eqref{eq:pde}.
\end{proof}

\begin{remark}\label{rmk:dis:m_*}
In Proposition \ref{prop:m_*}, 
$u_*$ is obtained as the limit of a subsequence of the regularized sequence $\{u_k\}$, where for each $\epsilon_k$, the corresponding $m_k$ is different, unlike $u(\cdot,\cdot;m_*)$ constructed in Theorem \ref{existence-u}, where for each $\epsilon_k$, $m$ is fixed as $m_*$. We emphasize that the $u_*$ constructed in Proposition \ref{prop:m_*}, rather than the one constructed in Theorem \ref{existence-u}, is the one we ultimately seek.
\end{remark}

With $u_*$ at hand, we can invoke Proposition \ref{FBSDE-existence} to guarantee the existence of a solution $(X_*, P_*, Q_*)$ to the FBSDE system \eqref{eq:fbsde_sufficiency} corresponding to $m_*$. Furthermore, by Proposition \ref{atomless}, the wealth process $X_*(t)$ is atomless for all $t\in(0,T]$. Consequently, according to the sufficiency result in Theorem \ref{sufficiency}, the strategy $\bar{\pi}_*$ defined in Theorem \ref{thm:mean-field} is an equilibrium strategy given the population aggregation $m_*$, satisfying Condition (a) of Definition \ref{def:MFE}. 
Having established condition (a), the remainder of the proof focuses on condition (b). To conclude that $(\overline{\pi}_{*}, m_{*})$ is an open-loop mean-field equilibrium, it remains only to verify the mean-field consistency condition (b) of Definition \ref{def:MFE}, namely $m_*(t) = \mathbb{E}[X_*(t)]$.
We begin by introducing some notation and establishing auxiliary results.

First, let $b_k$ and $b_*$ be defined by $u_k(t,b_k(t))=R(t;m_k)$ and by $u_*(t,b_*(t))=R(t;m_*)$ respectively. The local uniform convergence $u_k \to u_*$ together with the uniform convergence $m_k \to m_*$ implies
 \begin{align*}
\lim_{k\to\infty} \|b_k - b_*\|_\infty = 0
\end{align*}
by using the same argument as in the proof of Lemma \ref{lem:p-convergence}.

Moreover, let $X_k$ denote the solution to the forward SDE in the FBSDE system \eqref{eq:fbsde:epsilon:xpq} in Proposition \ref{eq:fbsde:epsilon} with $\epsilon=\epsilon_k$ and $m=m_k$.  Similarly, let $x_k(t) := x^{\epsilon_k}(t; m_k)$ and define $\hat{\sigma}_k(t, x) := \hat{\sigma}^{\epsilon_k}(t, x; m_k)$.
There exists a unique $\mathcal{F}_0$-measurable bounded random variable $y_k$ such that $u_k(0, y_k) = P_0(0)\xi$ because $|u_k(0,x)-u_k(0,y)|\geq \delta_1|x-y|$.  Moreover, the uniform convergence $u_k(0, \cdot) \to u_*(0, \cdot)$ on compact sets implies that $y_k$ converges a.s. to a random variable $y_*$ satisfying $u_*(0, y_*) = P_0(0)\xi$ by an argument analogous to Lemma B.1 in \cite{cheng2025equilibrium}. The auxiliary process $x_k$ then satisfies
\begin{align*}
x_k(t) = y_k + \int_0^t \hat{\sigma}_k(s, x_k(s))^\top \md W(s).
\end{align*}
 For the limiting system, define
\[
\hat{\sigma}_*(t,x)
:=\gamma\!\left(t,P_0^{-1}(t)u_*(t,x);m_*\right)\lambda(t),
\]
and let $x_*$ be the strong solution of
\[
x_*(t)
=y_*+\int_0^t\hat{\sigma}_*(s,x_*(s))^\top\md W(s)
\]
obtained by the same argument as in
Proposition \ref{FBSDE-existence}.

To pass to the limit in the fixed-point equation $m_k(t)=\mathbb{E}[X_k(t)]$, we first establish tightness of the  sequence of probability measures induced by $\{x_k\}_{k\ge1}$, which ensures the existence of weakly convergent subsequences.

\begin{lemma}\label{lem:tight}
  The sequence of probability measures induced by $\{x_k\}_{k\ge1}$ on $C([0,T])$ is tight.
\end{lemma}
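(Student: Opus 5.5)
We will verify tightness through the standard Kolmogorov--Chentsov (or Aldous) criterion on $C([0,T])$: it suffices to check (i) tightness of the initial laws $\{x_k(0)\}_{k\ge1}=\{y_k\}_{k\ge1}$ in $\mathbb{R}$, and (ii) a uniform moment bound on increments of the form
\begin{align*}
\sup_{k\ge1}\mathbb{E}\big[|x_k(t)-x_k(s)|^4\big]\le C|t-s|^2,\qquad s,t\in[0,T],
\end{align*}
with $C$ independent of $k$.

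For (i), we use the representation $u_k(0,y_k)=P_0(0)\xi$ together with the uniform bound $|u_k(0,x)-x|\le MT$ from Theorem \ref{epsilon-existence}(i). This yields $|y_k|\le P_0(0)\|\xi\|_\infty+MT=C_6$ almost surely for every $k$. Hence the laws of $y_k$ are all supported in the fixed compact interval $[-C_6,C_6]$ and are trivially tight.

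For (ii), write $x_k(t)-x_k(s)=\int_s^t\hat\sigma_k(r,x_k(r))^\top\md W(r)$. Since $|\hat\sigma_k|\le \gamma_2\lambda_{\max}=C_2$ uniformly in $k$ (because $\gamma^{\epsilon_k}$ takes values in $[\gamma_1,\gamma_2]$), the BDG inequality gives
\begin{align*}
\mathbb{E}|x_k(t)-x_k(s)|^4\le C\,\mathbb{E}\Big(\int_s^t|\hat\sigma_k|^2\md r\Big)^2\le C C_2^4|t-s|^2,
\end{align*}
which is the bound required above. The same argument as in \eqref{4-moment} works as well.

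Kolmogorov's criterion (exponent $4$, with $1+\beta=2$) then yields tightness of the laws of $\{x_k\}$ in $C([0,T])$. No step is a genuine obstacle, because the bounds on the diffusion coefficient hold uniformly in both $\epsilon$ and $m$. The only point that needs care is confirming that $C_6$ and $C_2$ do not depend on $k$. This holds since $M$, $P_0$, $\|\xi\|_\infty$, $\gamma_2$ and $\lambda_{\max}$ are all independent of $\epsilon_k$ and $m_k$.
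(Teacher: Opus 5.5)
Your proposal is correct and follows essentially the same route as the paper. The uniform bound $|\hat\sigma_k|\le C_2$ gives $\mathbb{E}|x_k(t)-x_k(s)|^4\le C|t-s|^2$, and $|u_k(0,x)-x|\le MT$ bounds the initial values $y_k$ almost surely; the paper simply unpacks the Kolmogorov tightness criterion you cite, via a uniform bound on the expected H\"older seminorm (Kolmogorov--Chentsov), Arzel\`a--Ascoli compactness of the sets $A_n$, and Markov's inequality.
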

\begin{proof}
As established in Proposition \ref{atomless}, the diffusion coefficients $\hat{\sigma}_k$ are uniformly bounded independently of $k$ such that $0 < c_3 \le |\hat{\sigma}_k(s, y)| \le C_2 < \infty$. For $0 \le s < t \le T$, arguing as in \eqref{4-moment}, we have
\begin{align}\label{ana-4-moment}
\mathbb{E}\left[|x_k(t) - x_k(s)|^4\right]\leq 3 C_2^4 (t - s)^2.
\end{align}
For any $\gamma \in (0, 1/4)$, the expected H\"{o}lder seminorm $[x_k]_{\gamma}$ is uniformly bounded by a constant $C_{12}$ independent of $k$:
\begin{align*}
\sup_{k} \mathbb{E} \left[ [x_k]_{\gamma} \right] = \sup_{k} \mathbb{E} \left[ \sup_{0 \le s < t \le T} \frac{|x_k(t) - x_k(s)|}{|t-s|^{\gamma}} \right] \le C_{12} < \infty
\end{align*}
by the Kolmogorov-Chentsov theorem (see e.g., \citet*{revuz2013continuous}[Chapter 1, Theorem 2.1]).
The sequence $\{y_k\}$ is almost surely bounded by a constant $C_{13} > 0$ because $\xi \in L^\infty$. Define  $A_n := \{x \in C([0,T]) \mid |x(0)| \le C_{13}, \ [x]_{\gamma} \le n\}$. $A_n$ is relatively compact in $C([0,T])$ by the Arzel\`a--Ascoli theorem. 
\begin{align*}
\sup_{k} \mathbb{P}(x_k \notin A_n) = \sup_{k} \mathbb{P}([x_k]_{\gamma} > n) \le \sup_{k} \frac{\mathbb{E} \left[ [x_k]_{\gamma} \right]}{n} \le \frac{C_{12}}{n}
\end{align*}
by Markov's inequality.
For any $\eta > 0$, choosing $n \ge C_{12}/\eta$ yields $\sup\limits_{k} \mathbb{P}(x_k \notin A_n) \le \eta$, which proves tightness.
\end{proof}

Next, we consider two arbitrary subsequences of $\{x_k\}$ and show that they admit a common subsequence whose joint law converges weakly to a random element supported on the diagonal, as stated in the following proposition. This allows us to apply the Gy\"ongy--Krylov lemma (see e.g. \citet*{gyongy1996existence}[Lemma 1.1]) and conclude that $\{x_k\}$ is a Cauchy sequence in probability.
\begin{proposition}\label{lem:limit-sde}
   Let $\{x_{1,k}\}_{k\ge1}$ and $\{x_{2,k}\}_{k\ge1}$ be two arbitrary subsequences of $\{x_k\}$. Then $\{(x_{1,k},x_{2,k})\}_{k\ge1}$ admits a subsequence converging weakly to a random element supported on the diagonal $\{(x,y): (x,y)\in C([0,T])^2, x=y\}$.
\end{proposition}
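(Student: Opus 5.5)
We follow the standard Gyöngy--Krylov strategy. We identify any weak limit of the pairs as two solutions of the limiting SDE driven by the \emph{same} Brownian motion, and then invoke pathwise uniqueness from Proposition \ref{FBSDE-existence}.

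First, we consider the triple $(x_{1,k},x_{2,k},W)$ together with the initial data $(y_{1,k},y_{2,k},\xi)$. By Lemma \ref{lem:tight} and the a.s. convergence $y_k\to y_*$, this family is tight on $C([0,T])^2\times C([0,T];\mathbb{R}^d)\times\mathbb{R}^3$. Prokhorov's theorem and Skorokhod's representation theorem then give a probability space $(\tilde\Omega,\tilde{\mathcal F},\tilde{\mathbb P})$ carrying random elements $(\tilde x_{1,k},\tilde x_{2,k},\tilde W_k,\tilde y_{1,k},\tilde y_{2,k},\tilde\xi_k)$. These have the same laws as the originals and converge a.s. to a limit $(\tilde x_1,\tilde x_2,\tilde W,\tilde y_1,\tilde y_2,\tilde \xi)$. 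Since $y_{i,k}$ are deterministic functions of $\xi$ converging to $y_*=g(\xi)$, we get $\tilde y_1=\tilde y_2=g(\tilde\xi)$, where $g$ is the map $u_*(0,g(z))=P_0(0)z$. With respect to the filtration generated by the limit objects, $\tilde W$ is a Brownian motion independent of $\tilde \xi$, by the usual argument that independence of increments passes to the limit in law.

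Next, we pass to the limit in the equation
\[
\tilde x_{i,k}(t)=\tilde y_{i,k}+\int_0^t\hat\sigma_{n_i(k)}(s,\tilde x_{i,k}(s))^\top \md \tilde W_k(s).
\]
The plan is to show $\int_0^t\hat\sigma_{k}(s,\tilde x_{i,k}(s))^\top d\tilde W_k\to\int_0^t\hat\sigma_*(s,\tilde x_i(s))^\top \md\tilde W$ in probability. This requires pointwise convergence $\hat\sigma_k(s,x_k)\to\hat\sigma_*(s,x)$ whenever $x_k\to x$ and $x\neq b_*(s)$. That convergence follows as in Proposition \ref{prop:m_*}: $u_k\to u_*$ locally uniformly and $b_k\to b_*$ uniformly, so once $\epsilon_k$ is small enough the mollifier support lies strictly on one side of the discontinuity. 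The exceptional set is handled by showing that $\tilde x_i$ spends zero Lebesgue time on the curve $\{x=b_*(t)\}$, which is a null set in $Q_T$. For this we apply Krylov's estimate uniformly in $k$, since $c_3\le|\hat\sigma_k|\le C_2$. In the form
\[
\tilde{\mathbb E}\int_0^T f(s,\tilde x_{i,k}(s))\md s\le C\|f\|_{L^2(Q_T)},
\]
it passes to the limit for continuous $f$ and hence extends to indicators of open neighbourhoods of the curve of arbitrarily small measure. This gives both the zero-occupation property for the limit and smallness of the contribution from near the curve, uniformly in $k$.

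We then combine this with the uniform bound on $\hat\sigma_k$, dominated convergence, and the standard lemma on convergence of stochastic integrals under a.s. convergence of integrands and integrators (e.g.\ Skorokhod's lemma, \cite{gyongy1996existence}). The conclusion is that both $\tilde x_1$ and $\tilde x_2$ are strong solutions of $x=y_*+\int\hat\sigma_*(s,x)^\top \md\tilde W$ on the same space, with the same initial value. Pathwise uniqueness, shown in Proposition \ref{FBSDE-existence} via the transformation to $Z$ and Theorem 2 of \cite{1980On}, applies verbatim because $u_*$ enjoys the same properties as $u$: Proposition \ref{prop:m_*} gives $\delta_1\le\partial_x u_*\le M_1$, $u_*\in W^{1,2}_{2,\mathrm{loc}}$, and $m_*\in\mathcal K$ is Lipschitz. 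Hence $\tilde x_1=\tilde x_2$ a.s., so the joint law of the limit is supported on the diagonal.

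The main obstacle is the second step, namely identifying the limit coefficient across the moving discontinuity, since $\hat\sigma_k$ converges only off the curve $b_*$. The uniform-in-$k$ Krylov estimate, used to control the time spent near the curve, is the key ingredient. If the $L^2$ form proves awkward, I would first try the time-change/local-time bound: the occupation density of $\tilde x_{i,k}-b_k$ near $0$ is controlled via Tanaka's formula. Here the Lipschitz continuity of $b_k$ would need checking, and failing that I would fall back on Krylov.
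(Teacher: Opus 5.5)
Your proposal is correct and follows the paper's overall architecture. The shared steps are: tightness plus Skorokhod representation; a Krylov estimate uniform in $k$ (from $c_3\le|\hat\sigma_k|\le C_2$), transferred to the limit through open neighbourhoods of small measure, which is exactly Lemma \ref{zero-measure}; convergence of $\hat\sigma_k$ off the curve $b_*$ as in Proposition \ref{prop:m_*}; and pathwise uniqueness through the $Y$/$Z$ equation.

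The genuine difference is how you identify the limit equation. You pass to the limit directly in $\int_0^t\hat\sigma_k^\top\,\md\tilde W_k$ via Skorokhod's lemma on convergence of stochastic integrals. That lemma needs only two inputs: $\hat\sigma_k(\cdot,\tilde x_{i,k}(\cdot))\to\hat\sigma_*(\cdot,\tilde x_i(\cdot))$ in probability in $L^2(0,T)$, which is immediate from the a.e.\ convergence and the bound $C_2$; and $\tilde W$ being a Brownian motion for a filtration to which $\tilde x_i$ is adapted.

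The paper avoids any convergence theorem for stochastic integrals. Instead it uses a martingale-problem argument:
\begin{itemize}
\item it passes to the limit in the martingale identities for $\tilde M_{1,k}$, $\tilde M_{1,k}^2-\int|\hat\sigma_{1,k}|^2\md u$ and $\tilde M_{1,k}\tilde W_k^p-\int\hat\sigma_{1,k}^p\md u$, with uniform integrability from \eqref{ana-4-moment};
\item it reads off $\langle\tilde M_{1,*}\rangle$ and $\langle\tilde M_{1,*},\tilde W^p\rangle$;
\item it shows that $\tilde M_{1,*}-\int\hat\sigma_*^\top\md\tilde W$ has zero quadratic variation.
\end{itemize}
Your route is shorter but imports an external lemma, whereas the paper's is self-contained given L\'evy's characterization.

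Two steps you leave implicit, both spelled out in the paper, should be added.
\begin{itemize}
\item The equation for $\tilde x_{i,k}$ on the Skorokhod space must be justified. The paper does this with the Yamada--Watanabe map $x_{a,k}=\Phi_{a,k}(y_{a,k},W)$ and equality of joint laws.
\item Proposition \ref{FBSDE-existence} proves pathwise uniqueness for $Y$, not for $x$. You therefore have to apply the generalized It\^o formula to $u_*(t,\tilde x_i(t))$ to show it solves \eqref{eq:sde:Y}. This requires the limit to spend zero time in the null set where $u_*$ fails the PDE, and your Krylov argument for the limit process covers this as well.
\end{itemize}
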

\begin{proof}
For $a\in\{1,2\}$, let $\{y_{a,k}\}_{k\ge1}$ and $\{\hat{\sigma}_{a,k}\}_{k\ge1}$ denote the subsequences of $\{y_k\}_{k\ge1}$ and $\{\hat{\sigma}_k\}_{k\ge1}$ corresponding to $\{x_{a,k}\}_{k\ge1}$. Combining the tightness of $\{x_k\}$ with the tightness of $W$ and the almost sure convergence of $\{y_k\}$, we conclude that the joint sequence $\{(x_{1,k},x_{2,k},W,y_{1,k},y_{2,k})\}$ is tight in the product space; see \citet*{parthasarathy2005probability}[Theorem 3.2] for the tightness of $W$. By Prokhorov's theorem, it admits a weakly convergent subsequence, which we continue to index by $k$.
By Skorokhod's representation theorem, there exist a probability space $(\tilde{\Omega},\tilde{\mathcal{F}},\tilde{\mathbb{P}})$, random elements $(\tilde{x}_{1,k},\tilde{x}_{2,k},\tilde{W}_k,\tilde{y}_{1,k},\tilde{y}_{2,k})$ having the same distribution as $(x_{1,k},x_{2,k},W,y_{1,k},y_{2,k})$, and a limiting random element $(\tilde{x}_{1,*},\tilde{x}_{2,*},\tilde{W},\tilde{y}_{1,*},\tilde{y}_{2,*})$ having the weak limit law, such that
\begin{align}\label{eq:converge:weak}
(\tilde{x}_{1,k}, \tilde{x}_{2,k}, \tilde{W}_k, \tilde{y}_{1,k}, \tilde{y}_{2,k})\to (\tilde{x}_{1,*}, \tilde{x}_{2,*}, \tilde{W}, \tilde{y}_{1,*}, \tilde{y}_{2,*}) \quad \text{almost surely}.
\end{align}

Note that $(y_{1,k},y_{2,k})$ converges almost surely, and hence weakly, to $(y_*,y_*)$. Moreover, $(\tilde{y}_{1,k}, \tilde{y}_{2,k})$ converges almost surely, and thus weakly, to $(\tilde{y}_{1,*}, \tilde{y}_{2,*})$. Because $(\tilde{y}_{1,k}, \tilde{y}_{2,k})$ has the same distribution as $(y_{1,k},y_{2,k})$ for each $k$, it follows that the limits $(\tilde{y}_{1,*}, \tilde{y}_{2,*})$ and $(y_*,y_*)$ also have the same distribution. Consequently, $\tilde{y}_{1,*} = \tilde{y}_{2,*}$ almost surely, which we denote by $\tilde{y}_*$. Because $\tilde x_{a,k}(0)=\tilde y_{a,k}$ for $a\in\{1,2\}$, the almost sure convergence in \eqref{eq:converge:weak} also gives
\begin{align*}
\tilde x_{1,*}(0)=\tilde y_{1,*}=\tilde y_*=\tilde y_{2,*}=\tilde x_{2,*}(0),
\qquad \tilde{\mathbb P}\text{-a.s.}
\end{align*}

By the well-posedness of \eqref{eq:x:epsilon} and the Yamada-Watanabe theorem, there exists a Borel measurable map $\Phi_{a,k}$ such that $x_{a,k} = \Phi_{a,k}(y_{a,k}, W)$ $\mathbb{P}$-a.s. for $a\in\{1,2\}$. Moreover, both $y_{1,k}$ and $y_{2,k}$ are measurable functions of $\xi$, thus $W$ is independent of the pair $(y_{1,k},y_{2,k})$; equality of the joint laws therefore shows that $\tilde W_k$ is independent of $(\tilde y_{1,k},\tilde y_{2,k})$. Let
\begin{align*}
\tilde{\mathcal G}^{k}_t
:=\sigma\big(\tilde y_{1,k},\tilde y_{2,k},\tilde W_k(u):0\le u\le t\big),
\end{align*}
and let $\tilde{\mathbb F}^{k}=\{\tilde{\mathcal F}^{k}_t\}_{0\le t\le T}$ be its usual augmentation. Then $\tilde W_k$ is a standard $d$-dimensional Brownian motion with respect to $\tilde{\mathbb F}^{k}$, both $\tilde x_{1,k}$ and $\tilde x_{2,k}$ are $\tilde{\mathbb F}^{k}$-adapted, and
\begin{align*}
\tilde x_{a,k}(t)=\tilde y_{a,k}
+\int_0^t\hat\sigma_{a,k}(u,\tilde x_{a,k}(u))^\top\md\tilde W_k(u),
\qquad a\in\{1,2\}.
\end{align*}
Next, let
\begin{align*}
\tilde{\mathcal G}_t
:=\sigma\big(\tilde x_{1,*}(u),\tilde x_{2,*}(u),\tilde W(u):0\le u\le t\big),
\end{align*}
and let $\tilde{\mathbb F}=\{\tilde{\mathcal F}_t\}_{0\le t\le T}$ be its usual augmentation. 
For any $0 \le s < t \le T$ and any bounded continuous functional $\tilde\Phi:C([0,s])^2\times C([0,s];\mathbb R^d)\to\mathbb R$, define 
$H_k := \tilde \Phi ( \tilde{x}_{1,k}|_{[0,s]}, \tilde{x}_{2,k}|_{[0,s]}, \tilde{W}_k|_{[0,s]} )$ and 
$H_* := \tilde \Phi ( \tilde{x}_{1,*}|_{[0,s]}, \tilde{x}_{2,*}|_{[0,s]}, \tilde{W}|_{[0,s]} )$. 
Because $H_k$ is $\tilde{\mathcal{F}}^k_s$-measurable and $\tilde{W}_k$ is a Brownian motion with respect to $\tilde{\mathcal{F}}^k_t$, we have for $i,j\in\{1,2,\cdots,d\}$
\begin{align*}
&\tilde{\mathbb{E}} [ ( \tilde{W}_k^i(t) - \tilde{W}_k^i(s) ) H_k ] = 0, \\
&\tilde{\mathbb{E}} [ ( \tilde{W}_k^i(t)\tilde{W}_k^j(t) - \tilde{W}_k^i(s)\tilde{W}_k^j(s) - \delta_{ij}(t-s) ) H_k ] = 0.
\end{align*}
By the Skorokhod representation theorem, $H_k \xrightarrow{\tilde{\mathbb{P}}\text{-a.s.}} H_*$. 
Because $\tilde{W}_k$ has the same distribution as a standard Brownian motion, we have
\begin{align*}
\sup_{k \ge 1} \tilde{\mathbb{E}} [ | \tilde{W}_k^i(t) - \tilde{W}_k^i(s) |^4 ] < \infty\quad \text{and}\quad \sup_k\tilde{\mathbb E}\Big[\big|\tilde W_k^i(t)\tilde W_k^j(t)-\tilde W_k^i(s)\tilde W_k^j(s)
-\delta_{ij}(t-s)\big|^2\Big]<\infty.
\end{align*}
The uniform fourth-moment bound and the boundedness of $H_k$ ensure the uniform integrability of 
$\{(\tilde{W}_k^i(t) - \tilde{W}_k^i(s)) H_k\}$ and 
$\{(\tilde{W}_k^i(t)\tilde{W}_k^j(t) - \tilde{W}_k^i(s)\tilde{W}_k^j(s) - \delta_{ij}(t-s)) H_k\}$. 
Therefore, sending $k \to \infty$, we obtain
\begin{align}
\left\{
\begin{aligned}
&\tilde{\mathbb{E}} [ ( \tilde{W}^i(t) - \tilde{W}^i(s) ) H_* ] = 0, \label{eq:lim1} \\
&\tilde{\mathbb{E}} [ ( \tilde{W}^i(t)\tilde{W}^j(t) - \tilde{W}^i(s)\tilde{W}^j(s) - \delta_{ij}(t-s) ) H_* ] = 0. 
\end{aligned}
\right.
\end{align}
Because $\tilde \Phi$ is arbitrary, by the monotone class theorem, \eqref{eq:lim1} are equivalent to
\begin{align*}
&\tilde{\mathbb{E}} [ \tilde{W}^i(t) - \tilde{W}^i(s) \mid \tilde{\mathcal{F}}_s ] = 0, \\
&\tilde{\mathbb{E}} [ \tilde{W}^i(t)\tilde{W}^j(t) - \tilde{W}^i(s)\tilde{W}^j(s) - \delta_{ij}(t-s) \mid \tilde{\mathcal{F}}_s ] = 0.
\end{align*}
By L\'evy's characterization theorem, $\tilde{W}$ is a standard $d$-dimensional Brownian motion with respect to $\tilde{\mathbb{F}}$. 
Thus, the stochastic integrals 
$\int_0^t \hat{\sigma}_*(u, \tilde{x}_{1,*}(u))^\top \,\mathrm{d}\tilde{W}(u)$ and 
$\int_0^t \hat{\sigma}_*(u, \tilde{x}_{2,*}(u)) ^\top\,\mathrm{d}\tilde{W}(u)$ are well-defined.

Let $\tilde {\mathcal{N}}_* =  \{(t, y) : y = b_*(t)\}$.  By Lemma \ref{zero-measure} and Fubini's theorem, there exists a Lebesgue null set $\mathcal{T} \subset [0,T]$ such that for any $t \in [0,T] \setminus \mathcal{T}$, the set $\{\tilde{\omega} \in \tilde{\Omega} : \tilde{x}_{1,*}(t, \tilde{\omega}) = b_*(t)\}$ has zero $\tilde{\mathbb{P}}$ measure. As in the proof of Proposition \ref{prop:m_*}, for each $s \in [0,T] \setminus \mathcal{T}$, we have
\begin{align}\label{eq:converge:tilde:x_k}
\lim_{k\to\infty} \hat{\sigma}_{1,k}(s, \tilde{x}_{1,k}(s)) = \hat{\sigma}_*(s, \tilde{x}_{1,*}(s))\quad \text{a.s.}.
\end{align}
Hence, by the dominated convergence theorem, for every $0\le s<t\le T$, $\tilde{\mathbb P}$-a.s.,
\begin{align*}
\int_s^t|\hat\sigma_{1,k}(u,\tilde x_{1,k}(u))|^2\md u
&\longrightarrow\int_s^t|\hat\sigma_*(u,\tilde x_{1,*}(u))|^2\md u,\\
 \int_s^t\hat\sigma_{1,k}^p(u,\tilde x_{1,k}(u))\md u
&\longrightarrow\int_s^t\hat\sigma_*^p(u,\tilde x_{1,*}(u))\md u,
\qquad p=1,\ldots,d.
\end{align*}

Define $\tilde M_{1,k}(t):=\tilde x_{1,k}(t)-\tilde y_{1,k}$ and $\tilde M_{1,*}(t):=\tilde x_{1,*}(t)-\tilde y_*$.  Retain the joint test variables $H_k,H_*$ introduced above. Because $H_k$ is $\tilde{\mathcal F}^k_s$-measurable and $\tilde M_{1,k}=\int\hat\sigma_{1,k}^\top\md\tilde W_k$, for $p\in\{1,\ldots,d\}$,
\begin{align}\label{eq:tilde:k:M}
\left\{
\begin{aligned}
&\tilde{\mathbb{E}}\left[\left(\tilde{M}_{1,k}(t) - \tilde{M}_{1,k}(s)\right)H_k\right] = 0, \\
&\tilde{\mathbb{E}}\left[\left(\tilde{M}_{1,k}(t)^2 - \tilde{M}_{1,k}(s)^2 - \int_s^t |\hat{\sigma}_{1,k}(u, \tilde{x}_{1,k}(u))|^2 \md u\right) H_k\right] = 0, \\
&\tilde{\mathbb{E}}\left[\left(\tilde{M}_{1,k}(t)\tilde{W}_k^p(t) - \tilde{M}_{1,k}(s)\tilde{W}_k^p(s) - \int_s^t \hat{\sigma}_{1,k}^p(u, \tilde{x}_{1,k}(u)) \md u\right) H_k\right] = 0.
\end{aligned}
\right.
\end{align}
 We have $\tilde{M}_{1,k} \to \tilde{M}_{1,*}$ uniformly $\tilde{\mathbb{P}}$-a.s.  and $H_k \to H_*$ $\tilde{\mathbb{P}}$-a.s. given that $\tilde{x}_{1,k} \to \tilde{x}_{1,*}$, $\tilde{W}_k \to \tilde{W}$ and $\tilde{y}_{1,k} \to \tilde{y}_*$ uniformly $\tilde{\mathbb{P}}$-a.s. 
In addition, the uniform fourth-moment bound on $\{\tilde{x}_{1,k}\}$ established in \eqref{ana-4-moment} ensures the uniform integrability of $\{\tilde{M}_{1,k}(t)\}$, $\{\tilde{M}_{1,k}(t)^2\}$ and $\{\tilde{M}_{1,k}(t)\tilde{W}_k^p(t)\}$. Therefore, sending $k \to \infty$ in \eqref{eq:tilde:k:M}, we obtain
\begin{align*}
&\tilde{\mathbb{E}}\left[\left(\tilde{M}_{1,*}(t) - \tilde{M}_{1,*}(s)\right) H_*\right] = 0, \\
&\tilde{\mathbb{E}}\left[\left(\tilde{M}_{1,*}(t)^2 - \tilde{M}_{1,*}(s)^2 - \int_s^t |\hat{\sigma}_*(u, \tilde{x}_{1,*}(u))|^2 \md u\right) H_*\right] = 0, \\
&\tilde{\mathbb{E}}\left[\left(\tilde{M}_{1,*}(t)\tilde{W}^p(t) - \tilde{M}_{1,*}(s)\tilde{W}^p(s) - \int_s^t \hat{\sigma}_*^p(u, \tilde{x}_{1,*}(u)) \md u\right) H_*\right] = 0.
\end{align*}
Thus, $\{\tilde{M}_{1,*}(t)\}$, $\{\tilde{M}_{1,*}(t)^2 - \int_0^t |\hat{\sigma}_*(u, \tilde{x}_{1,*}(u))|^2 \md u\}$ and $\{\tilde{M}_{1,*}(t)\tilde{W}^p(t) - \int_0^t \hat{\sigma}_*^p(u, \tilde{x}_{1,*}(u)) \md u\}$ are martingales with respect to the filtration $\tilde{\mathbb{F}}$. Hence,
\begin{align*}
\langle \tilde{M}_{1,*} \rangle_t = \int_0^t |\hat{\sigma}_*(u, \tilde{x}_{1,*}(u))|^2 \md u, \quad \langle \tilde{M}_{1,*}, \tilde{W}^p \rangle_t = \int_0^t \hat{\sigma}_*^p(u, \tilde{x}_{1,*}(u)) \md u.
\end{align*}
Define $V_1(t) = \tilde{M}_{1,*}(t) - \int_0^t \hat{\sigma}_*(u, \tilde{x}_{1,*}(u))^\top \md \tilde{W}(u)$. Its quadratic variation is
\begin{align*}
\langle V_1 \rangle_t &= \langle \tilde{M}_{1,*} \rangle_t - 2 \sum_{p=1}^d\int_0^t \hat{\sigma}_*^p(u, \tilde{x}_{1,*}(u)) \md \langle \tilde{M}_{1,*}, \tilde{W}^p \rangle_u + \int_0^t |\hat{\sigma}_*(u, \tilde{x}_{1,*}(u))|^2 \md u = 0.
\end{align*}
Because $\langle V_1 \rangle_t \equiv 0$ and $V_1(0) = 0$, it follows that $V_1(t) = 0$ $\tilde{\mathbb{P}}$-a.s., which implies
\begin{align*}
\tilde{x}_{1,*}(t) = \tilde{y}_* + \int_{0}^{t} \hat{\sigma}_*(u, \tilde{x}_{1,*}(u)) ^\top\md \tilde{W}(u).
\end{align*}
Similarly,
\begin{align*}
\tilde{x}_{2,*}(t) = \tilde{y}_* + \int_{0}^{t} \hat{\sigma}_*(u, \tilde{x}_{2,*}(u))^\top \md \tilde{W}(u).
\end{align*}

Let $\tilde {{\mathcal{N}}}_1 \subset [0,T] \times \mathbb{R}$ be the set where the strong solution $u_*$ fails to satisfy the PDE \eqref{eq:pde} corresponding to $m_*$. The set $\{(s,\omega) \in [0,T]\times \tilde{\Omega} : (s, \tilde{x}_{i,*}(s, \tilde{\omega})) \in \tilde {\mathcal{N}}_1\}$ has zero $\md t \otimes \md \tilde {\mathbb{P}}$ measure for $i=1,2$ by Lemma \ref{zero-measure}.
Applying the generalized It\^{o} formula to the process $\{\tilde{Y}_i(t) := u_*(t, \tilde{x}_{i,*}(t))\}$ implies both $\tilde{Y}_i$ solve \eqref{eq:sde:Y}. The  pathwise uniqueness of $Y$ established in Proposition \ref{FBSDE-existence}, together with the bijectivity of $u_*(t, \cdot)$, yields $\tilde{x}_{1,*} = \tilde{x}_{2,*}$ a.s. The conclusion then follows.
\end{proof}

\noindent
\textbf{Proof of Theorem \ref{thm:mean-field}:}
For every pair of subsequences of $\{x_k\}$, Proposition \ref{lem:limit-sde} provides a further common subsequence whose joint law converges weakly to a probability measure supported on the diagonal.  $\{x_k\}$ is a Cauchy sequence in probability according to the Gy\"ongy-Krylov lemma (see, e.g., \citet*{gyongy1996existence}[Lemma 1.1]), and thus converges uniformly in probability to some limit $x$ on $(\Omega, \mathcal{F}, \mathbb{P})$. It follows that there exists a subsequence of $\{x_k\}$ converging almost surely to $x$. An argument analogous to the proof of Proposition \ref{lem:limit-sde} shows that $x$ satisfies the same  SDE \eqref{eq:sde:x} as $x_*$, and that its strong solution is unique, yielding $x = x_*$. As established via Proposition \ref{FBSDE-existence} and Proposition \ref{atomless}, this ensures that the corresponding limit process $(X_*, P_*, Q_*)$ solves the associated FBSDE \eqref{eq:fbsde_sufficiency} and $X_*$ is atomless. Consequently, Theorem \ref{sufficiency} guarantees that the intra-personal equilibrium condition (a) is satisfied.

Next, to verify condition (b), we note that similar to the proof of Lemma \ref{lem:p-convergence}, we have
\begin{align*}
\lim_{k\to\infty} \mathbb{P}\left(\sup_{t\in[0,T]} |X_k(t) - X_*(t)| > \eta\right) = 0.
\end{align*}
As established in \eqref{4-uniform}, $\{X_k\}$ has a uniform fourth-moment bound independent of $k$. Uniform integrability and Vitali's convergence theorem upgrade the convergence to $L^1$ such that
\begin{align*}
\lim_{k\to\infty} \mathbb{E}\left[\sup_{t\in[0,T]} |X_k(t) - X_*(t)|\right] = 0.
\end{align*}
Taking the limit in the fixed-point equation $m_k(t) = \mathbb{E}[X_k(t)]$ yields
\begin{align*}
m_*(t) = \mathbb{E}[X_*(t)],
\end{align*}
which confirms the mean-field consistency condition (b) in Definition \ref{def:MFE}. Combining with the fact that $\bar{\pi}_*$ in Theorem \ref{thm:mean-field} satisfies the intra-personal equilibrium condition (a) by Theorem \ref{sufficiency}, we conclude that the pair $(\bar{\pi}_*, m_*)$ constitutes an open-loop mean-field equilibrium.

\begin{appendices}
\section{Proofs of Auxiliary Results}\label{sec:App}
\subsection{Proof of Theorem \ref{necessary}}\label{Proof of Necessary}

First, classical results (see, e.g., \citet*{zhang2017backward}[Theorem 4.3.1]) guarantee that BSDE \eqref{eq:bsde} admits a unique solution $(P,Q)\in L_{\mathbb{F}}^{2}(\Omega; C([0, T];\mathbb{R}))\times L_{\mathbb{F}}^{2}(0, T; \mathbb{R}^d)$.
\vskip 1pt \noindent
For any $t \in [0, T)$, $\eta \in L_{\mathcal{F}_{t}}^{2}(\Omega; \mathbb{R}^d)$, and $\epsilon \in (0, T - t)$, we denote by $X^{\epsilon,t}$ the state process $X^{\pi^{t,\epsilon,\eta}}$ under the control $\pi^{t,\epsilon,\eta}$, and denote
\begin{align*}
X_1^{\epsilon,t} := X^{\epsilon,t}(T) - X(T) =\int_t^{t+\epsilon} P_0(s)\eta^{\top}\theta(s) \md s + \int_t^{t+\epsilon} P_0(s)\eta^{\top}\sigma(s) \md W(s).
\end{align*}
Before proceeding with the proof, we first give a more explicit representation in the next result for the difference between the objective functionals corresponding to the perturbed strategy and the equilibrium strategy.
\begin{lemma}\label{lem:functional_limit}
Define the martingale $H:=\left\{H(s) = \mE_s[X(T)]:s\in[0,T]\right\}$. Let $Z \in L_{\mathbb{F}}^{2}(0, T; \mathbb{R}^d)$ be the process given by the martingale representation theorem such that $X(T) = H(\cdot) + \int_{\cdot}^T Z^{\top}(s) \md W(s)$. It holds that
\begin{align}\label{eq:J:dif}
J(t, \pi^{t,\epsilon,\eta}) - J(t, \bar{\pi}) =&-\gamma(t, X(t)) \eta^{\top} \int_t^{t+\epsilon}  P_0(s)\theta(s) \md s +\frac{1}{2} \eta^{\top} \left( \int_t^{t+\epsilon} P_0(s)^2 \sigma(s)\sigma^{\top}(s) \md s \right) \eta\nonumber\\
&+ \eta^{\top} \mE_t\left[ \int_t^{t+\epsilon} P_0(s)\sigma(s)Z(s) \md s \right].
\end{align}
\end{lemma}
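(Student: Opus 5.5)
The plan is a direct second-order expansion of the mean-variance functional. The perturbation is linear in the control, so no limiting argument is needed and the identity \eqref{eq:J:dif} should hold exactly for each $\epsilon\in(0,T-t)$.

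\emph{Step 1: the risk-tolerance term is unchanged.} Since $\pi^{t,\epsilon,\eta}$ and $\bar\pi$ coincide on $[0,t)$, pathwise uniqueness for \eqref{wealthdynamic} gives $X^{\epsilon,t}(s)=X(s)$ for $s\le t$. In particular $X^{\epsilon,t}(t)=X(t)$, so
\begin{align*}
\gamma(t,X^{\epsilon,t}(t);m)=\gamma(t,X(t);m).
\end{align*}
The discontinuity of $\gamma$ therefore plays no role here.

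\emph{Step 2: the formula for $X_1^{\epsilon,t}$.} Set $\Delta:=X^{\epsilon,t}-X$. By linearity of \eqref{wealthdynamic},
\begin{align*}
\md\Delta(s)=r(s)\Delta(s)\,\md s+\mathbbm{1}_{[t,t+\epsilon)}(s)\big(\eta^\top\theta(s)\,\md s+\eta^\top\sigma(s)\,\md W(s)\big),\qquad \Delta(t)=0.
\end{align*}
Apply It\^o's formula to $P_0(s)\Delta(s)$, using $P_0'(s)=-r(s)P_0(s)$ and $P_0(T)=1$. This yields the stated expression for $X_1^{\epsilon,t}$. It lies in $L^2$ because $\eta\in L^2_{\mathcal F_t}$, the coefficients are bounded, and $\eta$ is $\mathcal F_t$-measurable, so it can be pulled outside the integrals.

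\emph{Step 3: expanding the two terms of $J$.} Since $\eta$ is $\mathcal F_t$-measurable, $\mE_t[X_1^{\epsilon,t}]=\eta^\top\int_t^{t+\epsilon}P_0\theta\,\md s$. By Step 1, the expectation part of $J$ therefore contributes exactly the first term of \eqref{eq:J:dif}. For the variance, write
\begin{align*}
\textup{Var}_t[X(T)+X_1^{\epsilon,t}]-\textup{Var}_t[X(T)]=2\,\textup{Cov}_t\big(X(T),X_1^{\epsilon,t}\big)+\textup{Var}_t[X_1^{\epsilon,t}].
\end{align*}
In $X_1^{\epsilon,t}$ the $\md s$-integral is $\mathcal F_t$-measurable, so only the stochastic integral $\int_t^{t+\epsilon}P_0\eta^\top\sigma\,\md W$ matters. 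The conditional It\^o isometry gives
\begin{align*}
\textup{Var}_t[X_1^{\epsilon,t}]=\eta^\top\Big(\int_t^{t+\epsilon}P_0^2\sigma\sigma^\top\,\md s\Big)\eta.
\end{align*}
For the covariance, use the martingale representation $X(T)-H(t)=\int_t^TZ^\top\,\md W$, so that
\begin{align*}
\textup{Cov}_t\big(X(T),X_1^{\epsilon,t}\big)=\mE_t\Big[\int_t^T Z^\top\,\md W\int_t^{t+\epsilon}P_0\,\sigma^\top\eta\,\md W\Big]=\eta^\top\mE_t\Big[\int_t^{t+\epsilon}P_0\,\sigma Z\,\md s\Big],
\end{align*}
again by the conditional It\^o isometry. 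Multiplying the variance difference by $\tfrac12$ and adding the expectation part gives \eqref{eq:J:dif}.

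The main point requiring care, though it is mild, is justifying the conditional It\^o isometry when $\eta$ is only in $L^2_{\mathcal F_t}$ rather than bounded. I would first prove the identity for the truncations $\eta\mathbbm{1}_{\{|\eta|\le n\}}$. Since $\{|\eta|\le n\}\in\mathcal F_t$, it is enough to multiply by indicators of $\mathcal F_t$-sets and pass to the limit $n\to\infty$ using $L^2$ convergence. All the expectations involved are finite because $Z\in L^2_{\mathbb F}$ and $\sigma$, $P_0$ are bounded.
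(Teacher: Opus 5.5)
Your proposal is correct and follows essentially the paper's argument. Both expand the conditional variance of $X(T)+X_1^{\epsilon,t}$, use that the $\md s$-part of $X_1^{\epsilon,t}$ is $\mathcal F_t$-measurable, and evaluate the variance and covariance terms by the conditional It\^o isometry; your Step 1, $X^{\epsilon,t}(t)=X(t)$, is used only implicitly in the paper. The one difference is cosmetic: the paper first applies the law of total variance at time $t+\epsilon$ to replace $X(T)$ by $H(t+\epsilon)$ in the covariance, which by the tower property equals the quantity you compute directly against $\int_t^T Z^\top\md W$.
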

\begin{proof}
A direct computation yields
\begin{align*}
\Var_t(X^{\epsilon,t}(T)) - \Var_t(X(T)) &= \left( \Var_t(\mE_{t+\epsilon}[X^{\epsilon,t}(T)]) + \mE_t[\Var_{t+\epsilon}(X^{\epsilon,t}(T))] \right) \\
&\quad - \left( \Var_t(\mE_{t+\epsilon}[X(T)]) + \mE_t[\Var_{t+\epsilon}(X(T))] \right)\\
&= \Var_t(H(t+\epsilon) + X_1^{\epsilon,t}) - \Var_t(H(t+\epsilon)) \\
&= \Var_t(X_1^{\epsilon,t}) + 2 \Cov_t(H(t+\epsilon), X_1^{\epsilon,t}),
\end{align*}
where the second equality results from the fact that $X_1^{\epsilon,t}$ is $\mathcal{F}_{t+\epsilon}$-measurable and hence $\Var_{t+\epsilon}(X^{\epsilon,t}(T)) = \Var_{t+\epsilon}(X(T) + X_1^{\epsilon,t}) = \Var_{t+\epsilon}(X(T))$. Consequently,
\begin{align*}
J(t, \pi^{t,\epsilon,\eta}) - J(t, \bar{\pi}) = -\gamma(t, X(t)) \mE_t[X_1^{\epsilon,t}] + \Cov_t(H(t+\epsilon), X_1^{\epsilon,t}) + \frac{1}{2} \Var_t(X_1^{\epsilon,t}).
\end{align*}
Let
\begin{align*}
\alpha_\epsilon := \int_t^{t+\epsilon} P_0(s)\eta^{\top}\theta(s) \md s, \quad \beta_\epsilon := \int_t^{t+\epsilon} P_0(s)\eta^{\top}\sigma(s) \md W(s).
\end{align*}
Then  $\mE_t[X_1^{\epsilon,t}] = \mE_t[\alpha_\epsilon] = \alpha_\epsilon$ and $\Var_t(\alpha_\epsilon) = 0$ because  $\eta \in L_{\mathcal{F}_t}^2(\Omega; \mathbb{R}^d)$ and the deterministic coefficients $P_0(s)$ and $\theta(s)$ are continuous. It thus holds that
\begin{align*}
-\gamma(t, X(t)) \mE_t[X_1^{\epsilon,t}] = -\gamma(t, X(t)) \eta^{\top} \int_t^{t+\epsilon}  P_0(s)\theta(s) \md s,
\end{align*}
and 
\begin{align*}
    \Var_t(X_1^{\epsilon,t}) = \Var_t(\beta_\epsilon) = \mE_t[\beta_\epsilon^2]=\eta^{\top} \left( \int_t^{t+\epsilon} P_0(s)^2 \sigma(s)\sigma^{\top}(s) \md s \right) \eta.
\end{align*}
Similarly,
\begin{align*}
    \Cov_t(H(t+\epsilon), X_1^{\epsilon,t}) &= \mE_t[(H(t+\epsilon) - \mE_t[H(t+\epsilon)])(X_1^{\epsilon,t} - \mE_t[X_1^{\epsilon,t}])]\\
    &=\mE_t[(H(t+\epsilon) - H(t))\beta_\epsilon]= \mE_t\left[ \int_t^{t+\epsilon} Z^{\top}(s) \md W(s) \cdot \beta_\epsilon \right] \\&= \mE_t\left[ \int_t^{t+\epsilon} P_0(s)\eta^{\top}\sigma(s)Z(s) \md s \right].
\end{align*}
As a result, \eqref{eq:J:dif} holds.
\end{proof}
We now divide both sides of \eqref{eq:J:dif} by $\epsilon$ and pass to the limit as $\epsilon \to 0^+$. We have the estimate
\begin{align*}
\left| \frac{1}{\epsilon} \int_t^{t+\epsilon} P_0(s)\sigma(s)Z(s) \md s \right| \le K \left( \sup_{h \in \mathbb{Q} \cap (0, T-t]} \frac{1}{h} \int_t^{t+h} |Z(s)| \md s \right) := K M_t(Z)
\end{align*}
because the coefficients $P_0$ and $\sigma$ are bounded by a constant $K$.
Note that the Lebesgue integral is continuous with respect to $h$, the supremum over the countable set of rationals in $(0, T-t]$ is equivalent to the supremum over all real $h \in (0, T-t]$. Hence, the maximal function $M_t(Z)$ is a well-defined $\mathcal{F}_T$-measurable random variable.
Fubini's theorem implies that almost all sample paths of $Z$ belong to $L^2(0, T; \mathbb{R}^d)$ in light of $Z \in L_{\mathbb{F}}^{2}(0, T; \mathbb{R}^d)$. Applying the Hardy–Littlewood maximal inequality (see, e.g., \citet*[Theorem 9]{hardy1930maximal}) pathwise and then taking expectations, we get that $\mE [ \int_0^T |M_t(Z)|^2 \md t ] \le C \mE [ \int_0^T |Z(t)|^2 \md t ] < \infty$. Specifically, $M_t(Z) \in L_{\mathcal{F}_T}^2(\Omega; \mathbb{R}) \subset L_{\mathcal{F}_T}^1(\Omega; \mathbb{R})$ for a.e. $t\in[0,T]$, and thus the dominating random variable $K M_t(Z)$ is integrable. Consequently, using the dominated convergence theorem yields
\begin{align*}
\lim_{\epsilon \to 0^+} \frac{1}{\epsilon} \eta^{\top} \mE_t\left[ \int_t^{t+\epsilon} P_0(s)\sigma(s)Z(s) \md s \right] &= \eta^{\top} \mE_t\left[ \lim_{\epsilon \to 0^+} \frac{1}{\epsilon} \int_t^{t+\epsilon} P_0(s)\sigma(s)Z(s) \md s \right] \\
&= \eta^{\top} \mE_t[ P_0(t)\sigma(t)Z(t) ] = \eta^{\top} P_0(t)\sigma(t)Z(t) \quad \text{a.s.}.
\end{align*}
Moreover, noting $P_0$, $\theta$, and $\sigma$ are deterministic and continuous, we have that, for any $t \in [0,T)$,
\begin{align*}
    &\lim_{\epsilon\to 0+}\frac{-\gamma(t, X(t)) \eta^{\top} \int_t^{t+\epsilon}  P_0(s)\theta(s) \md s +\frac{1}{2} \eta^{\top} \left( \int_t^{t+\epsilon} P_0(s)^2 \sigma(s)\sigma^{\top}(s) \md s \right) \eta}{\epsilon}\\
    =& -\gamma(t, X(t)) \eta^{\top}P_0(t)\theta(t)+\frac12\eta^\top P_0(t)^2\sigma(t)\sigma^{\top}(t)\eta\quad \text{a.s.}.
\end{align*}
As a result, we obtain that for a.e. $t\in[0,T)$
\begin{align*}
\lim_{\epsilon \to 0^+} \frac{J(t, \pi^{t,\epsilon,\eta}) - J(t, \bar{\pi})}{\epsilon} &= \eta^{\top} \left( -\gamma(t, X(t)) P_0(t)\theta(t) + P_0(t)\sigma(t)Z(t) \right) \\
&\quad\quad\quad+ \frac{1}{2} \eta^{\top} \left( P_0(t)^2 \sigma(t)\sigma^{\top}(t) \right) \eta\quad \text{a.s.}.
\end{align*}
In view that $\bar{\pi}$ is an equilibrium and $\theta(\cdot) = \sigma(\cdot)\lambda(\cdot)$, it holds that for a.e. $t\in[0,T)$,
\begin{align}\label{eq:foc}
Z(t)= \gamma(t, X(t)) \lambda(t)
\quad \text{a.s.}.
\end{align}
Define $\tilde{P}(s) := H(s) - P_0(s)X(s)$ and $\tilde{Q}(s) := Z(s) - P_0(s)\sigma^{\top}(s)\bar{\pi}(s)$.  
Applying It\^o's formula gives
\begin{align*}
\md \tilde{P}(s) 
&= Z^{\top}(s) \md W(s) - P_0(s) [ (r(s)X(s) + \bar{\pi}^{\top}(s)\theta(s)) \md s + \bar{\pi}^{\top}(s)\sigma(s) \md W(s) ] + r(s)P_0(s)X(s) \md s \\
&= -P_0(s)\bar{\pi}^{\top}(s)\theta(s) \md s + \tilde{Q}^{\top}(s) \md W(s)\\
&= -\left( \gamma(s, X(s)) |\lambda(s)|^2 - \lambda^{\top}(s)\tilde{Q}(s) \right) \md s + \tilde{Q}^{\top}(s) \md W(s).
\end{align*}
Moreover, we have  the terminal condition $\tilde{P}(T) = H(T) - P_0(T)X(T) = 0$. By the uniqueness of solution to the BSDE \eqref{eq:bsde}, it holds that $(\tilde{P}, \tilde{Q})=(P,Q)$. Substituting $\tilde{Q}(t)=Q (t) = Z(t) - P_0(t)\sigma^{\top}(t)\bar{\pi}(t)$ into \eqref{eq:foc}, we obtain
\begin{align*}
\sigma(t)\left[Q(t) + P_0(t)\sigma^\top(t)\bar{\pi}(t)\right] - \gamma(t, X(t))\sigma(t)\lambda(t) = 0.
\end{align*}
Solving for $\bar{\pi}(t)$, we  obtain \eqref{eq:barU}, which completes the proof. 

\subsection{Proof of Theorem \ref{sufficiency}}\label{Proof of Sufficiency}
Because $Q \in L_{\mathbb{F}}^{2}(0, T; \mathbb{R}^d)$ and the deterministic coefficients are bounded, it follows that $\bar{\pi} \in L_{\mathbb{F}}^{2}(0, T; \mathbb{R}^d)$, and the corresponding state process satisfies $X^{\bar{\pi}}=X$. Define $H(\cdot) = P(\cdot) + P_0(\cdot)X(\cdot) \in L_{\mathbb{F}}^{2}(\Omega; C([0, T];\mathbb{R}))$. Applying It\^o's formula yields
\begin{align*}
\md H(t) &= \left[ -\gamma(t, X(t)) |\lambda(t)|^2 + \lambda^{\top}(t)Q(t) + P_0(t)\bar{\pi}^{\top}(t)\theta(t) \right] \md t + \left[ Q(t) + P_0(t)\sigma^{\top}(t)\bar{\pi}(t) \right]^{\top} \md W(t)\\
&= \left[ Q(t) + P_0(t)\sigma^{\top}(t)\bar{\pi}(t) \right]^{\top} \md W(t),
\end{align*}
where the second equality is from \eqref{eq:barU} and $\theta(\cdot) = \sigma(\cdot)\lambda(\cdot)$.
Then $H(\cdot)$ is a square-integrable martingale. Using the terminal condition $H(T) = P(T) + P_0(T)X(T) = X(T)$, we deduce that $H(t) = \mathbb{E}_t[X(T)]$ and it admits the representation $X(T) = H(t) + \int_t^T Z^{\top}(s) \md W(s)$, $ t\in[0,T]$, where $Z(\cdot) := Q(\cdot) + P_0(\cdot)\sigma^{\top}(\cdot)\bar{\pi}(\cdot) \in L_{\mathbb{F}}^{2}(0, T; \mathbb{R}^d)$.
We have $\sigma(s)Z(s) = \gamma(s, X(s))\theta(s)$ by \eqref{eq:barU}. 
Substituting it into \eqref{eq:J:dif}, we obtain that, for any $t\in[0,T)$, $\epsilon\in(0,T-t)$, and $\eta \in L_{\mathcal{F}_t}^2(\Omega; \mathbb{R}^d)$,
\begin{align*}
J(t, \pi^{t,\epsilon,\eta}) - J(t, \bar{\pi}) &= \eta^{\top} \mathbb{E}_t\left[ \int_t^{t+\epsilon} P_0(s)\theta(s) \left( \gamma(s, X(s)) - \gamma(t, X(t)) \right) \md s \right] \\
&\quad + \frac{1}{2} \eta^{\top} \left( \int_t^{t+\epsilon} P_0(s)^2 \sigma(s)\sigma^{\top}(s) \md s \right) \eta.
\end{align*}
For any $t\in[0,T)$, define $d_t := |X(t) - m(t)|$ and the $\mathcal{F}_t$-measurable random variable
\begin{align*}
\epsilon_{1,t} &:= \inf \left\{ \delta\in[0,T-t] : \sup_{s \in [t, t+\delta]} |m(s) - m(t)| \ge \frac{1}{4} d_t \right\}.
\end{align*}
Because $X(t)$ is atomless, it holds that $d_t > 0$ a.s. 
Combined with the continuity of $m$, it implies that $\epsilon_{1,t} > 0$ a.s.
For any constant $\epsilon \in( 0,T-t)$, define
$\Omega_{t,\epsilon} := \{ \omega \in \Omega : \epsilon_{1,t}(\omega) \ge \epsilon \}$.
It holds that
$\lim\limits_{\epsilon \downarrow 0} \mathbbm{1}_{\Omega_{t,\epsilon}} = 1$ a.s. in view of $\epsilon_{1,t} > 0$ a.s. Let
\begin{align*}
\Delta\gamma(s,t) &:= \gamma(s, X(s)) - \gamma(t, X(t)),\quad s\in[t,T).
\end{align*}
 On $\Omega_{t,\epsilon}$, by the definition of $\epsilon_{1,t}$, for any $s \in (t, t+\epsilon]$, it holds that $|m(s) - m(t)| \le \frac{1}{4} d_t$. 
Moreover, if $\Delta\gamma(s,t) \neq 0$, the terms $(X(t) - m(t))$ and $(m(s) - X(s))$ are either both non-negative or both non-positive. Consequently, on $\Omega_{t,\epsilon}\cap \left\{\Delta\gamma(s,t) \neq 0\right\}$, it holds that
\begin{align*}
|X(s) - X(t)|
&\ge |(X(t) - m(t)) + (m(s) - X(s))| - |m(s) - m(t)| \\
&= |X(t) - m(t)| + |m(s) - X(s)| - |m(s) - m(t)| \\
&\ge d_t + 0 - \frac{1}{4}d_t = \frac{3}{4}d_t.
\end{align*}
Define $K_\gamma = |\gamma_1 - \gamma_2|$. Using $|\Delta\gamma(s,t)| \le K_\gamma \mathbbm{1}_{\{\Delta\gamma(s,t) \neq 0\}}$, we further get
\begin{align}
&\mathbbm{1}_{\Omega_{t,\epsilon}} \mathbb{E}_t \left[ |\Delta\gamma(s,t)| \right]
\le K_\gamma \mathbb{E}_t \left[ \mathbbm{1}_{\Omega_{t,\epsilon}} \mathbbm{1}_{\{\Delta\gamma(s,t) \neq 0\}} \right] \le K_\gamma \mathbb{E}_t \left[ \mathbbm{1}_{\Omega_{t,\epsilon}} \mathbbm{1}_{\left\{|X(s) - X(t)| \ge \frac{3}{4}d_t\right\}} \right]\nonumber \\
&= K_\gamma \mathbbm{1}_{\Omega_{t,\epsilon}} \mathbb{P}_t \left( |X(s) - X(t)| \ge \frac{3}{4}d_t \right)\leq K_\gamma \mathbbm{1}_{\Omega_{t,\epsilon}} \frac{16}{9d_t^2} \mathbb{E}_t \left[ |X(s) - X(t)|^2 \right],\label{eq:delta:gamma}
\end{align}
where the last inequality used the Markov inequality and the fact that $d_t>0$ a.s. Using the Cauchy-Schwarz inequality and It\^o's isometry, we obtain that, for any $s\in[t,T)$,
\begin{align*}
\mathbb{E}_t \left[ |X(s) - X(t)|^2 \right] &= \mathbb{E}_t \left[ \left| \int_t^s \left( r(v)X(v) + \bar{\pi}^{\top}(v)\theta(v) \right) \md v + \int_t^s \bar{\pi}^{\top}(v)\sigma(v) \md W(v) \right|^2 \right] \\
&\le \mathbb{E}_t \left[ \int_t^s \left( 2T \left| r(v)X(v) + \bar{\pi}^{\top}(v)\theta(v) \right|^2 + 2 \left| \bar{\pi}^{\top}(v)\sigma(v) \right|^2 \right) \md v \right] =: \Lambda_t(s).
\end{align*}
Let $Y := \int_t^T \left( 2T \left| r(v)X(v) + \bar{\pi}^{\top}(v)\theta(v) \right|^2 + 2 \left| \bar{\pi}^{\top}(v)\sigma(v) \right|^2 \right) \md v$. Then $\mathbb{E}[Y] < \infty$. Using the dominated convergence theorem yields $\lim\limits_{s \downarrow t} \Lambda_t(s) = 0$ a.s.
Substituting  $\mathbb{E}_t \left[ |X(s) - X(t)|^2 \right] \le \Lambda_t(s)$ into \eqref{eq:delta:gamma} yields
\begin{align*}
\mathbbm{1}_{\Omega_{t,\epsilon}} \mathbb{E}_t \left[ |\Delta\gamma(s,t)| \right] &\le \mathbbm{1}_{\Omega_{t,\epsilon}} \frac{16 K_\gamma}{9d_t^2} \Lambda_t(s).
\end{align*}
Integrating this inequality over the interval $[t, t+\epsilon]$ and dividing by $\epsilon$, we have
\begin{align*}
\mathbbm{1}_{\Omega_{t,\epsilon}} \frac{1}{\epsilon} \int_t^{t+\epsilon} \mathbb{E}_t \left[ |\Delta\gamma(s,t)| \right] \md s &\le \mathbbm{1}_{\Omega_{t,\epsilon}} \frac{16 K_\gamma}{9d_t^2} \frac{1}{\epsilon} \int_t^{t+\epsilon} \Lambda_t(s) \md s\leq \mathbbm{1}_{\Omega_{t,\epsilon}} \frac{16 K_\gamma}{9d_t^2} \Lambda_t(t+\epsilon).
\end{align*}
Let $K_{\theta} := \max\limits_{s \in [0,T]} |P_0(s)\theta(s)|$, 
we further have
\begin{align*}
\left| \frac{1}{\epsilon} \mathbb{E}_t \left[ \int_t^{t+\epsilon} P_0(s)\theta(s) \Delta\gamma(s,t) \md s \right] \right| &= \mathbbm{1}_{\Omega_{t,\epsilon}} \left| \frac{1}{\epsilon} \mathbb{E}_t \left[ \int_t^{t+\epsilon} P_0(s)\theta(s) \Delta\gamma(s,t) \md s \right] \right| \\
&\quad + \mathbbm{1}_{\Omega_{t,\epsilon}^c} \left| \frac{1}{\epsilon} \mathbb{E}_t \left[ \int_t^{t+\epsilon} P_0(s)\theta(s) \Delta\gamma(s,t) \md s \right] \right|\\
&\leq \mathbbm{1}_{\Omega_{t,\epsilon}} \frac{16 K_\gamma K_{\theta}}{9d_t^2} \Lambda_t(t+\epsilon)+K_{\theta} K_\gamma \mathbbm{1}_{\Omega_{t,\epsilon}^c}.
\end{align*}
Therefore
\begin{align*}
\limsup_{\epsilon \downarrow 0} \left| \frac{1}{\epsilon} \mathbb{E}_t \left[ \int_t^{t+\epsilon} P_0(s)\theta(s) \Delta\gamma(s,t) \md s \right] \right| &\le \limsup_{\epsilon \downarrow 0} \left( \frac{16 K_{\theta} K_\gamma}{9d_t^2} \Lambda_t(t+\epsilon) + K_{\theta} K_\gamma \mathbbm{1}_{\Omega_{t,\epsilon}^c} \right) = 0.
\end{align*}
Moreover, we have
\begin{align*}
\lim_{\epsilon \downarrow 0} \frac{1}{2\epsilon} \eta^{\top} \left( \int_t^{t+\epsilon} P_0(s)^2 \sigma(s)\sigma^{\top}(s) \md s \right) \eta &= \frac{1}{2} P_0(t)^2 \eta^{\top} \sigma(t)\sigma^{\top}(t) \eta
\end{align*}
in view that $P_0$, $\theta$, and $\sigma$ are deterministic and continuous.
Consequently, for any $t \in [0, T)$ and $\eta \in L_{\mathcal{F}_t}^2(\Omega; \mathbb{R}^d)$, it holds a.s. that
\begin{align*}
\liminf_{\epsilon \downarrow 0} \frac{J(t, \pi^{t,\epsilon,\eta}) - J(t, \bar{\pi})}{\epsilon} &= 0 + \frac{1}{2} P_0(t)^2 \eta^{\top}\sigma(t)\sigma^{\top}(t)\eta \ge 0.
\end{align*}
Thus, $\bar{\pi}(\cdot)$ is an equilibrium strategy.

\subsection{Proof of Lemma \ref{lem:acl}} \label{appendix:lma:smooth:method}
We only prove (i), because (ii) can be shown by an  analogous argument.  There exists a sequence $\{u^k\} \subset C^\infty(Q_T)$ such that $u^k \to u$ in $W_{2,\mathrm{loc}}^{1,2}(Q_{T})$ because $u\in W_{2,\mathrm{loc}}^{1,2}(Q_{T})$. 
For any $N \in \mathbb{N}$, let $K_N = [-N,N]$. Then
$$ \iint_{[0,T] \times K_N} \left( |u^k - u|^2 + \left| \frac{\partial u^k}{\partial x} - \frac{\partial u}{\partial x} \right|^2 + \left| \frac{\partial^2 u^k}{\partial x^2} - \frac{\partial^2 u}{\partial x^2} \right|^2 \right) \md x \md t \rightarrow 0 \quad \text{as } k \rightarrow \infty. $$
 Let $E_k(t, N) := \int_{K_N} \left( |u^k - u|^2 + \left| \frac{\partial u^k}{\partial x} - \frac{\partial u}{\partial x} \right|^2 + \left| \frac{\partial^2 u^k}{\partial x^2} - \frac{\partial^2 u}{\partial x^2} \right|^2 \right) \md x$, $t\in[0,T]$. 
Then $E_k(\cdot, N)$ converges to $0$ in $L^1([0,T])$ and there exists a subsequence $\{k_j\}_{j=1}^\infty$ such that $\lim\limits_{j \to \infty} E_{k_j}(t, N) = 0$ for a.e. $t \in [0,T]$. Employing a diagonal argument, we  extract a subsequence (still denoted as $\{u^k\}$) and find a set of full measure $\mathcal{T}_1 \subset [0,T]$ such that for every $t \in \mathcal{T}_1$, $u^k(t,\cdot) \rightarrow u(t,\cdot)$, $\frac{\partial u^k}{\partial x}(t,\cdot) \rightarrow \frac{\partial u}{\partial x}(t,\cdot)$, and $\frac{\partial^2 u^k}{\partial x^2}(t,\cdot) \rightarrow \frac{\partial^2 u}{\partial x^2}(t,\cdot)$ in $L^2(K_N)$ for all $N \ge 1$. 
By the completeness of Sobolev spaces, it follows that for each $t \in \mathcal{T}_1$, $u(t,\cdot) \in W_{2,\mathrm{loc}}^2(\mathbb{R})$ and its weak derivatives coincide with $\frac{\partial u}{\partial x}(t,\cdot)$ and $\frac{\partial^2 u}{\partial x^2}(t,\cdot)$. Consequently, $\frac{\partial u}{\partial x}(t,\cdot) \in W_{2,\mathrm{loc}}^1(\mathbb{R})$, which implies that $\frac{\partial u}{\partial x}(t,\cdot)$ is absolutely continuous on any finite interval (see e.g. \citet*{evans2022partial}[Section 5.10, Problem 4]).

\subsection{Proof of Lemma \ref{lma:v:derivative}}\label{appendix:lma:v:derivative}
For a standard mollifier $\tilde \omega_k(\cdot) := k^2 \tilde\omega(k\cdot)$, where $\tilde\omega$ is smooth, non-negative, compactly supported, and satisfies $\iint_{\mathbb{R}^2} \tilde\omega(t,z) \md z \md t = 1$, let us define $u^k := u * \tilde\omega_k \in C^\infty(Q_T)$. This sequence converges to $u$ in $W^{1,2}_2(K)$ for any compact $K \subset Q_T$, and uniformly in $C(K)$. By extracting a subsequence (still denoted by $\{u^k\}$), we may assume that $\partial_x u^k$, $\partial_{xx} u^k$, and $\partial_t u^k$ converge almost everywhere to the corresponding weak derivatives of $u$. The convolution preserves bounds because $\tilde\omega_k \ge 0$ and has unit mass, thus $\delta_1 \le \frac{\partial u^k}{\partial x} \le M_1$.

There exists a unique $v^k \in C^\infty(Q_T)$ such that $u^k(t, v^k(t,y)) = y$ by the implicit function theorem. Differentiating this equality yields
\begin{align*}
\frac{\partial v^k}{\partial y} &= \left(\frac{\partial u^k}{\partial x}(t, v^k)\right)^{-1}, \quad 
\frac{\partial^2 v^k}{\partial y^2} = -\frac{\partial^2 u^k}{\partial x^2}(t, v^k)\left(\frac{\partial u^k}{\partial x}(t, v^k)\right)^{-3}, \\
\frac{\partial v^k}{\partial t} &= -\frac{\partial u^k}{\partial t}(t, v^k)\left(\frac{\partial u^k}{\partial x}(t, v^k)\right)^{-1}.
\end{align*}
For any compact set $K \subset Q_T$, let $K' \subset Q_T$ be a compact neighborhood containing the points $(t, v(t,y))$ and $(t, v^k(t,y))$ for all $(t,y) \in K$ and all sufficiently large $k$. Then
\begin{align}\label{delta_0}
&\left|u(t, v(t,y)) - u^k(t, v(t,y))\right|=\left|u^k(t, v^k(t,y)) - u^k(t, v(t,y))\right|\nonumber \\
=& \left| \int_{v(t,y)}^{v^k(t,y)} \frac{\partial u^k}{\partial x}(t, \xi) \md \xi \right|\ge \delta_1 \left|v^k(t,y) - v(t,y)\right|\quad \text{for all }(t,y)\in K. 
\end{align}
As a result, $v^k$ converges uniformly to $v$ on $K$ because $u^k$ converges uniformly to $u$ on $K'$.

To prove that $v \in W_{2,\mathrm{loc}}^{1,2}(Q_T)$ with weak derivatives defined in \eqref{eq:derivative:1}–\eqref{eq:derivative:2}, it suffices to show that the derivatives of the smooth approximating sequence $v^k$ converge in $L^2(K)$ to the respective expressions. As the arguments for all three derivatives are similar, we only present the proof for $\frac{\partial^2 v^k}{\partial y^2}$. Define $g(t,y) := -\frac{\partial^2 u}{\partial x^2}(t, v)\left(\frac{\partial u}{\partial x}(t, v)\right)^{-3}$. Then
\begin{align}\label{eq:estimate:split}
\left\| \frac{\partial^2 v^k}{\partial y^2} - g \right\|_{L^2(K)} &\le \left\| \frac{\frac{\partial^2 u^k}{\partial x^2}(t, v^k)}{\left(\frac{\partial u^k}{\partial x}(t, v^k)\right)^3} - \frac{\frac{\partial^2 u}{\partial x^2}(t, v^k)}{\left(\frac{\partial u}{\partial x}(t, v^k)\right)^3} \right\|_{L^2(K)} \nonumber \\ & + \left\| \frac{\frac{\partial^2 u}{\partial x^2}(t, v^k)}{\left(\frac{\partial u}{\partial x}(t, v^k)\right)^3} - \frac{\frac{\partial^2 u}{\partial x^2}(t, v)}{\left(\frac{\partial u}{\partial x}(t, v)\right)^3} \right\|_{L^2(K)}.
\end{align}
Using $y = u^k(t,x)$, $\md y = \frac{\partial u^k}{\partial x} \md x \le M_1 \md x$ and the change of variables $x = v^k(t,y)$, we have 
\begin{align*}
\left\| \frac{\frac{\partial^2 u^k}{\partial x^2}(t, v^k)}{\left(\frac{\partial u^k}{\partial x}(t, v^k)\right)^3} - \frac{\frac{\partial^2 u}{\partial x^2}(t, v^k)}{\left(\frac{\partial u}{\partial x}(t, v^k)\right)^3} \right\|_{L^2(K)}^2 &\le M_1 \iint_{K'} \left| \frac{\frac{\partial^2 u^k}{\partial x^2}}{\left(\frac{\partial u^k}{\partial x}\right)^3} - \frac{\frac{\partial^2 u}{\partial x^2}}{\left(\frac{\partial u}{\partial x}\right)^3} \right|^2 \md x \md t.
\end{align*}
We decompose the integrand and derive that
\begin{align*}
\left| \frac{\frac{\partial^2 u^k}{\partial x^2}}{\left(\frac{\partial u^k}{\partial x}\right)^3} - \frac{\frac{\partial^2 u}{\partial x^2}}{\left(\frac{\partial u}{\partial x}\right)^3} \right|^2 &\le 2 \left| \frac{\frac{\partial^2 u^k}{\partial x^2} - \frac{\partial^2 u}{\partial x^2}}{\left(\frac{\partial u^k}{\partial x}\right)^3} \right|^2 + 2 \left| \frac{\partial^2 u}{\partial x^2} \left( \frac{1}{\left(\frac{\partial u^k}{\partial x}\right)^3} - \frac{1}{\left(\frac{\partial u}{\partial x}\right)^3} \right) \right|^2 \\
&\le \frac{2}{\delta_1^6} \left| \frac{\partial^2 u^k}{\partial x^2} - \frac{\partial^2 u}{\partial x^2} \right|^2 + 2 \left| \frac{\partial^2 u}{\partial x^2} \right|^2 \left| \frac{1}{\left(\frac{\partial u^k}{\partial x}\right)^3} - \frac{1}{\left(\frac{\partial u}{\partial x}\right)^3} \right|^2.
\end{align*}
The integral of the first term converges to zero due to the $L^2$ convergence of $\frac{\partial^2 u^k}{\partial x^2}$. For the second term, the integrand converges to zero almost everywhere and is dominated by $\frac{8M_1^6}{\delta_1^{12}} \left| \frac{\partial^2 u}{\partial x^2} \right|^2 \in L^1(K')$. Hence, its integral also tends to zero by the dominated convergence theorem. This verifies that the first term on the right hand side of \eqref{eq:estimate:split} converges to zero.

Let $F(t,x) := \frac{\partial^2 u}{\partial x^2}(t,x) \left(\frac{\partial u}{\partial x}(t,x)\right)^{-3} \in L^2(K')$. For any $\rho>0$, there exists $F_{\rho} \in C_c(Q_T)$ such that $\|F - F_{\rho}\|_{L^2(K')} < \rho$. We decompose the second term on the right-hand side of \eqref{eq:estimate:split} as
\begin{align*}
\left\| F(t, v^k) - F(t, v) \right\|_{L^2(K)} \le \left\| F(t, v^k) - F_{\rho}(t, v^k) \right\|_{L^2(K)} &+ \left\| F_{\rho}(t, v^k) - F_{\rho}(t, v) \right\|_{L^2(K)} \\&+ \left\| F_{\rho}(t, v) - F(t, v) \right\|_{L^2(K)}.
\end{align*}
The first and third norms are bounded by $\sqrt{M_1}\rho$  by applying the change of variables $x = v^k(t,y)$ and $x = v(t,y)$, respectively. The second term converges to zero due to the global uniform continuity of $F_{\rho}$ and the uniform convergence of $v^k$. Thus, $\frac{\partial^2 v^k}{\partial y^2}$ converges to $g$ in $L^2(K)$.
\section{Auxiliary Result of Equation \eqref{eq:regularized}}\label{sec:AppB}
\begin{lemma}\label{lem:uniqueness_strong}
There exists a unique strong solution $u^\varepsilon \in W^{1,2}_{2,\mathrm{loc}}(Q_T)$ to the regularized PDE \eqref{eq:regularized} satisfying $|u^\varepsilon(t,x)-x|\leq M(T-t)$ and $\partial_x u^\varepsilon \in L^\infty(Q_T)$.
\end{lemma}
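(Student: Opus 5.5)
Existence is already given by Theorem \ref{epsilon-existence}: the classical solution $u^\varepsilon\in C^{1,2}$ lies in $W^{1,2}_{2,\mathrm{loc}}(Q_T)$, satisfies $|u^\varepsilon(t,x)-x|\le M(T-t)$, and has $\partial_x u^\varepsilon$ bounded by $M_1^\varepsilon$. So only uniqueness needs proof. The plan is an energy estimate with the exponential weight $\phi(x)=e^{-\sqrt{1+x^2}}$ from the proof of Theorem \ref{existence-u}. Here the coefficients $D^\varepsilon,V^\varepsilon$ are Lipschitz in their second argument, with constants $L_{D,\varepsilon},L_{V,\varepsilon}$, and this is what makes the argument close.

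Let $u_1,u_2$ be two strong solutions in the class described, and set $w:=u_1-u_2$. Then $|w|\le 2M(T-t)$, $w(T,\cdot)=0$, $\partial_x w\in L^\infty$, and a.e.
\begin{align*}
\partial_t w + D^\varepsilon(t,u_1)\,\partial_{xx} w - V^\varepsilon(t,u_1)\,\partial_x w = -\big[D^\varepsilon(t,u_1)-D^\varepsilon(t,u_2)\big]\partial_{xx}u_2 + \big[V^\varepsilon(t,u_1)-V^\varepsilon(t,u_2)\big]\partial_x u_2 .
\end{align*}
The second bracket is at most $L_{V,\varepsilon}\|\partial_x u_2\|_\infty |w|$. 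The first is at most $L_{D,\varepsilon}|w|\,|\partial_{xx}u_2|$, and here $\partial_{xx}u_2$ is only locally $L^2$.

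To avoid having $\partial_{xx}u_2$ in the estimate, I will not test with $w$ itself. Instead I divide the equation by $D^\varepsilon(t,u_1)\ge\kappa$, giving
\begin{align*}
\partial_{xx}w = \frac{1}{D^\varepsilon(t,u_1)}\Big(-\partial_t w + V^\varepsilon\,\partial_x w + \text{RHS}\Big),
\end{align*}
and I test with $\phi^2\partial_{xx}w$, mirroring \eqref{eq:energy}. For this I work in the time-reversed variable and use the $W^{1,2}_2$ energy identity
\begin{align*}
\int\phi^2\,\partial_t w\,\partial_{xx}w\,\md x = -\tfrac12\frac{\md}{\md t}\int\phi^2(\partial_x w)^2\,\md x + \text{boundary terms in }\phi' .
\end{align*}
This identity is justified by Steklov averaging or mollification, which is legitimate because $w\in W^{1,2}_{2,\mathrm{loc}}$ and $\phi$ decays exponentially while $\partial_x w$ is bounded.

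\textbf{Main obstacle.} The troublesome term is $L_{D,\varepsilon}|w|\,|\partial_{xx}u_2|\,|\partial_{xx}w|$: it carries the unbounded factor $\partial_{xx}u_2$ alongside $|w|$. I would handle it in one of two ways.
\begin{enumerate}
\item First choice: show that any strong solution in the class automatically has $\partial_{xx}u_2\in L^\infty$. Since $D^\varepsilon(t,u_2(t,x))$ is Lipschitz in $x$, for fixed $\varepsilon$ one can bootstrap interior $L^p$ estimates, and then Schauder estimates in $x$ (as in the Lorenzi argument used in Theorem \ref{epsilon-existence}), to get $\partial_{xx}u_2$ bounded.
\item Alternative: bound $|w|\le\|\partial_x w\|$-type quantities via the weighted one-dimensional Sobolev embedding $\sup_x \phi|w|\lesssim \|\phi w\|_{L^2}+\|\phi\partial_x w\|_{L^2}$, and absorb $\int\phi^2|\partial_{xx}u_2|^2$, which is locally integrable in time by the local energy bound \eqref{2-bound}, through a Gronwall argument with a time-integrable coefficient.
\end{enumerate}

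With either route, set $Y(t)=\int\phi^2\big(w^2+(\partial_x w)^2\big)\md x$. Combining the $\phi^2\partial_{xx}w$ test with a plain $\phi^2 w$ test should give
\begin{align*}
-Y'(t)\le g(t)\,Y(t),\qquad g\in L^1(0,T),
\end{align*}
with $Y(T)=0$. Backward Gronwall then gives $Y\equiv0$, hence $w\equiv0$, which proves uniqueness.
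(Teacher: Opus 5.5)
Your existence part is fine, but the uniqueness argument has a genuine gap at exactly the step you call the main obstacle, and neither proposed fix closes it. Testing with $\phi^2\partial_{xx}w$ does not remove the forcing term $[D^\epsilon(t,u_1)-D^\epsilon(t,u_2)]\partial_{xx}u_2$, as you then notice yourself. It only adds the burden of justifying an $H^1$-level energy identity for a $W^{1,2}_{2,\mathrm{loc}}$ function.

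Your alternative route does not close as stated. With a single weight you get
$\int\phi^2|w||\partial_{xx}u_2||\partial_{xx}w|\,\md x\le \sup_x(\phi|w|)\,\big(\int|\partial_{xx}u_2|^2\,\md x\big)^{1/2}\big(\int\phi^2|\partial_{xx}w|^2\,\md x\big)^{1/2}$.
Once one factor of $\phi$ is spent on $\sup_x\phi|w|\lesssim Y^{1/2}$ and one on the dissipation, no weight is left for $\partial_{xx}u_2$. You would therefore need an \emph{unweighted} bound on $\int_{\mR}|\partial_{xx}u_2|^2\,\md x$, which is not available. Moreover, \eqref{2-bound} was proved for the classical regularized solutions, not for an arbitrary strong solution $u_2$.

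Your first choice is a nontrivial regularity theorem for coefficients that are merely measurable in $t$. To transfer Lorenzi's $B([0,T];C^{2+\alpha})$ bound to $u_2$, you would need to know that $u_2$ coincides with the solution that theorem produces. That is a uniqueness statement for strong solutions of the linearized problem, which is essentially what you are trying to prove.

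The missing idea is to break the symmetry. Theorem \ref{epsilon-existence} already gives a classical solution $u^\epsilon$ with $\|\partial_{xx}u^\epsilon\|_\infty\le M_2^\epsilon$. So uniqueness only requires showing that an arbitrary strong solution $\tilde u$ in the class coincides with $u^\epsilon$. Take $u_1=\tilde u$, so that it appears in the coefficients, and $u_2=u^\epsilon$, so that it appears in the forcing. The right-hand side is then bounded by $(L_{D,\epsilon}M_2^\epsilon+L_{V,\epsilon}M_1)|w|$, and a plain $L^2$ energy estimate suffices.

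This is what the paper does:
\begin{enumerate}
\item It tests with $\zeta_n\phi^2 w$, where $\zeta_n$ is a spatial cut-off.
\item It uses $w\,\partial_t w=\tfrac12\partial_t w^2$, the absolute continuity from Lemma \ref{lem:acl}, and $w(T,\cdot)=0$.
\item It integrates the second-order term by parts. There $\partial_x D^\epsilon(t,\tilde u)=\partial_y D^\epsilon\,\partial_x\tilde u$ is bounded precisely because of the hypothesis $\partial_x\tilde u\in L^\infty(Q_T)$.
\item It absorbs the $\phi^2|\partial_x w|^2$ terms using $D^\epsilon\ge\kappa$ and Young's inequality.
\end{enumerate}
This yields $\int\phi^2 w(t,x)^2\,\md x\le C(\epsilon)\int_t^T\!\int\phi^2 w^2\,\md x\,\md s$, and backward Gronwall gives $w\equiv 0$. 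No test with $\phi^2\partial_{xx}w$, no $H^1$-level energy, and no extra regularity of strong solutions is needed.
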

\begin{proof}
Based on Theorem \ref{epsilon-existence}, the regularized PDE \eqref{eq:regularized} admits a classical solution $u^{\varepsilon}$ satisfying $\partial_x u^{\varepsilon}, \partial^2_{xx} u^{\varepsilon} \in L^\infty(Q_T)$ and $|u^{\varepsilon}(t,x) - x| \le M(T-t)$. The existence of the strong solution is immediately guaranteed because any classical solution is inherently a strong solution in $W^{1,2}_{2,\mathrm{loc}}(Q_T)$.

Suppose that $\tilde{u}^\epsilon \in W^{1,2}_{2,\mathrm{loc}}(Q_T)$ is another strong solution to \eqref{eq:regularized} satisfying $|\tilde{u}^\epsilon(t,x)-x|\leq M(T-t)$ and $\partial_x \tilde{u}^\epsilon \in L^\infty(Q_T)$. Define $\varphi^\epsilon := \tilde{u}^\epsilon - u^{\varepsilon} \in W_{2,\mathrm{loc}}^{1,2}(Q_T)$. Then $\varphi^\epsilon$ is a strong solution of the  equation:
\begin{align}\label{eq:varphi:epsilon}
\frac{\partial \varphi^\epsilon}{\partial t} + D^{\varepsilon}(t, \tilde{u}^\epsilon; m)\frac{\partial^2 \varphi^\epsilon}{\partial x^2} - V^{\varepsilon}(t, \tilde{u}^\epsilon; m)\frac{\partial \varphi^\epsilon}{\partial x} = F_\varphi^\epsilon,
\end{align}
where
\begin{align*}
F_\varphi^\epsilon := \big[D^{\varepsilon}(t, u^{\varepsilon}; m) - D^{\varepsilon}(t, \tilde{u}^\epsilon; m)\big]\frac{\partial^2 u^{\varepsilon}}{\partial x^2} - \big[V^{\varepsilon}(t, u^{\varepsilon}; m) - V^{\varepsilon}(t, \tilde{u}^\epsilon; m)\big]\frac{\partial u^{\varepsilon}}{\partial x}.
\end{align*}
Recall that $D^{\varepsilon}(t, \cdot; m)$ and $V^{\varepsilon}(t, \cdot; m)$ are Lipschitz continuous with some constants $L_{D,\epsilon}, L_{V,\epsilon} > 0$, and $\left\|\frac{\partial u^{\varepsilon}}{\partial x}\right\|_\infty=M_1$ and $\left\|\frac{\partial^2 u^{\varepsilon}}{\partial x^2}\right\|_\infty= M_2^{\varepsilon} $. Then 
\begin{align*}
|F_\varphi^\epsilon| \le (L_{D,\epsilon} M_2^{\varepsilon} + L_{V,\epsilon} M_1)|\varphi^\epsilon| =: C_1(\epsilon)|\varphi^\epsilon|.
\end{align*}
Moreover, by virtue of $|u^\epsilon(t,x) - x| \le M(T-t)$ and $|\tilde{u}^\epsilon(t,x) - x| \le M(T-t)$, together with  $\partial_x u^\epsilon \in L^\infty(Q_T)$ and $\partial_x \tilde{u}^\epsilon \in L^\infty(Q_T)$, it holds that $\varphi^\epsilon\in L^\infty(Q_T)$ and $\partial_x \varphi^\epsilon \in L^\infty(Q_T)$. 

Recalling $\phi(x) = e^{-\sqrt{1+x^2}}$, we introduce a smooth cut-off function $\zeta_n(x)$ for positive integers $n$, such that $\zeta_n(x) = 1$ for $|x| \le n$, $\zeta_n(x) = 0$ for $|x| \ge 2n$, and $|\zeta_n'(x)| \le 2/n$. Multiplying both sides of  \eqref{eq:varphi:epsilon} by $\zeta_n \phi^2 \varphi^\epsilon$ and integrating over $[t, T] \times \mathbb{R}$, we obtain
\begin{align}
\int_t^T \int_{\mathbb{R}} \zeta_n \phi^2 \varphi^\epsilon \frac{\partial \varphi^\epsilon}{\partial s} \md x \md s + &\int_t^T \int_{\mathbb{R}} \zeta_n \phi^2 D^{\varepsilon} \varphi^\epsilon \frac{\partial^2 \varphi^\epsilon}{\partial x^2} \md x \md s - \int_t^T \int_{\mathbb{R}} \zeta_n \phi^2 V^{\varepsilon} \varphi^\epsilon \frac{\partial \varphi^\epsilon}{\partial x} \md x \md s\nonumber \\
&= \int_t^T \int_{\mathbb{R}} \zeta_n \phi^2 F_\varphi^\epsilon \varphi^\epsilon \md x \md s.\label{eq:zeta:n}
\end{align}
Employing a smooth approximation argument similar to that of Lemma \ref{lma:v:derivative}, we may apply the chain rule to obtain $\varphi^\epsilon \frac{\partial \varphi^\epsilon}{\partial s} = \frac{1}{2}\frac{\partial}{\partial s}(\varphi^\epsilon)^2$  a.e. Moreover, by Lemma \ref{lem:acl} (ii), for a.e. $x \in \mathbb{R}$, $\left(\varphi^\epsilon(\cdot, x)\right)^2$ is absolutely continuous on $[0,T]$. Therefore, using the terminal condition  $\varphi^\epsilon(T, \cdot) = 0$ and the dominated convergence theorem, we obtain
\begin{align*}
\int_t^T \int_{\mathbb{R}} \zeta_n \phi^2 \varphi^\epsilon \frac{\partial \varphi^\epsilon}{\partial s} \md x \md s &= \frac{1}{2} \int_{\mathbb{R}} \zeta_n \phi^2 \left( \int_t^T \frac{\partial}{\partial s}\big(\varphi^\epsilon(s,x)\big)^2 \md s \right) \md x \\
&= \frac{1}{2} \int_{\mathbb{R}} \zeta_n \phi^2 \left[ \varphi^\epsilon(T,x)^2 - \varphi^\epsilon(t,x)^2 \right] \md x \\
&= - \frac{1}{2} \int_{\mathbb{R}} \zeta_n \phi^2 \varphi^\epsilon(t,x)^2 \md x\to - \frac{1}{2} \int_{\mathbb{R}}  \phi^2 \varphi^\epsilon(t,x)^2 \md x\quad \text{as }n\to \infty.
\end{align*}
Furthermore, by Lemma \ref{lem:acl}(i),  both $\varphi^\epsilon(t, \cdot)$ and $\frac{\partial \varphi^\epsilon}{\partial x}(t, \cdot)$ are absolutely continuous on  $[-2n, 2n]$  for a.e. $t\in[0,T]$. Combining with the fact that $\zeta_n \in C_c^\infty(\mathbb{R})$ ensures the boundary terms vanish, i.e., $\left[ \zeta_n \phi^2 D^{\varepsilon} \varphi^\epsilon \frac{\partial \varphi^\epsilon}{\partial x} \right]_{-2n}^{2n} = 0$. Using integration by parts yields
\begin{align*}
\int_t^T \int_{\mathbb{R}} \zeta_n \phi^2 D^{\varepsilon} \varphi^\epsilon \frac{\partial^2 \varphi^\epsilon}{\partial x^2} \md x \md s &= - \int_t^T \int_{\mathbb{R}} \frac{\partial}{\partial x}(\zeta_n \phi^2 D^{\varepsilon} \varphi^\epsilon) \frac{\partial \varphi^\epsilon}{\partial x} \md x \md s \\
&= - \int_t^T \int_{\mathbb{R}} \zeta_n \frac{\partial}{\partial x}(\phi^2 D^{\varepsilon} \varphi^\epsilon) \frac{\partial \varphi^\epsilon}{\partial x} \md x \md s - \int_t^T \int_{\mathbb{R}} \zeta_n' \phi^2 D^{\varepsilon} \varphi^\epsilon \frac{\partial \varphi^\epsilon}{\partial x} \md x \md s,
\end{align*}
where the second equality follows from the Leibniz rule for weak derivatives (see, e.g., \citet*[Section 5.2, Theorem 1]{evans2022partial}).

Moreover, by the Leibniz rule, $\frac{\partial}{\partial x}(\phi^2 D^{\varepsilon} \varphi^\epsilon) = \phi^2 D^{\varepsilon} \frac{\partial \varphi^\epsilon}{\partial x} + \left(2\phi\phi' D^{\varepsilon} + \phi^2 \frac{\partial D^{\varepsilon}}{\partial x}\right) \varphi^\epsilon$. This term is bounded and absolutely integrable on $Q_T$ because $\phi$ and $\phi'$ decay exponentially, and $D^{\varepsilon}, \frac{\partial D^{\varepsilon}}{\partial x}, \varphi^\varepsilon, \frac{\partial \varphi^\varepsilon}{\partial x} \in L^\infty(Q_T)$. Consequently,
 \begin{align*}
     \int_t^T \int_{\mathbb{R}} \zeta_n \frac{\partial}{\partial x}(\phi^2 D^{\varepsilon} \varphi^\epsilon) \frac{\partial \varphi^\epsilon}{\partial x} \md x \md s\to \int_t^T \int_{\mathbb{R}}  \frac{\partial}{\partial x}(\phi^2 D^{\varepsilon} \varphi^\epsilon) \frac{\partial \varphi^\epsilon}{\partial x} \md x \md s \quad \text{as }n\to \infty.
 \end{align*}
Moreover,
\begin{align*}
    \left|\int_t^T \int_{\mathbb{R}} \zeta_n' \phi^2 D^{\varepsilon} \varphi^\epsilon \frac{\partial \varphi^\epsilon}{\partial x} \md x \md s\right|\leq \frac{2}{n} \int_t^T \int_{\mathbb{R}} \phi^2 \left|D^{\varepsilon} \varphi^\epsilon \frac{\partial \varphi^\epsilon}{\partial x}\right| \md x \md s\to 0\quad \text{as }n\to \infty.
\end{align*}
Leveraging the dominated convergence theorem and sending  $n \to \infty$ in \eqref{eq:zeta:n} yield
\begin{align}\label{eq:zeto:to:infty}
- \frac{1}{2} \int_{\mathbb{R}} \phi^2 \varphi^\epsilon(t,x)^2 \md x &- \int_t^T \int_{\mathbb{R}} \frac{\partial}{\partial x}(\phi^2 D^{\varepsilon} \varphi^\epsilon) \frac{\partial \varphi^\epsilon}{\partial x} \md x \md s - \int_t^T \int_{\mathbb{R}} \phi^2 V^{\varepsilon} \varphi^\epsilon \frac{\partial \varphi^\epsilon}{\partial x} \md x \md s = \int_t^T \int_{\mathbb{R}} \phi^2 F_\varphi^\epsilon \varphi^\epsilon \md x \md s.
\end{align}
Substituting $\frac{\partial}{\partial x}(\phi^2 D^{\varepsilon} \varphi^\epsilon) = \phi^2 D^{\varepsilon} \frac{\partial \varphi^\epsilon}{\partial x} + \left(2\phi\phi' D^{\varepsilon} + \phi^2 \frac{\partial D^{\varepsilon}}{\partial x}\right) \varphi^\epsilon$   into \eqref{eq:zeto:to:infty} and using $D^\epsilon\geq\kappa$, we get that
\begin{align*}
&\frac{1}{2} \int_{\mathbb{R}} \phi^2 \varphi^\epsilon(t,x)^2 \md x +\kappa \int_t^T \int_{\mathbb{R}} \phi^2 \left|\frac{\partial \varphi^\epsilon}{\partial x}\right|^2 \md x \md s \\
\le & \ - \int_t^T \int_{\mathbb{R}} \left(2\frac{\phi'}{\phi} D^{\varepsilon} + \frac{\partial D^{\varepsilon}}{\partial x} + V^{\varepsilon} \right) \phi^2 \varphi^\epsilon \frac{\partial \varphi^\epsilon}{\partial x} \md x \md s  - \int_t^T \int_{\mathbb{R}} \phi^2 F_\varphi^\epsilon \varphi^\epsilon \md x \md s\\
 \le & \ C_2(\epsilon) \int_t^T \int_{\mathbb{R}} \phi^2 |\varphi^{\epsilon}| \left|\frac{\partial \varphi^{\epsilon}}{\partial x}\right| \, \md x \, \md s + C_1(\epsilon) \int_t^T \int_{\mathbb{R}} \phi^2 (\varphi^\epsilon)^2 \, \md x \, \md s  \\
\le & \ \frac{\rho}{2} \int_t^T \int_{\mathbb{R}} \phi^2 \left|\frac{\partial \varphi^\epsilon}{\partial x}\right|^2 \, \md x \, \md s + \left( \frac{(C_2(\epsilon))^2}{2\rho} + C_1(\epsilon) \right) \int_t^T \int_{\mathbb{R}} \phi^2 (\varphi^{\epsilon})^2\, \md x \, \md s,
\end{align*}
where the second inequality follows from the bounds
 $\left|\frac{\phi'}{\phi}\right| \le 1$ and $D^{\varepsilon}, \partial_x D^{\varepsilon}, V^{\varepsilon} \in L^\infty(Q_T)$, which imply that $2\frac{\phi'}{\phi} D^{\varepsilon} + \frac{\partial D^{\varepsilon}}{\partial x} + V^{\varepsilon}$ is bounded by some constant $C_2(\epsilon) > 0$, and  the estimate $|F_\varphi^\epsilon| \le C_1(\epsilon)|\varphi^\epsilon|$. The last inequality follows from Young's inequality.

Let $C_3(\epsilon) := \frac{(C_2(\epsilon))^2}{\rho} + 2C_1(\epsilon)$. Then
\begin{align*}
\int_{\mathbb{R}}\phi^2 \varphi^\epsilon(t, x)^2 \md x \le C_3(\epsilon) \int_t^T \left( \int_{\mathbb{R}}\phi^2 \varphi^\epsilon(s, x)^2 \md x \right) \md s.
\end{align*}
Using the backward Gronwall inequality, we have
\begin{align*}
\int_t^T \int_{\mathbb{R}}\phi^2 \varphi^\varepsilon(s, x)^2 \md x \md s = 0 \quad \text{for all } t \in [0, T].
\end{align*}
As $\phi > 0$, it follows that $\varphi^\epsilon \equiv 0$, and hence $\tilde{u}^\epsilon = u^{\varepsilon}$.
\end{proof}

\section{A Technical Lemma for Proposition \ref{lem:limit-sde}}\label{sec:AppC}
\begin{lemma}\label{zero-measure}
Recall the processes $\tilde{x}_{a,k}$ and $\tilde{x}_{a,*}$, $a\in\{1,2\}$, constructed in the proof of Proposition \ref{lem:limit-sde}. Let $\tilde {\mathcal{N}}_* \subset [0,T] \times \mathbb{R}$ be a set of Lebesgue measure zero. Then, for each $a\in\{1,2\}$, the set $\{(s,\tilde{\omega}) \in [0,T]\times \tilde{\Omega} : (s, \tilde{x}_{a,*}(s, \tilde{\omega})) \in \tilde {\mathcal{N}}_*\}$ has zero $\md t \otimes \md \tilde {\mathbb{P}}$ measure.
\end{lemma}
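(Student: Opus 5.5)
The plan is to obtain a Krylov-type occupation estimate for the approximating processes $\tilde x_{a,k}$ that is uniform in $k$, and then pass it to the limit $\tilde x_{a,*}$ by weak (here almost sure) convergence. Each $\tilde x_{a,k}$ is a continuous martingale,
\begin{align*}
\tilde x_{a,k}(t)=\tilde y_{a,k}+\int_0^t\hat\sigma_{a,k}(u,\tilde x_{a,k}(u))^\top\md\tilde W_k(u),
\end{align*}
with diffusion coefficient satisfying $c_3^2\le|\hat\sigma_{a,k}|^2\le C_2^2$, uniformly in $k$. Krylov's estimate (\citet*{krylov1980controlled}[Chapter 2, Section 3, Theorem 4]) applies to a process with zero drift, bounded and uniformly nondegenerate diffusion, and no continuity assumption on the coefficient. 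It gives a constant $C$, depending only on $T,c_3,C_2$ and in particular independent of $k$ and of the initial law, such that
\begin{align*}
\tilde{\mathbb E}\Big[\int_0^T f(s,\tilde x_{a,k}(s))\md s\Big]\le C\|f\|_{L^2([0,T]\times\mathbb R)}
\end{align*}
for every nonnegative Borel $f$. The estimate holds conditionally on $\tilde y_{a,k}$, exactly as in \eqref{eq:nutset}, and integrating over the initial condition preserves it.

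First I will pass this bound to the limit for continuous test functions. Take $f\in C_c([0,T]\times\mathbb R)$ with $f\ge0$. By \eqref{eq:converge:weak}, $\tilde x_{a,k}\to\tilde x_{a,*}$ uniformly $\tilde{\mathbb P}$-a.s. Since $f$ is bounded and continuous, dominated convergence yields
\begin{align*}
\tilde{\mathbb E}\Big[\int_0^T f(s,\tilde x_{a,*}(s))\md s\Big]\le C\|f\|_{L^2}.
\end{align*}

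Next I extend this to indicators of Lebesgue null sets. Let $\tilde{\mathcal N}_*$ be a null set, and choose an open set $G\supset\tilde{\mathcal N}_*$ (open relative to $[0,T]\times\mathbb R$) with arbitrarily small measure $|G|<\eta$. For a bounded open set $G$, the indicator $\mathbbm 1_G$ is the increasing limit of nonnegative functions $f_j\in C_c$ with $f_j\le\mathbbm 1_G$, so monotone convergence gives
\begin{align*}
\tilde{\mathbb E}\Big[\int_0^T\mathbbm 1_G(s,\tilde x_{a,*}(s))\md s\Big]\le C|G|^{1/2}\le C\eta^{1/2}.
\end{align*}
If $\tilde{\mathcal N}_*$ is unbounded, I will intersect it with $[0,T]\times[-N,N]$ and let $N\to\infty$ by monotone convergence. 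Letting $\eta\to0$ shows that $\tilde{\mathbb E}\int_0^T\mathbbm 1_{\tilde{\mathcal N}_*}(s,\tilde x_{a,*}(s))\md s=0$, and Fubini's theorem converts this into the claimed $\md t\otimes\md\tilde{\mathbb P}$-null statement. Joint measurability of $(s,\tilde\omega)\mapsto\tilde x_{a,*}(s,\tilde\omega)$ holds because the paths are continuous.

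The main obstacle is making sure the Krylov constant is genuinely uniform in $k$. The coefficients $\hat\sigma_{a,k}$ change with $k$, and the filtration $\tilde{\mathbb F}^k$ is built from $\tilde y_{a,k}$ and $\tilde W_k$. I will handle this by invoking the version of Krylov's estimate for arbitrary Itô processes $\md x=\sigma\,\md W$ with progressively measurable $\sigma$ obeying $c_3^2\le|\sigma|^2\le C_2^2$. In one dimension that estimate depends only on these bounds and on $T$, and it is valid under $\tilde{\mathbb F}^k$ because $\tilde W_k$ is an $\tilde{\mathbb F}^k$-Brownian motion. The passage to the limit uses only continuity of the test function, so it is not affected by the discontinuity of $\hat\sigma_*$.
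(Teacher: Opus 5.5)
Your proposal is correct and follows essentially the same route as the paper: a Krylov occupation estimate for $\tilde x_{a,k}$ with a constant depending only on $c_3$, $C_2$ and $T$, transferred to $\tilde x_{a,*}$ through the almost sure convergence from the Skorokhod representation, and then applied to open sets $U\supset\tilde{\mathcal N}_*$ of arbitrarily small measure. The only cosmetic difference is in the limit step. The paper passes to the limit directly on $\mathbbm{1}_{U_n}$, using its lower semicontinuity and Fatou's lemma. You instead go through continuous compactly supported test functions with dominated convergence, followed by monotone approximation of $\mathbbm{1}_G$, which amounts to the same argument.
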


\begin{proof}
Without loss of generality, let us fix $a=1$.
For any integer $n \ge 1$, let $U_n \subset [0,T] \times \mathbb{R}$ be an open set such that $\tilde {\mathcal{N}}_* \subset U_n$ and $\text{Leb}(U_n) < 1/n$.  
Similar to the proof of Proposition \ref{FBSDE-existence}, we  have
\begin{align*}
    \tilde{\mathbb{E}}\left[\int_0^T \mathbbm{1}_{U_n}(t, \tilde{x}_{1,k}(t)) \md t\right] \leq   C\|\mathbbm{1}_{U_n}\|_{L^2([0,T] \times \mathbb{R})} < C \sqrt{1/n},
\end{align*}
where the constant $C$ depends only on $c_3$, $C_2$ and $T$.

The indicator function $\mathbbm{1}_{U_n}(t, x)$ is lower semi-continuous because $U_n$ is open. Moreover, we have $\liminf\limits_{k \to \infty} \mathbbm{1}_{U_n}(t, \tilde{x}_{1,k}(t)) \ge \mathbbm{1}_{U_n}(t, \tilde{x}_{1,*}(t))$ in view of $\tilde{x}_{1,k} \to \tilde{x}_{1,*}$ a.s. Applying Fatou's lemma yields
\begin{align*}
\tilde{\mathbb{E}}\left[\int_0^T \mathbbm{1}_{U_n}(t, \tilde{x}_{1,*}(t)) \md t\right] \le \liminf_{k\to\infty} \tilde{\mathbb{E}}\left[\int_0^T \mathbbm{1}_{U_n}(t, \tilde{x}_{1,k}(t)) \md t\right] \le C \sqrt{1/n}.
\end{align*}
As $\tilde {\mathcal{N}}_* \subset U_n$, it holds that $\mathbbm{1}_{\tilde {\mathcal{N}}_*} \le \mathbbm{1}_{U_n}$. Letting $n \to \infty$, we get
$\tilde{\mathbb{E}}\left[\int_0^T \mathbbm{1}_{\tilde {\mathcal{N}}_*}(t, \tilde{x}_{1,*}(t)) \md t\right] = 0$.
The conclusion then follows.
\end{proof}
\end{appendices}

\vspace{0.3in}\noindent
\textbf{Acknowledgements}: Zongxia  Liang is supported by the National Natural Science Foundation of China under grant no. 12271290. Xiang Yu is supported by the Hong Kong RGC General Research Fund (GRF) under grant no. 15214125.

{\small
\bibliographystyle{abbrvnat}
\bibliography{sample}
}

\end{document}